\def\BibTeX{{\rm B\kern-.05em{\sc i\kern-.025em b}\kern-.08em
    T\kern-.1667em\lower.7ex\hbox{E}\kern-.125emX}}
\newtheorem{corollary}{Corollary}
\newtheorem{theorem}{Theorem}
\newtheorem{definition}{Definition}
\newtheorem{lemma}{Lemma}
\newtheorem{remark}{Remark}
\author{Sina Lashgari and A. Salman Avestimehr\thanks{S. Lashgari and A. S. Avestimehr are with the School of Electrical and Computer Engineering, Cornell University, Ithaca, NY (email: sl2232@cornell.edu, avestimehr@ece.cornell.edu).
The result was presented in part at IEEE ISIT 2012 \cite{khodam}.
The research of A. S. Avestimehr and S. Lashgari was supported in part by Intel, Cisco, and Verizon (via the Video Aware Wireless Networks (VAWN) Program), and the U.S. Air Force award FA9550-11-1-0064.
}
}
\begin{document}
\title{\Large{Timely Throughput of Heterogeneous Wireless Networks: Fundamental Limits and Algorithms}}

\maketitle
\vspace{-10mm}
\begin{abstract}
The proliferation of different wireless access technologies, together with the growing number of multi-radio wireless devices suggest that the opportunistic utilization of multiple connections at the users can be an effective solution to the phenomenal growth of traffic demand in wireless networks.
In this paper we consider the downlink of a wireless network with $N$ Access Points  ({\sf AP}'s) and $M$ clients, where each client is connected to several out-of-band {\sf AP}'s, and requests delay-sensitive traffic (e.g., real-time video). We adopt the framework of Hou, Borkar, and Kumar, and study the maximum total timely throughput of the network, denoted by $C_{\text{{\sf T}}^{\text{3}}}$, which is the maximum average number of packets delivered successfully before their deadline.
Solving this problem is challenging since even the number of different ways of assigning packets to the {\sf AP}'s is $N^M$. 
We overcome the challenge by proposing a deterministic relaxation of the problem, which converts the problem to a network with deterministic delays in each link.
We show that the additive gap between the capacity of the relaxed problem, denoted by $C_{\text{det}}$, and $C_{\text{{\sf T}}^{\text{3}}}$ is bounded by $2\sqrt{N(C_{\text{det}}+\frac{N}{4})}$,
which is asymptotically negligible compared to $C_{\text{det}}$, when the network is operating at high-throughput regime. 
 In addition, our numerical results show that the actual gap between $C_{\text{{\sf T}}^{\text{3}}}$ and $C_{\text{det}}$ is in most cases much less than the worst-case gap proven analytically.
Moreover, using LP rounding methods we prove that the relaxed problem can be approximated within additive gap of $N$. 
We extend the analytical results to the case of time-varying channel states, real-time traffic, prioritized traffic, and optimal online policies. Finally, we generalize the model for deterministic relaxation to consider fading, rate adaptation, and multiple simultaneous transmissions.
\end{abstract}
\vspace{-6mm}
\begin{IEEEkeywords}
Heterogeneous wireless networks, timely throughput, scheduling, real-time traffic, network capacity.
\end{IEEEkeywords}

\section{\large Introduction} \label{introduction}

Consumer demand for data services over wireless networks has increased dramatically in recent years, fueled both by the success of online video streaming and popularity of video-friendly mobile devices like smartphones and tablets. This confluence of trends is expected to continue and lead to several fold increase in traffic over wireless networks by 2015, the majority of which is expected to be video~\cite{Cisco}. As a result, one of the most pressing challenges in wireless networks is to find effective ways to provide high volume of top quality video traffic to smartphone users.

With the evolution of wireless networks towards heterogeneous architectures, including wireless relays and femtocells, and growing number of smart devices that can connect to several wireless technologies (e.g. 3G and WiFi), it is promising that the opportunistic utilization of heterogeneous networks (where available)  can  be one of the key solutions to help cope with the phenomenal growth of video demand over wireless networks. This motivates two fundamental questions: first, how much is the ultimate capacity gain from opportunistic utilization of network heterogeneity for delay-sensitive traffic? and second, what are the optimal policies that exploit network heterogeneity for delivery of delay-sensitive traffic?

In this paper, we study these questions in the downlink of a heterogeneous wireless network with $N$ Access Points  ({\sf AP}'s) and $M$ clients. We assume that each {\sf AP} is using a distinct frequency band, 
%Therefore, the channels from {\sf AP}'s to the clients are orthogonal. 
and all {\sf AP}'s are connected to each other through a Backhaul Network (see Fig. \ref{network}), with error free links, so that we can focus on the wireless aspect of the problem. We model the wireless channels as  packet erasure channels.

We focus on real-time video streaming applications, such as video-on-demand, video conferencing, and IPTV, 
%(TV distributed over an IP network),
 that require tight guarantees on timely delivery of the packets.
%, where packets have constant inter-arrival time. 
In particular, the packets for such applications have strict-per-packet deadline; and if a packet is not delivered successfully by its deadline, it will not be useful anymore. As a result, we focus on the notion of \emph{timely throughput}, proposed in \cite{QoS}, which measures the long-term
average number of ``successful deliveries'' (i.e., the packets  delivered before the deadline) for each client as an analytical metric for evaluating both throughput and QoS for delay-constrained flows.

In this framework, time is slotted and time-slots are grouped to form intervals of length $\tau$. For each interval every client has packets to receive and the {\sf AP}'s
have to decide on a scheduling policy to deliver the packets. If a packet is  not delivered by the
end of that interval, it gets dropped by the {\sf AP}'s. Total timely throughput, $\text{{\sf T}}^{\text{3}}$, is defined as the long-term
average number of successful deliveries in the network. Our objective is then to find the maximum achievable $\text{{\sf T}}^{\text{3}}$, which we denote by $C_{\text{{\sf T}}^{\text{3}}}$ , over all possible scheduling policies.

The challenge is that for each interval, even the number of different ways of assigning packets to  {\sf AP}'s is $N^M$, which grows exponentially in the number of clients ($M$). 
For  $N=1$, \cite{QoS} provides an efficient characterization of the timely throughput region.
In fact, timely throughput region for $N=1$ can be shown to be a scaled version of a polymatroid \cite{Staticscheduling}. However, once we move beyond $N=1$, the timely throughput  region loses its polymatroidal structure which  makes the problem much more challenging.
%Therefore, we face a complicated combinatorial optimization problem to find $C_{\text{{\sf T}}^{\text{3}}}$.
To overcome the challenge, we propose a \emph{deterministic relaxation} of the problem, which is based on converting the problem to a network with deterministic delays for each link. As we will show in Section \ref{results}, the relaxed problem can be viewed as an assignment problem in which each {\sf AP} turns into a bin with certain capacity and each packet turns into an object which has different sizes at different bins. The relaxed problem is then to maximize the total number of objects that can be packed in the bins, denoted by $C_{\text{det}}$.

Our main contribution in this paper is two-fold. First, we prove that the gap between the solutions to the original problem ($C_{\text{{\sf T}}^{\text{3}}}$) and its relaxed version ($C_{\text{det}}$) is at most $2\sqrt {N(C_{\text{det}}+\frac{N}{4})}$. Since $N$ is typically very small (in most cases between 2-4), the above result indicates that $C_{\text{det}}$ is asymptotically equal to $C_{\text{{\sf T}}^{\text{3}}}$ as $C_{\text{{\sf T}}^{\text{3}}}\to\infty$. 
%This suggests that the relaxed problem captures the essence of the original problem well, and justifies the deterministic relaxation.
Furthermore, our numerical results demonstrate that the gap
% between the proposed relaxed problem and the original problem
 is in most cases much smaller than the worst-case gap that we prove analytically.
Therefore, instead of solving our main maximization problem we can solve its relaxed version, and still get a value which is very close to the optimum. 
Second, we prove that the relaxed problem can be approximated in polynomial-time (with additive gap of N) using a simple LP rounding method. This approximation is appealing as $N$ is usually limited and negligible compared to $C_{\text{det}}$.
As a result, the solution to the relaxed problem provides a scheduling policy that provably achieves a $\text{{\sf T}}^{\text{3}}$ that  is within additive gap  $N+2\sqrt {N(C_{\text{{\sf T}}^{\text{3}}}-\frac{3N}{4})}$
of $C_{\text{{\sf T}}^{\text{3}}}$ for $C_{\text{{\sf T}}^{\text{3}}}>\frac{7N}{4}$.

We also consider several extensions of the problem, including extension to time-varying channels and real-time traffic, where at the beginning of each interval clients have request for variable number of packets. We show that the aforementioned results hold in these two extensions, too. 
Moreover, we provide similar results for the case where different flows  have different priorities (different weights).
% and for bounded ratio of the weights,  $\omega_{max}$, we show that the gap between the weighted total timely throughput, %$C_{w\text{-T}^{\text{3}}}$, and the solution to the relaxed problem, $C_{w\text{-det}}$, is at most $2\omega_{max}\sqrt {N(C_{w\text{-det}}+\frac{N}{4})}$.
%we provide similar results.
In addition, we extend the model to allow for online scheduling policies, where {\sf AP}'s are coordinated, 
%and  packets are not necessarily split among the {\sf AP}'s,
 and a packet might be transmitted by arbitrary number of {\sf AP}'s.
Finally, we consider an extension to account for fading, multiple simultaneous transmissions by {\sf AP}'s and multiple simultaneous receptions by clients, and rate adaptation.

{\bf Related Work:} Although there are classical results ~\cite{Tassiulas},~\cite{Neely} on scheduling clients over time-varying channels and characterizing the average delay of service,  
in recent years there has been increasing research on serving delay-sensitive traffic over wireless networks.
This increase is due to the phenomenal increase in the volume of delay-sensitive traffic, such as video traffic.
In \cite{Bambos} packets with weights and strict deadlines have been considered; and if a packet is not delivered by its deadline, it causes a certain distortion equal to its weight. They have studied the problem of minimizing the total distortion, and have characterized the optimal control.
\cite{Puri}  considered a packet switched network where clients can get different types of service based on the amount they are willing to pay.
The problem of optimizing time averages in systems with i.i.d behavior over renewal frames  has been considered in \cite{Renewal}; and an algorithm which minimizes drift-plus-penalty ratio is developed.
Moreover, ~\cite{Srikant} has focused on minimizing the total number of expired packets, and has provided analytic results on scheduling.

However, the most related work to this paper is the work of  Hou et. al in ~\cite{QoS} in 2009, in which they have proposed a framework for jointly addressing delay, delivery ratio, and channel reliability.
For a network with one {\sf AP} and $N$ clients, the timely throughput region for the set of N clients has been fully characterized in ~\cite{QoS}; and the work has been extended to variable-bit-rate applications in ~\cite{VBR}, and time-varying channels and rate adaptation in ~\cite{Realtime}.
Although in ~\cite{QoS}-~\cite{Realtime} they provide tractable analytical results and low-complexity scheduling policies,
 the analyses are done for  only one {\sf AP}.
This paper aims to extend the results to the case of general number of {\sf AP}'s, where there is an additional challenge of how to split the packets among different {\sf AP}'s.

%The rest of the paper is organized as follows.  
%Section \ref{model} defines the notion of timely throughput, describes our network model, and explains the problem formulation. 
%Section \ref{results} describes and formulates the relaxation of the problem, and states the main results.
%Section \ref{theorem1} and Section \ref{theorem2} provide the proof of Theorem \ref{maintheorem} and Theorem %\ref{algorithm}, respectively.
%Extensions to the main problem formulation are covered in Section \ref{extensions}.
%Section \ref{numerical} contains numerical analysis of our deterministic relaxation scheme.
%Finally, Section \ref{conclusion} concludes the paper.

\section{\large Network Model and Problem Formulation}\label{model}
In this section we describe our network model and precisely describe the notion of timely throughput introduced in  \cite{QoS}. 
Finally, we formulate our problem.

\subsection{Network Model and Notion of Timely Throughput}
We consider the downlink of a network with $M$ wireless clients, denoted by $\text{ {\sf Rx}}_1,\text{ {\sf Rx}}_2,\ldots $, $\text{ {\sf Rx}}_M$, that have packet requests, and $N$ Access Points $\text {{\sf AP}}_1,\text {{\sf AP}}_2,\ldots, \text {{\sf AP}}_N$. These {\sf AP}'s have error-free links to the Backhaul Network (see Fig.1). 
In addition, time is slotted and transmissions occur during time-slots. Furthermore, the time-slots are grouped into intervals of length $\tau$, where the first interval contains the first $\tau$ time-slots, the second interval contains the second $\tau$ time-slots, and so on. Moreover, each {\sf AP} may make one packet transmission in each time-slot.

Each {\sf AP} is connected via unreliable wireless links to a subset (possibly all) of the wireless clients. 
These unreliable links are modeled as packet erasure channels that, for now, are assumed to be i.i.d over time, and have fixed success probabilities. In addition, each channel is independent of other channels in the network.
(In Section \ref{extensions} these assumptions will be relaxed to consider more general scenarios). 
The success probability of the channel between $\text {{\sf AP}}_i$ and $\text{ {\sf Rx}}_j$ is denoted by $p_{ij}$, which is the probability of successful delivery of the packet of $\text{ {\sf Rx}}_j$ when transmitted by $\text {{\sf AP}}_i$ during a time-slot. 
If there is no link between an {\sf AP}  and a client, we consider the success probability of the corresponding channel to be $0$. 
Moreover, we assume that the channels do not have interference with each other.

\begin{figure}[h!]
\centering
\subfigure[]{\label{network}\includegraphics[scale=.63,trim = 20mm 20mm 10mm 10mm]{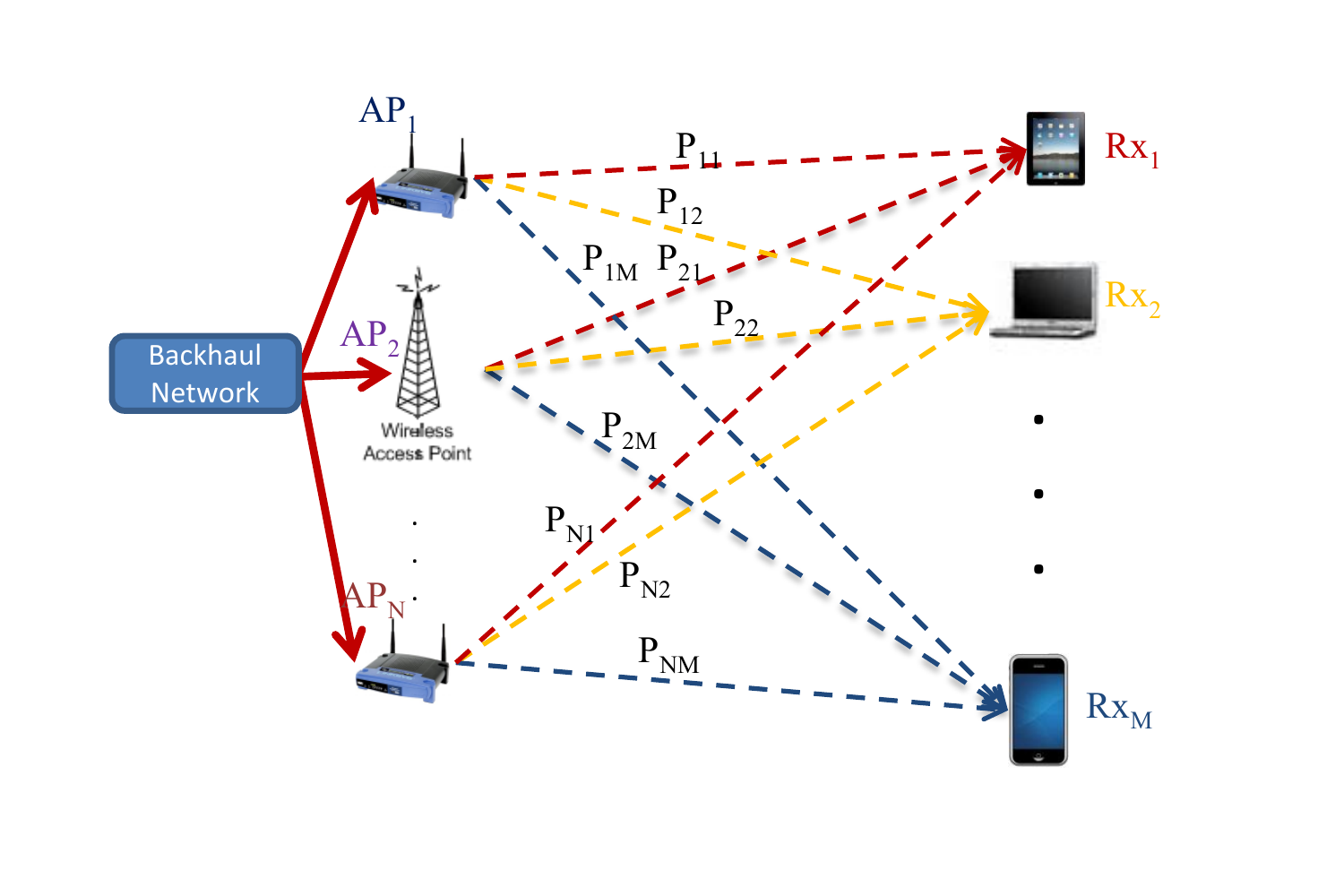}}
\subfigure[]{\label{time}\includegraphics[scale=.5,trim = 10mm 0mm 20mm 30mm]{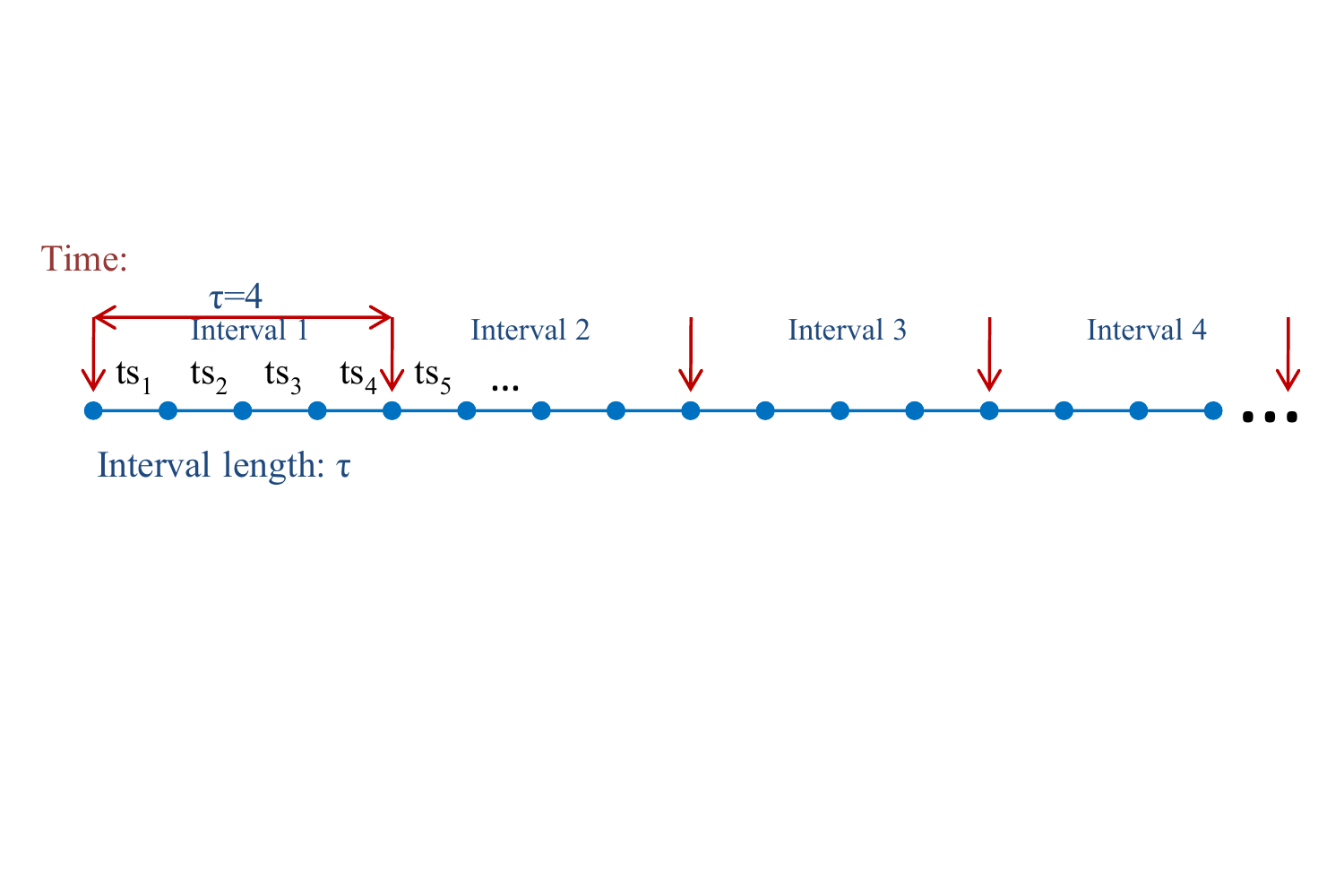}}
\caption{ Illustration of our network model. Network configuration consisting of $N$ Access points ({\sf AP}'s), $M$ wireless clients, packet erasure channels from {\sf AP}'s to the clients, and the Backhaul network is illustrated in (a).
Our time model, in which time is slotted and time-slots are grouped to form intervals of length $\tau$, is shown in (b). In this figure $\tau=4$.}
\end{figure}

For now we assume that at the beginning of each interval each client has request for a new packet. 
Right before the start of an interval, each requested packet for that interval is assigned to one of the {\sf AP}'s to be transmitted to its corresponding client.
Furthermore, during each time-slot of an interval, each {\sf AP} picks one of the packets assigned to it to transmit. 
At the end of that time-slot the {\sf AP} will know if the packet has been successfully delivered or not.
If the packet is successfully delivered, the {\sf AP} removes that packet from its buffer and does not attempt to transmit it any more. 
The packets that are not delivered by the end of the interval are dropped from the {\sf AP}'s.

\begin{definition}
The decisions on how to assign the requested packets for an interval to the {\sf AP}'s before the start of that interval, and which packet to transmit on a time-slot by each {\sf AP} are specified by a \emph{scheduling policy}. A scheduling policy $\eta$ makes the decisions causally based on the entire past history of events up to the point of decision-making.
We denote the set of all possible scheduling policies by $\mathcal S$.
\end{definition}

\begin{definition}
A \emph{static scheduling policy}, denoted by $\eta_{\text{static}}$, is a scheduling policy in which each {\sf AP} becomes responsible for serving packets of a fixed subset of clients for all intervals; and the packets of clients assigned to an {\sf AP} are served according to a fixed order. In particular, 
a static scheduling policy $\eta_{\text{static}}$ is fully specified by a pair $(\vec\Pi,\Gamma)$, in which $\vec\Pi=[\mathcal I_1,\mathcal I_2,\ldots ,\mathcal I_N]$, where $\mathcal I_i$'s partition the set $\{1,2,\ldots,M\}$, indicating how the packet of clients are assigned to {\sf AP}'s.
Furthermore, $\Gamma$ specifies the ordering for the packets assigned to each {\sf AP}.
When $\eta_{\text{static}}$ is implemented, each {\sf AP} is responsible for serving packet of the clients assigned to it by $\vec\Pi$; and each {\sf AP} persistently transmits a packet until it is delivered successfully, before moving on to the packet of the client with the immediate lower rank in the ordering specified by $\Gamma$.
\end{definition}

\begin{definition}
A static scheduling policy is called \emph{greedy}, and denoted by $\eta_{\text{g-static}}$, if the order of clients specified by $\Gamma$ is according to the success probabilities of channels from {\sf AP} to those clients, in decreasing order.    
%Therefore, when a greedy scheduling policy is applied and after the clients are assigned to the AP's to get service, during each interval each AP first serves the client whose link has the highest success probability.
  \end{definition}

Assume that a particular scheduling policy $\eta$ is chosen. For any interval $r$ ($r\in \mathbb N$), let $\vec N(r,\eta)\triangleq [N_1(r,\eta), N_2(r,\eta),\ldots, N_M(r,\eta)]$ denote the vector of $M$ binary elements whose $j^{th}$ element $N_j(r,\eta)$ is $1$ if client $\text{ {\sf Rx}}_j$ has successfully received a packet during the $r^{th}$ interval, and $0$ otherwise. 
When using scheduling policy $\eta$, the total timely throughput, denoted by $\text{{\sf T}}^{\text{3}}(\eta)$, is defined as 
\begin{equation}
\text{{\sf T}}^{\text{3}}(\eta)\triangleq \limsup_{r\to\infty} \frac{\sum_{k=1}^{r}\sum_{j=1}^{M}  N_j(k,\eta)}{r}.\label{liminf}
\end{equation}
In simpler words, $\text{{\sf T}}^{\text{3}}(\eta)$ is the long-term average number of successful deliveries in the entire network. 
Similarly,  the timely throughput of $\text{ {\sf Rx}}_j$, denoted by $R_j(\eta)$, is defined as 
\begin{equation}
R_j(\eta)\triangleq \limsup_{r\to\infty} \frac{\sum_{k=1}^{r} N_j(k,\eta)}{r},\quad j=1,2,\ldots ,M.\label{liminfrate}
\end{equation}
Therefore, $R_j(\eta)$ is the long-term average number of successful deliveries for the $j^{th}$ client. 
Further, we denote the vector of all $R_j(\eta)$'s by $\vec R(\eta)$, where we have $\vec R(\eta)\triangleq [R_1(\eta),R_2(\eta),\ldots ,\\R_M(\eta)]$.
Therefore, the capacity region for timely throughput of $M$ clients in the network is defined as 
%\begin{equation}
$\mathcal C\triangleq \{\vec R(\eta): \eta\in\mathcal S\}.$
%\end{equation}

\subsection{Main Problem}
Our objective is to find the maximum achievable total timely throughput, denoted by $C_{\text{{\sf T}}^{\text{3}}}$. 
More precisely, our optimization problem is 
%\\Main Problem (MP):
\begin{equation}\label{main}
\text{\emph{Main Problem (MP)}:}\qquad C_{\text{{\sf T}}^{\text{3}}}\triangleq\sup_{\eta\in\mathcal S}\text{{\sf T}}^{\text{3}}(\eta).
\end{equation}
% where $||\vec R(\eta)||_1=\sum_{j=1}^{M}R_j(\eta)$.
Later in Section \ref{weightedt3} we will consider the problem of finding the maximum weighted total timely throughput $\sum_{j=1}^{M}\omega_j R_j(\eta)$ and its corresponding policy $\eta$; but for now we focus on the problem in the case that $\omega_1=\omega_2=\ldots =\omega_M=1$.

\subsection{Remarks on the Main Problem}

As we state later  in Lemma \ref{lemorder} in Section \ref{theorem1}, $C_{\text{{\sf T}}^{\text{3}}}$ can be achieved using a greedy static scheduling policy.
Therefore,  the optimization in (\ref{main}) can be limited to  finding the partition $\vec\Pi$ such that the corresponding $\eta_{\text{g-static}}$ maximizes $\text{{\sf T}}^{\text{3}}(\eta_{\text{g-static}})$.
However, this is still quite challenging.
In fact, the number of possible greedy static scheduling policies to consider is $N^M$, which grows exponentially in $M$.

In \cite{QoS} Hou et al. have found the timely throughput region for $N=1$, and have shown that it is a scaled version of a polymatroid \cite{Staticscheduling}.
However, when going from one {\sf AP} to several {\sf AP}'s the problem changes quintessentially: the timely throughput region loses its polymatroidal structure, which makes the problem much more challenging\footnote[1]{Example: Let $N=M=2, \tau=1$, and $p_{11}=p_{12}=p_{21}=p_{22}=1/2$.
In this case, the region is the convex hull of three points $(3/4,0),(1/2,1/2),(0,3/4)$. Therefore, no scaled version of the capacity region along its axes can be a polymatroid.}. In this case the timely throughput  region is a general polytope with (possibly) exponential number of corner points (corresponding to exponential number of ways of partitioning the clients between the {\sf AP}'s). 
%To overcome the aforementioned challenge, we propose a deterministic relaxation of the problem, which converts the problem to a network with deterministic delays in each link. 
%The solution to the relaxed problem is asymptotically the same as the solution to our original problem, which justifies the deterministic relaxation. 
% This allows the reduction of the problem to a combinatorial optimization problem  which can be approximated efficiently.
%In the next section we will explain our deterministic relaxation approach, and then present the main result.

\section{\large Deterministic Relaxation and Statement of Main Results}\label{results}
In this section we first explain the intuition behind proposing our relaxation scheme and formulate the relaxed problem.
Then, we state the main results.
\subsection{Deterministic Relaxation} 
In the system model  we assumed channel success probability $p_{ij}$ between $\text {{\sf AP}}_i$ and  $\text{ {\sf Rx}}_j$, $i=1,2,\ldots, N$, $j=1,2,\ldots, M$.
For now, suppose that $\tau=\infty$, $\text {{\sf AP}}_i$ has only one packet, and wants to transmit that packet to client $j$. 
%Furthermore, suppose that there is no deadline for transmitting the packet. 
Thus, $\text {{\sf AP}}_i$ persistently sends that packet to client $j$ until the packet goes through. The number of time-slots expended for this packet to be delivered is a Geometric random variable $G_{ij}$ where $\Pr(G_{ij}=k)=p_{ij}(1-p_{ij})^{k-1},\quad k\in \mathbb N$.
We know that $E[G_{ij}]=\frac{1}{p_{ij}}$, and without any deadline, it takes $\frac{1}{p_{ij}}$ time-slots on average for packet of $\text{ {\sf Rx}}_j$ to be delivered when transmitted by $\text {{\sf AP}}_i$.

Therefore, a memory-less erasure channel with success probability $p_{ij}$ can be viewed as a pipe with variable delay which takes a packet from  $\text {{\sf AP}}_i$ and gives it to $\text{ {\sf Rx}}_j$ according to that variable delay. The probability distribution of the delay is Geometric with parameter $p_{ij}$.

To simplify the problem, we proposed to relax each channel into a bit pipe with deterministic delay equal to the inverse of its success probability. 
Therefore, for any packet of $\text{ {\sf Rx}}_j$, when assigned to  $\text {{\sf AP}}_i$ for transmission, we associate a fixed size of $\frac{1}{p_{ij}}$ to that packet. 
This means that each packet assigned to an {\sf AP} can be viewed as an object with a size, where the size varies from one {\sf AP} to another; because $\frac{1}{p_{ij}}$'s for different $i$'s are not necessarily the same. 
On the other hand, we know that each {\sf AP} has $\tau$ time-slots during each interval to send the packets that are assigned to it.
Therefore, we can view each {\sf AP} during each interval as a bin of capacity $\tau$. 
Therefore, our new problem is a packing problem; i.e., we want to see over all different assignments of objects to bins what the maximum number of objects  is that we can fit in those $N$ bins of capacity $\tau$.
We denote this maximum possible number of packed objects by $C_{\text{det}}$.
More precisely, if   we  define  $x_{ij}$  as  the  $0 - 1$   variable  which equals  $1$  if packet of client  $j$  is  assigned  to   $\text {{\sf AP}}_i$,  and  $0$  otherwise,  then  the relaxed problem  can be  formulated as following.
\begin{align}
\text{\emph{Relaxed Problem (RP):}}\qquad C_{\text{det}}\triangleq\max\quad &\sum_{i=1}^{N}\sum_{j=1}^{M}x_{ij}\label{relaxed}\\
s.t.\quad &\sum_{j=1}^{M}\frac{x_{ij}}{p_{ij}}\leq \tau\quad i=1,2,\ldots, N \label{magool1} \\
&\sum_{i=1}^{N}x_{ij}\leq 1\quad j=1,2,\ldots, M \label{magool2} \\
& x_{ij}\in\{0,1\}. \label{magool3}
\end{align}

\subsection{Main Results}
We now present the main results of the paper via two Theorems.
Theorem \ref{maintheorem} bounds  the gap between the solution to the main problem (\ref{main}) and its relaxation (\ref{relaxed}).
%, and shows that one does not lose much if they solve the relaxed problem instead of the main problem.
Furthermore, Theorem \ref{algorithm} provides a performance guarantee to the approximation algorithm for the relaxed problem.
%which rounds  the LP relaxation of the relaxed problem.
The proofs of the two Theorems are provided in Section \ref{theorem1} and Section \ref{theorem2}.
\begin{theorem}\label{maintheorem}
Let $C_{\text{{\sf T}}^{\text{3}}}$ denote the value of the solution to our main problem in (\ref{main}). Also, let $C_{\text{det}}$ denote the value of the solution to our relaxed problem in (\ref{relaxed}). We have
\begin{equation}\label{result}
C_{\text{det}}-2\sqrt {N(C_{\text{det}}+\frac{N}{4})}<C_{\text{{\sf T}}^{\text{3}}}<C_{\text{det}}+N.
\end{equation}
\end{theorem}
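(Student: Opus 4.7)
The plan is to prove the two inequalities in (\ref{result}) by separate arguments, relying on Theorem \ref{algorithm} for the upper bound and on a Chebyshev-plus-Cauchy–Schwarz argument for the lower bound.

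For the upper bound $C_{\text{{\sf T}}^{\text{3}}}<C_{\text{det}}+N$, I fix an arbitrary scheduling policy $\eta\in\mathcal S$ and, for each pair $(i,j)$, let $\bar x_{ij}$ denote the long-run fraction of intervals in which the packet of $\text{{\sf Rx}}_j$ is delivered by $\text{{\sf AP}}_i$ under $\eta$. Since at most one packet per client is delivered per interval, $\sum_i \bar x_{ij}\le 1$, and by definition $\text{{\sf T}}^{\text{3}}(\eta)=\sum_{i,j}\bar x_{ij}$. Because each link is a memoryless Bernoulli channel, a Wald-type optional-stopping argument on the sequence of transmission attempts shows that the expected number of slots $\text{{\sf AP}}_i$ devotes to $\text{{\sf Rx}}_j$ in a typical interval is at least $\bar x_{ij}/p_{ij}$, and the total of these across $j$ cannot exceed the interval length, so $\sum_j \bar x_{ij}/p_{ij}\le\tau$. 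Thus $\bar x$ is feasible for the LP relaxation of (\ref{relaxed})--(\ref{magool3}) obtained by replacing $x_{ij}\in\{0,1\}$ with $0\le x_{ij}\le 1$, and $\text{{\sf T}}^{\text{3}}(\eta)\le C_{\text{LP}}$. Theorem \ref{algorithm} establishes $C_{\text{LP}}-C_{\text{det}}\le N$ via LP rounding, which combined gives $\text{{\sf T}}^{\text{3}}(\eta)\le C_{\text{det}}+N$ for every $\eta$ and hence the upper bound in (\ref{result}).

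For the lower bound I let $x^*$ attain $C_{\text{det}}$ in (\ref{relaxed}), set $I_i=\{j:x^*_{ij}=1\}$, $n_i=|I_i|$, $\mu_i=\sum_{j\in I_i}1/p_{ij}\le\tau$, and run the greedy static policy $\eta_{\text{g-static}}$ with partition $\vec\Pi=(I_1,\ldots,I_N)$ guaranteed to be optimal in its class by Lemma \ref{lemorder}. In any fixed interval, write $T_i^{(k)}$ for the cumulative time after $\text{{\sf AP}}_i$'s first $k$ attempts; $T_i^{(k)}$ is a sum of independent geometrics with mean $\mu_i^{(k)}\le\mu_i$. Feasibility of $x^*$ forces $1/p_{ij}\le\tau$ for each $j\in I_i$, which implies the variance satisfies $\mathrm{Var}(T_i^{(n_i)})=\sum_{j\in I_i}(1-p_{ij})/p_{ij}^2\le\tau\mu_i-n_i$. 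The number of lost packets at $\text{{\sf AP}}_i$ is $U_i=n_i-\max\{k:T_i^{(k)}\le\tau\}=\sum_{k=1}^{n_i}\mathbb 1[T_i^{(k)}>\tau]$, and the long-run throughput of the policy is exactly $\sum_i(n_i-E[U_i])=C_{\text{det}}-\sum_iE[U_i]$. A Chebyshev/Cantelli one-sided tail bound applied to each $T_i^{(k)}$, combined with the deterministic bound $U_i\le T_i^{(n_i)}-\tau+1$ available on the event $\{U_i\ge 1\}$ (since every packet consumes at least one slot), yields a per-AP estimate of the form $E[U_i]\le 2\sqrt{n_i+1/4}$ up to the calibration needed for the final constants. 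Applying Cauchy–Schwarz across the $N$ access points then gives $\sum_iE[U_i]\le 2\sqrt{N\sum_i(n_i+1/4)}=2\sqrt{N(C_{\text{det}}+N/4)}$, which establishes the lower bound.

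The main obstacle will be the per-AP loss estimate: the naive two-sided Chebyshev bound $E[U_i]\le n_i\mathrm{Var}(T_i^{(n_i)})/(\tau-\mu_i)^2$ becomes vacuous precisely when the assignment is tight ($\mu_i$ close to $\tau$), which is exactly the interesting regime. I expect the argument must therefore exploit the fact that each undelivered packet contributes at least one slot of overshoot, translating $U_i$ into a bound in terms of $(T_i^{(n_i)}-\tau)_+$ whose expectation is controlled via a second-moment estimate (Jensen on $\sigma_i\le\sqrt{\tau\mu_i-n_i}$). The constants $2$ and the additive $N/4$ inside the square root are exactly what comes out of completing the square after applying Cauchy–Schwarz to the resulting $\sqrt{n_i+1/4}$ per-AP bound; getting these constants, rather than just the order $O(\sqrt{NC_{\text{det}}})$, is the delicate part. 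The upper bound, by contrast, is immediate once Theorem \ref{algorithm} is available.
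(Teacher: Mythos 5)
Your upper bound takes a genuinely different route from the paper's. The paper first reduces to greedy static policies (Lemma \ref{lemorder}), writes $\text{\sf T}^{\text{3}}(\eta_{\text{g-static}})=\sum_i E[Y_i]$, and then proves the per-{\sf AP} estimate $E[Y_i]<l_i+1$ (Lemma \ref{lemmaup}) by a conditional-expectation computation on sums of geometrics; summing gives the additive $N$. You instead extract a fractional assignment $\bar x_{ij}$ from an arbitrary policy, use Wald's identity to show $\sum_j \bar x_{ij}/p_{ij}\le\tau$, conclude $\text{\sf T}^{\text{3}}(\eta)\le C_{\text{LP}}$, and then invoke the LP counting argument to get $C_{\text{LP}}\le C_{\text{det}}+N$. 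This is sound in outline and arguably cleaner: it bypasses Lemma \ref{lemmaup} entirely, and the Wald step correctly gives $E[\#\text{attempts on }j\text{ by {\sf AP}}_i]=\Pr(\text{delivered})/p_{ij}$ per interval. Two caveats: (i) the \emph{statement} of Theorem \ref{algorithm} is $C_{\text{det}}-\sum\lfloor x^*_{ij}\rfloor\le N$, not $C_{\text{LP}}-C_{\text{det}}\le N$; the latter does follow from the same counting ($V^*-\sum\lfloor x^*_{ij}\rfloor\le|Z_2|+|Z_3|\le N$ and $\sum\lfloor x^*_{ij}\rfloor\le C_{\text{det}}$), but you need to say that, since you are citing the proof rather than the theorem. (ii) Defining $\bar x_{ij}$ as a long-run fraction for an \emph{arbitrary} policy and asserting $\text{\sf T}^{\text{3}}(\eta)=\sum_{i,j}\bar x_{ij}$ quietly interchanges a $\limsup$ with a sum and replaces realized frequencies by conditional probabilities; this is exactly the technical work the paper does in Appendix \ref{LemmaInv}, and your argument needs an analogous subsequence/martingale-LLN step to be complete.

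The lower bound, by contrast, follows the same strategy as the paper (run the greedy static policy induced by the optimal deterministic assignment, bound the expected loss $E[U_i]=\sum_{k=1}^{n_i}\Pr(T_i^{(k)}>\tau)$ per {\sf AP}, then Cauchy--Schwarz across {\sf AP}'s), but it has a genuine gap at its core: you never establish the per-{\sf AP} estimate $E[U_i]< 2\sqrt{n_i+\tfrac14}$. You correctly identify that a single Chebyshev bound on $T_i^{(n_i)}$ is vacuous when $\mu_i$ is close to $\tau$, and you gesture at applying one-sided tail bounds to each partial sum $T_i^{(k)}$, but the whole difficulty is quantitative: one must show
\begin{equation}
\sum_{k=1}^{l}\Pr\Bigl(\textstyle\sum_{j=1}^{k}G_j>\tau\Bigr)\;\le\;1+\sum_{k=1}^{l-1}\min\Bigl(1,\tfrac{k}{(l-k)^2}\Bigr)\;<\;2\sqrt{l+\tfrac14},\nonumber
\end{equation}
where the first inequality uses Chebyshev together with the decreasing ordering $p_1\ge\cdots\ge p_l$ (to get $\sum_{j\le k}1/p_j^2\le k/p_k^2$ and $\sum_{j>k}1/p_j\ge(l-k)/p_k$), and the second is the paper's Lemma \ref{ell}, proved by locating the threshold index $m\approx l-\sqrt{l+\tfrac14}$, telescoping the tail sum, and checking signs for $l\ge 18$. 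Your proposal explicitly defers this ``calibration'' and only states what you ``expect'' the argument to exploit; the proposed alternative via $U_i\le T_i^{(n_i)}-\tau+1$ and a second-moment bound on $(T_i^{(n_i)}-\tau)_+$ is not carried out and, as written, would again degenerate when $\tau-\mu_i$ is small. Until that summation lemma (or an equivalent) is proved, the left inequality of (\ref{result}) is not established.
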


\begin{remark}
The right part of the inequality in (\ref{result}) suggests that $C_{\text{{\sf T}}^{\text{3}}}-C_{\text{det}}$ can be no more than $N$.
But the number of {\sf AP}'s $N$ is limited and is usually around $2,3,$ or $4$. 
Therefore, as $C_{\text{det}}\to\infty$ $\frac{N}{C_{\text{det}}}\to 0$.
Moreover, the left  inequality in Theorem \ref{maintheorem} suggests that $C_{\text{det}}-C_{\text{{\sf T}}^{\text{3}}}$  becomes negligible compared to $C_{\text{det}}$ as $C_{\text{det}}\to\infty$. 
In addition, the inequalities in Theorem \ref{maintheorem} imply that as $C_{\text{{\sf T}}^{\text{3}}}\to\infty$, $C_{\text{det}}\to\infty$, too.
Therefore, $\frac{C_{\text{det}}}{C_{\text{{\sf T}}^{\text{3}}}}\to 1$, as $C_{\text{{\sf T}}^{\text{3}}}\to \infty$. 
Hence, the bounds in Theorem \ref{maintheorem} suggest the asymptotic optimality of solving $C_{\text{det}}$ instead of $C_{\text{{\sf T}}^{\text{3}}}$.
\end{remark}

Theorem \ref{maintheorem}, basically bounds the gap between $C_{\text{\sf T}^{\text{3}}}$ and $C_{\text{det}}$. However, a remaining question is: if we run the system based on the greedy static scheduling policy which uses the assignment proposed by the solution to the relaxed problem, how much do we lose in terms of total timely throughput compared to $C_{\text{{\sf T}}^{\text{3}}}$? The following corollary which is proved in Appendix \ref{corollary} addresses this question.

\begin{corollary}\label{run}
Assume $C_{\text{{\sf T}}^{\text{3}}}\geq \frac{7N}{4}$. Let $\vec\Pi_{\text{det}}$ denote the assignment of clients to AP's suggested by the solution to the relaxed problem (\ref{relaxed}), and $\eta_{\text{det}}$ be the corresponding greedy static scheduling policy. Then, we have
\begin{equation}
C_{\text{{\sf T}}^{\text{3}}}-N-2\sqrt {N(C_{\text{{\sf T}}^{\text{3}}}-\frac{3N}{4})}\leq ||\vec R( \eta_{\text{det}})||_1\leq C_{\text{{\sf T}}^{\text{3}}}\label{yilu}.\nonumber
\end{equation}
\end{corollary}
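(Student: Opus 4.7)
The plan is to combine two facts: (i) the specific policy $\eta_{\text{det}}$ built from the RP optimum already achieves the lower bound of Theorem~\ref{maintheorem}, i.e. $\|\vec R(\eta_{\text{det}})\|_1 \geq C_{\text{det}} - 2\sqrt{N(C_{\text{det}}+N/4)}$, and (ii) the right inequality of Theorem~\ref{maintheorem} gives $C_{\text{det}} > C_{\text{\sf T}^{\text{3}}} - N$. The upper bound $\|\vec R(\eta_{\text{det}})\|_1 \leq C_{\text{\sf T}^{\text{3}}}$ is immediate from the definition of the supremum in \eqref{main}, so all the work is in the lower bound.

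For (i) I would appeal directly to the proof of the left inequality in Theorem~\ref{maintheorem}: that proof is not an existential argument over all policies but an explicit analysis of the greedy static scheduling policy induced by the RP optimal assignment $\vec\Pi_{\text{det}}$, so the same bound that proves $C_{\text{\sf T}^{\text{3}}} > C_{\text{det}} - 2\sqrt{N(C_{\text{det}}+N/4)}$ actually bounds $\|\vec R(\eta_{\text{det}})\|_1$ from below. Thus I can write
\begin{equation}
\|\vec R(\eta_{\text{det}})\|_1 \;\geq\; C_{\text{det}} - 2\sqrt{N\bigl(C_{\text{det}}+\tfrac{N}{4}\bigr)} \;=\; g(C_{\text{det}}),\nonumber
\end{equation}
where $g(x) \triangleq x - 2\sqrt{N(x+N/4)}$.

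The remaining step is to transfer the bound from $C_{\text{det}}$ to $C_{\text{\sf T}^{\text{3}}}-N$. I would study the monotonicity of $g$: a quick derivative computation gives $g'(x) = 1 - \sqrt{N/(x+N/4)}$, so $g$ is non-decreasing on $x \geq 3N/4$. The hypothesis $C_{\text{\sf T}^{\text{3}}} \geq 7N/4$ yields $C_{\text{\sf T}^{\text{3}}} - N \geq 3N/4$, and Theorem~\ref{maintheorem} gives $C_{\text{det}} > C_{\text{\sf T}^{\text{3}}} - N \geq 3N/4$, so both arguments lie in the monotone region and $g(C_{\text{det}}) \geq g(C_{\text{\sf T}^{\text{3}}}-N)$. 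Substituting back,
\begin{equation}
g(C_{\text{\sf T}^{\text{3}}}-N) = C_{\text{\sf T}^{\text{3}}} - N - 2\sqrt{N\bigl(C_{\text{\sf T}^{\text{3}}}-\tfrac{3N}{4}\bigr)},\nonumber
\end{equation}
which is exactly the claimed lower bound. Chaining the inequalities closes the proof.

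The main obstacle is really a bookkeeping issue rather than a technical one: I must make sure the left half of Theorem~\ref{maintheorem} is in fact achieved by $\eta_{\text{det}}$ (not merely by some hypothetical policy), and that the quantity $C_{\text{\sf T}^{\text{3}}}-\frac{3N}{4}$ appearing in the corollary is non-negative so the square root is well defined, which is guaranteed by the hypothesis $C_{\text{\sf T}^{\text{3}}}\geq 7N/4$. The precise threshold $7N/4$ is not arbitrary; it is exactly the place where $g$ becomes monotone at $x = C_{\text{\sf T}^{\text{3}}}-N$, so the hypothesis and the constant $\tfrac{3N}{4}$ under the radical must be handled together.
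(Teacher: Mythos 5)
Your proposal is correct and follows essentially the same route as the paper's proof: it lower-bounds $\|\vec R(\eta_{\text{det}})\|_1$ by $g(C_{\text{det}})$ with $g(x)=x-2\sqrt{N(x+N/4)}$ via the constructive part of the left inequality of Theorem~\ref{maintheorem}, then uses $C_{\text{det}}\geq C_{\text{{\sf T}}^{\text{3}}}-N$ together with the monotonicity of $g$ on $x\geq \frac{3N}{4}$ to conclude. The only cosmetic difference is that the paper routes the inequality $C_{\text{det}}\geq C_{\text{{\sf T}}^{\text{3}}}-N$ through the auxiliary quantity $\|\vec R_{\text{det}}(\cdot)\|_1$ evaluated at the optimal policy, whereas you invoke the right half of Theorem~\ref{maintheorem} directly; these are equivalent.
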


\begin{remark}\label{tightness}
As we prove in Appendix \ref{tightnessproof} the upper bound given in the right inequality in Theorem \ref{maintheorem} is tight. Furthermore, the lower bound given in the left inequality of Theorem 1 is tight in terms of order, i.e., there exists a network configuration and a positive constant $k$ for which $C_{\text{det}}-C_{\text{{\sf T}}^{\text{3}}}>k\sqrt{NC_{\text{det}}}.$
\end{remark}
%The proof of tightness of the bounds given by Theorem \ref{maintheorem} can be found in Appendix \ref{tightnessproof}.

\begin{figure}
\centering
\subfigure[]{\label{conf}\includegraphics[scale=.4,trim = 50mm 30mm 40mm 100mm]{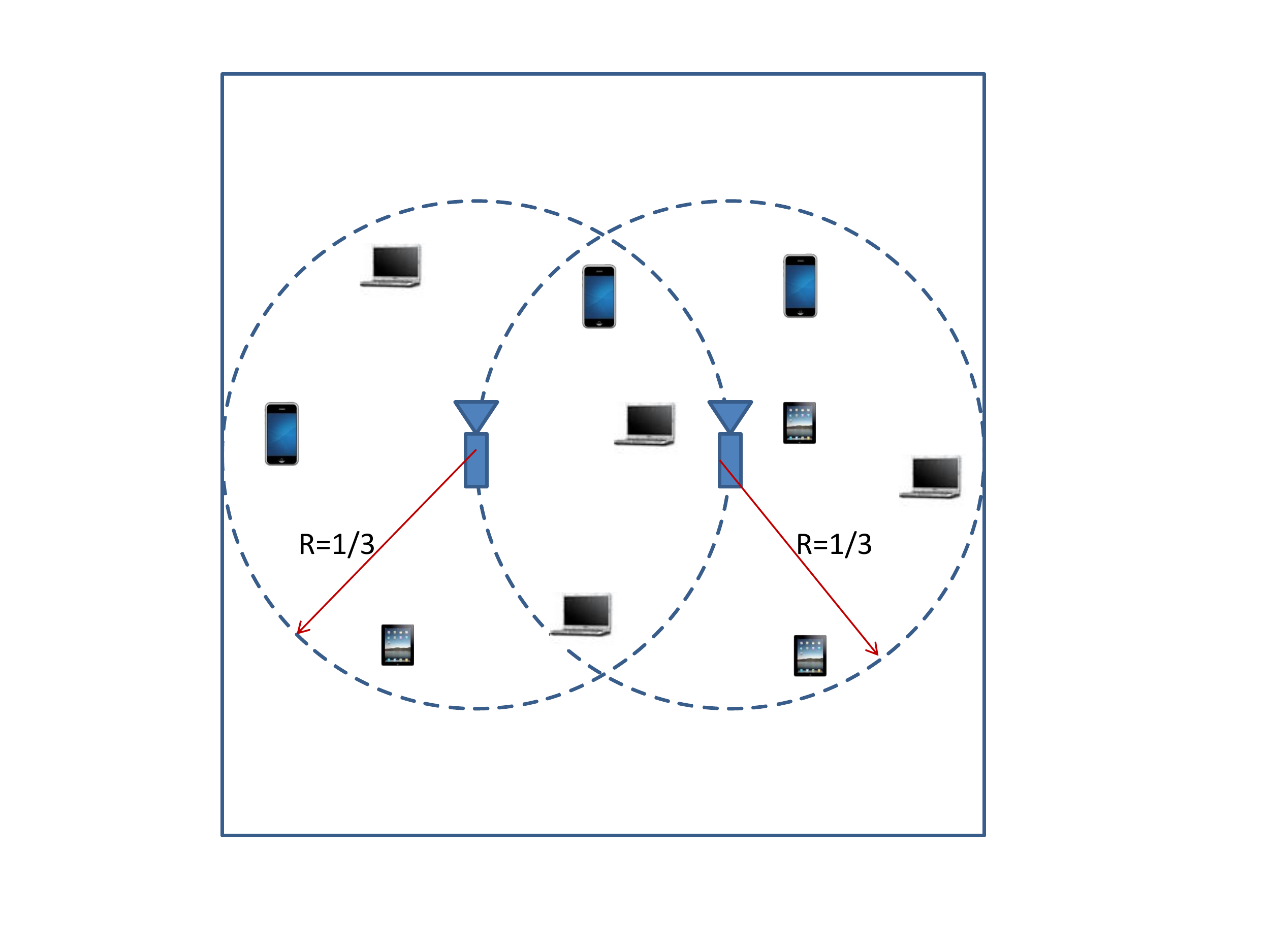}}
\subfigure[]{\label{numres}\includegraphics[scale=.6,trim = 40mm 90mm 50mm 100mm]{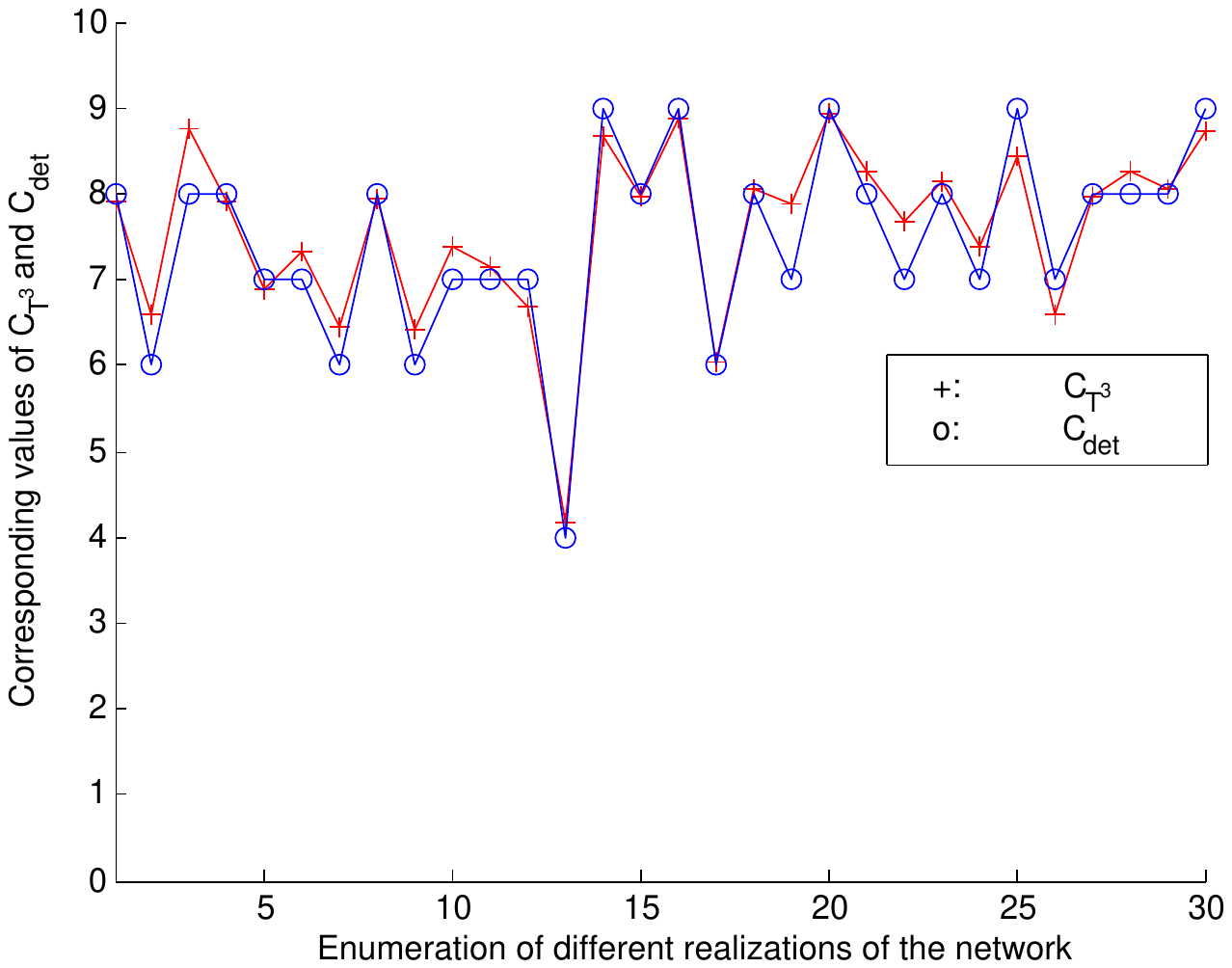}}
\caption{Numerical analysis for the gap between $C_{\text{{\sf T}}^{\text{3}}}$ and $C_{\text{det}}$ for the case of two {\sf AP}'s with coverage radius $\frac{1}{3}$, $10$ randomly located wireless clients, and intervals of length $\tau=15$. 
 (a) illustrates the network configuration, where erasure probability of a channel is proportional to the distance between the {\sf AP} and the corresponding receiver ($\text{erasure}=\min\{\frac{\text{distance}}{1/3}, 1\}$).
(b) demonstrates the numerical results for the gap for $30$ different realizations of the network, where each realization is constructed from a random and uniform location of clients in the network. Each `+' indicates the value of $C_{\text{{\sf T}}^{\text{3}}}$ for each realization, while `o' indicates the value of $C_{\text{det}}$ for the same realization.}
\end{figure}

\begin{remark}\label{smallgap}
The bounds in Theorem 1 are worst-case bounds, and via numerical analysis we observe that the gap between the original problem and its relaxation is in most cases much smaller.
Therefore, the solution to the relaxed problem tracks the solution to the main problem very well, even for a limited number of clients.  
To illustrate this, consider the network configuration in Figure~\ref{conf}, where there are two {\sf AP}'s with coverage radius $\frac{1}{3}$, and $10$ clients which are uniformly and randomly located in the coverage area of the two {\sf AP}'s.
The erasure probability of the channel between a client and an {\sf AP} is proportional to the distance ($\text{erasure}=\min\{\frac{\text{distance}}{1/3}, 1\}$); and $\tau=15$.
For 30 different realizations of this network, $C_{\text{{\sf T}}^{\text{3}}}$ and $C_{\text{det}}$ have been calculated, and plotted in Figure~\ref{numres} (detailed numerical results are provided in Section \ref{numerical}).
The numerical results suggest that even for small-scale networks $C_{\text{det}}$ is usually very close to $C_{\text{{\sf T}}^{\text{3}}}$.
\end{remark}

So far, we have shown by Theorem \ref{maintheorem} that by considering the relaxed problem (RP) we do not lose much in terms of total timely throughput capacity.  
Nevertheless, in order for the relaxation to be useful there should be a way to solve the relaxed problem efficiently.
The following algorithm approximates the solution to the relaxed problem (RP).

\begin{algorithm}
\caption{}
\begin{algorithmic}\label{apxalgorithm} 

\STATE Input: $N,M,\tau,$ and  $p_{ij}$ for $i=1,2,\ldots, N$ and $j=1,2,\ldots, M$.
\STATE Find $\bold{x^{*}}=[x^{*}_{ij}]_{N\times M}$, a basic optimal solution to the LP-relaxation of RP in (\ref{relaxed}).
\STATE  Output $\lfloor x^{*}_{ij}\rfloor$ (rounded down version of the elements of $\bold{x^{*}}$) for $i=1,2,\ldots, N$ and $j=1,2,\ldots, M$.
\end{algorithmic}
\end{algorithm}

The next Theorem, which is proved in  Section \ref{theorem2}, demonstrates that 
Algorithm \ref{apxalgorithm} approximates the  relaxed problem efficiently.

\begin{theorem}\label{algorithm}
Suppose that $\bold{x^{*}}$ is a basic optimal solution to the LP relaxation of RP. We have
\begin{equation}
C_{\text{det}}-\sum_{i=1}^{N}\sum_{j=1}^{M}\lfloor \bold {x^{*}_{ij}}\rfloor\leq N.\nonumber
\end{equation}
\end{theorem}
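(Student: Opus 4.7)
My plan is to sandwich $C_{\text{det}}$ between $\sum_{i,j}\lfloor x^{*}_{ij}\rfloor$ and $\sum_{i,j} x^{*}_{ij}$. For the lower side, the vector $\lfloor\mathbf{x}^{*}\rfloor$ lies in $\{0,1\}^{NM}$ and inherits feasibility of both the capacity and the assignment constraints by monotonicity of flooring, so $\sum_{i,j}\lfloor x^{*}_{ij}\rfloor\le C_{\text{det}}$. For the upper side, the LP relaxes (RP), so $C_{\text{det}}\le\sum_{i,j} x^{*}_{ij}$. Subtracting, the claim reduces to showing that the total fractional mass $\sum_{i,j}(x^{*}_{ij}-\lfloor x^{*}_{ij}\rfloor)$ is at most $N$.

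To bound this fractional mass I will exploit the BFS structure. Introducing capacity slacks $s_i$ and assignment slacks $t_j$ puts the LP in standard form with $N+M$ linearly independent equations, so any BFS has at most $N+M$ positive variables. Every positive $x^{*}_{ij}$ must then be basic, and a slack is positive exactly when its corresponding constraint is not tight. Hence, letting $\kappa$ and $\alpha$ denote the numbers of tight capacity and tight assignment constraints, the number $n_x$ of positive $x^{*}_{ij}$'s obeys $n_x\le (N+M)-(N-\kappa)-(M-\alpha)=\kappa+\alpha$. I will rely only on ``positive $\Rightarrow$ basic,'' which keeps the estimate intact under degeneracy.

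Next, I partition the clients by the pattern of their row in $\mathbf{x}^{*}$: let $m_A$ count rows whose nonzero entries are all fractional and sum to exactly $1$ (necessarily $\ge 2$ positive entries), $m_B$ count rows with $0<\sum_i x^{*}_{ij}<1$ (all positives fractional, at least $1$), $m_C$ count rows containing a single entry equal to $1$ (forced by $\sum_i x^{*}_{ij}\le 1$ to have all other entries zero), and $m_D$ count all-zero rows. The tight assignment constraints are exactly Cases $A$ and $C$, so $\alpha=m_A+m_C$, while a row-by-row count of positives gives $n_x\ge 2m_A+m_B+m_C$. Plugging these into $n_x\le\kappa+\alpha$ telescopes to $m_A+m_B\le\kappa\le N$. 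Since Case $A$ rows contribute fractional mass exactly $1$, Case $B$ rows at most $1$, and Cases $C$ and $D$ contribute $0$, the total fractional mass is at most $m_A+m_B\le N$, which completes the proof.

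I expect the main obstacle to be the row-by-row classification: one must check that the four cases are exhaustive and disjoint, in particular that a single $x^{*}_{ij}=1$ forces the other entries in row $j$ to zero and that Case $A$ genuinely requires two positive entries (a lone fractional value could not sum to $1$). Once these observations are in place, the counting inequality $m_A+m_B\le N$ drops out of the BFS bookkeeping without any case analysis on the weights $1/p_{ij}$.
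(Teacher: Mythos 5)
Your proof is correct and follows essentially the same route as the paper's: your four client classes $A,B,C,D$ are exactly the paper's $Z_3,Z_2,Z_4,Z_1$, and your count of positive basic variables versus tight constraints (after adding slacks) is just the standard-form restatement of the paper's count of tight versus non-tight inequalities for a basic solution, both yielding $|Z_2|+|Z_3|\le N$. The only cosmetic difference is that you reduce the theorem to bounding the total fractional mass $\sum_{i,j}\bigl(x^{*}_{ij}-\lfloor x^{*}_{ij}\rfloor\bigr)$, whereas the paper uses the chain $C_{\text{det}}\le V^*\le M-|Z_1|$ together with $\sum_{i,j}\lfloor x^{*}_{ij}\rfloor=|Z_4|$.
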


\begin{remark}
Finding a basic optimal solution to a linear program efficiently is straightforward, and is discussed in \cite{Basicopt}.
According to Theorem \ref{algorithm} if we find a basic optimal solution to LP relaxation of (\ref{relaxed}), and round down that solution to get integral values, the result will deviate from the optimal solution by at most $N$.
Since $N$ is typically very small (in most cases between 2-4), algorithm \ref{algorithm} performs well in approximating the solution to the Relaxed Problem (RP).
\end{remark}

\begin{remark}
The relaxed problem in (\ref{relaxed}) is a special case of the well-known Maximum Generalized
Assignment Problem (GAP).
There is a large body of literature on GAP; 
and its special cases capture many combinatorial optimization problems, having several
applications in computer science and operations research.
Even the special case of GAP in (\ref{relaxed}) is APX-hard \cite{MKP}, meaning that there is no polynomial-time approximation scheme (PTAS) for it. However, there are
several approximation algorithms for GAP, including \cite{MKP}, \cite{Goemans}. In particular,
 \cite{MKP}, based on a modification of the work in
\cite{Shmoys}, has proposed a 2-approximation algorithm for GAP; and \cite{Goemans} has
proposed an LP-based $\frac{e}{e-1}$-approximation algorithm.
The performance guarantees in the literature are concerned with multiplicative gap. 
However, our result in Theorem \ref{algorithm} suggests an additive gap performance guarantee of $N$ for the special case of GAP presented in (\ref{relaxed}).
 Since $N$ (the number of access points) is typically very small, this provides a tighter approximation guarantee for our problem of interest.
\end{remark}

\section{\large Analysis of Approximation Gap (Proof of Theorem \ref{maintheorem})}\label{theorem1}
%In this section we prove Theorem \ref{maintheorem}.
%To this aim
In order to prove Theorem \ref{maintheorem}, we first state Lemma \ref{lemorder} which is proved in Appendix \ref{LemmaInv}.

\begin{lemma}\label{lemorder}
$C_{\text{{\sf T}}^{\text{3}}}$ can be achieved using a greedy static scheduling policy.
\end{lemma}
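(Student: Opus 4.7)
The plan is to prove Lemma \ref{lemorder} in two independent reductions: first that among all scheduling policies a \emph{static} policy achieves $C_{\text{{\sf T}}^{\text{3}}}$, and second that among static policies the \emph{greedy} ordering is optimal.

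For the first reduction, I would exploit that channels are i.i.d.\ across time-slots and intervals and that packet requests are renewed at the start of each interval. Given any $\eta\in\mathcal{S}$, let $X_k(\eta)$ denote the total number of successful deliveries in the $k$-th interval, and define $f^{*}\triangleq\max_{(\vec\Pi,\Gamma)}\mathbb{E}[X(\vec\Pi,\Gamma)]$ as the largest per-interval expected throughput over the finite set of assignment-ordering pairs. By the renewal structure, $\mathbb{E}[X_k(\eta)\mid\mathcal{F}_{k-1}]\le f^{*}$ almost surely, where $\mathcal{F}_{k-1}$ is the natural filtration through interval $k-1$. Since each $X_k(\eta)\le M$ is bounded, a standard martingale law of large numbers yields $\text{{\sf T}}^{\text{3}}(\eta)\le f^{*}$ almost surely. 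Conversely, fixing any maximizer $(\vec\Pi^{*},\Gamma^{*})$ and applying it in every interval makes the per-interval throughputs i.i.d., so the strong law of large numbers yields $\text{{\sf T}}^{\text{3}}((\vec\Pi^{*},\Gamma^{*}))=f^{*}$ almost surely. Hence $C_{\text{{\sf T}}^{\text{3}}}=f^{*}$, attained by a static policy.

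For the second reduction, I use that the {\sf AP}'s operate on disjoint frequency bands with independent channels, so $\mathbb{E}[X(\vec\Pi,\Gamma)]$ decomposes as a sum over {\sf AP}'s of their individual expected deliveries. It therefore suffices, for a single {\sf AP} with a fixed assigned subset of clients, to show that serving them in decreasing order of success probability maximizes its expected number of deliveries. I would carry this out by a pairwise bubble-sort exchange argument: if two adjacent clients $j_a,j_b$ in the ordering satisfy $p_a<p_b$ (writing $p_a,p_b$ for the success probabilities at this {\sf AP}), then swapping them cannot decrease the expected deliveries, and iterating such swaps yields the greedy order.

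The key step, which will be the main obstacle, is verifying this exchange inequality in the presence of the finite interval length $\tau$. I would condition on the residual time $t$ at which the {\sf AP} first begins serving the pair $\{j_a,j_b\}$; this distribution depends only on the identical prefix and is unaffected by the swap. Because the {\sf AP} transmits each packet persistently until success, in both orderings the suffix clients are reached precisely on the event $\{G_a+G_b\le t\}$ (where $G_a,G_b$ are the independent geometric completion times of $j_a,j_b$), and on this event the time remaining for the suffix is exactly $t-G_a-G_b$ in both orderings. Hence the expected contribution of the suffix coincides in the two orderings, and the difference reduces to the single-pair comparison
\begin{equation*}
\mathbb{E}\bigl[\mathbf{1}\{G_b\le t\}-\mathbf{1}\{G_a\le t\}\bigr]=(1-p_a)^{t}-(1-p_b)^{t}\ge 0,
\end{equation*}
which establishes the exchange inequality and completes the proof.
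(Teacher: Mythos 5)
Your two-step architecture (reduce to a single-interval optimization via the renewal structure, then sort each {\sf AP}'s packets by an adjacent-swap exchange) parallels the paper's proof, and your martingale treatment of the converse direction is actually cleaner than the paper's Part~B, which argues through the set of assignments used infinitely often. Your exchange inequality is correct and is essentially the paper's Lemma~\ref{komaki}: the prefix and suffix contributions are unaffected by the swap and the difference reduces to $(1-p_a)^{t}-(1-p_b)^{t}\geq 0$.

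The gap is in the step $\mathbb{E}[X_k(\eta)\mid\mathcal{F}_{k-1}]\leq f^{*}$. You define $f^{*}$ as a maximum over assignment--ordering pairs, i.e., over policies that serve each {\sf AP}'s packets in a \emph{fixed} order and transmit each packet persistently until success. But the policy class $\mathcal S$ lets each {\sf AP} choose, in every time-slot, which of its still-undelivered packets to transmit, adaptively as a function of the outcomes observed so far in the interval. The renewal structure only yields $\mathbb{E}[X_k(\eta)\mid\mathcal{F}_{k-1}]\leq\max_{\vec\Pi}\sup_{\sigma}\mathbb{E}[X(\vec\Pi,\sigma)]$ with $\sigma$ ranging over all adaptive within-interval transmission rules; replacing this bound by $f^{*}$ requires showing that no adaptive rule beats the best fixed-order persistent rule at a single {\sf AP}. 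That is a genuine claim, not a formality: with one {\sf AP}, two packets and $\tau=2$, the adaptive rule ``send packet $1$, and if it fails switch to packet $2$'' yields $p_1+p_2$ expected deliveries, while the persistent greedy order yields $2p_1-p_1^2+p_1p_2$; the two differ by $(1-p_1)(p_1-p_2)$, so the two classes are not trivially interchangeable and an argument is needed to show persistence never loses. Your bubble-sort exchange only compares fixed orderings with one another, so it cannot close this. The paper closes it by invoking the polymatroid characterization of the single-{\sf AP} timely-throughput region and the fact that its corner points are achieved by static priority policies (\cite{Staticscheduling}, \cite{Polymatroid}); alternatively you could prove it directly via a dynamic-programming interchange argument on the single-{\sf AP} value function $V(S,t)$, but some such step must be added before your first reduction is complete.
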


Lemma 1 shows there is a scheduling policy which uses the same assignment and ordering of the packets for all intervals, and achieves $C_{\text{{\sf T}}^{\text{3}}}$.
The result in Lemma 1 is intuitive, and is a consequence of time-homogeneity of the system (Lemma 1 is also true for the time-varying channel model where channels are modeled by FSMC).
In fact, Lemma \ref{lemorder} allows us to focus on only one interval, and then to maximize the expected number of deliveries over that interval.

However, the main challenge lies in how to optimally assign the packets to {\sf AP}'s in order to maximize the expected number of deliveries.
But once the assignment is specified, the optimal ordering is trivial according to Lemma \ref{lemorder}.
We now use Lemma \ref{lemorder} in order to prove the right side of the inequality in Theorem \ref{maintheorem}.

\subsection{Proof of $C_{\text{{\sf T}}^{\text{3}}}<C_{\text{det}}+N$}
By Lemma \ref{lemorder} it is sufficient to prove that for any greedy static scheduling policy $\eta_{\text{g-static}}$,
%\begin{equation}
$\text{\sf T}^{\text{3}}(\eta_{\text{g-static}})<C_{\text{det}}+N.$
%\end{equation}
Suppose an arbitrary greedy static scheduling policy $\eta_{\text{g-static}}$ with the corresponding partition $\vec\Pi_{\text{g-static}}=[\mathcal{I}_1,\mathcal{I}_2,\ldots ,\mathcal{I}_N]$ and ordering $\Gamma_{\text{g-static}}$ is implemented. 
By (\ref{liminf}) we know that 
\begin{equation}\label{T3RJ}
\text{\sf T}^{\text{3}}(\eta_{\text{g-static}})=\limsup_{r\to\infty}\frac{\sum_{k=1}^{r}\sum_{j=1}^{M}N_j(k,\eta_{\text{g-static}})}{r}.
\end{equation}
On the other hand, by (\ref{liminfrate}) we know that for $j\in[1:M]$,
%\begin{equation}
$R_j(\eta_{\text{g-static}})=\limsup_{r\to\infty}\frac{\sum_{k=1}^{r}N_j(k,\eta_{\text{g-static}})}{r}.$
%\end{equation}
Let $Y_i$ denote the random variable for the number of successful deliveries by  $\text {{\sf AP}}_i$ during one interval, when $\eta_{\text{g-static}}$ is implemented;  in other words,
$Y_i\triangleq \sum_{j\in \mathcal I_i}^{}N_j(1,\eta_{\text{g-static}}),\quad i\in[1:N].$
%$i=1,2,\ldots ,N$;
%More precisely,
%\begin{equation}
%Y_i\triangleq \sum_{j\in \mathcal I_i}^{}N_j(1,\eta_{\text{g-static}}),\quad i=1,2,\ldots ,N.
%\end{equation}
Since a greedy static scheduling policy is implemented and channels are i.i.d over time, by LLN,
\begin{align}
\sum_{i=1}^{N}E[Y_i]&=\sum_{i=1}^{N}\limsup_{r\to\infty}\frac{\sum_{j\in \mathcal I_i}^{}\sum_{k=1}^{r}N_j(k,\eta_{\text{g-static}})}{r}
=\sum_{i=1}^{N}\sum_{j\in\mathcal I_i}^{}R_j(\eta_{\text{g-static}})\nonumber\\
&=\sum_{j=1}^{M}R_j(\eta_{\text{g-static}})=\sum_{j=1}^{M}\lim_{r\to\infty}\frac{\sum_{k=1}^{r}N_j(k,\eta_{\text{g-static}})}{r}\nonumber\\
&=\lim_{r\to\infty}\frac{\sum_{k=1}^{r}\sum_{j=1}^{M}N_j(k,\eta_{\text{g-static}})}{r}=\text{\sf T}^{\text{3}}(\eta_{\text{g-static}}).\label{LLN2}
\end{align}
%Let $\text{{\sf Rx}}_{\mathcal A}\triangleq \{\text{{\sf Rx}}_j|j\in \mathcal A\}$, where $\mathcal A\subseteq \{1,2,\ldots ,M\}$ is a set of indices. 
Define $q_i\triangleq|\mathcal I_i|$, and denote the enumeration of clients assigned to $\text{{\sf AP}}_i$ by
$\{\mathcal I_i(1),\mathcal I_i(2),\ldots,\mathcal I_i(q_i)\}$,
where the enumeration is according to the channel success probabilities of different clients in $\mathcal I_i$.
Let $G_{ij}$ be a geometric random variable with parameter $p_{ij}, i\in[1:N], j\in [1:M]$.
Then, it is easy to see that 
\begin{equation}
Y_i= \max\quad k\quad s.t. \quad \sum_{j=1}^{k} G_{i\mathcal I_i(j)}\leq \tau, \quad i\in\{1,2,\ldots ,N\},k\leq q_i,\nonumber
\end{equation}
since $\eta_{\text{g-static}}$ persistently sends a packet until it is delivered, or the interval is over.
Define 
\begin{equation}  
l_i\triangleq \max \quad \hat l\quad s.t.\quad \sum_{j=1}^{\hat l}1/p_{i\mathcal I_i(j)}\leq \tau,\quad \hat l\leq q_i.\nonumber
\end{equation}
Therefore, $l_i$ is the maximum number of objects that fit into bin of capacity $\tau$ when the channels are relaxed and clients in $\mathcal I_i$ are assigned to ${\sf AP}_i$.
The following lemma (for which the proof is provided in Appendix \ref{rightlemma}) relates $l_i$ to $Y_i$.
\begin{lemma}\label{lemmaup}
Let $\tau \in \mathbb N$ and $G_1, G_2, \ldots, G_q$ be independent geometric random variables with parameters $p_1, p_2, \ldots, p_q$ respectively, such that $1\geq p_1 \geq p_2 \geq \ldots \geq p_q \geq 0$. Also define
%\begin{equation}  
$l\triangleq \max \hat l\quad s.t.\quad \sum_{i=1}^{\hat l}1/p_i\leq \tau$,
%\end{equation}
 and
%\begin{equation}  
$Y\triangleq \max i\quad s.t. \quad \sum_{j=1}^{i} G_j\leq \tau, \quad i\in\{1,2,\ldots ,q\}.$
%\end{equation}
Then,
%\begin{equation}
$E[Y]<l+1.$
%\end{equation}
\end{lemma}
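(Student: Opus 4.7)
My plan is to reduce the bound to a control on the stopping time $T := \min\{k : S_k > \tau\}$, where $S_k := \sum_{j=1}^{k} G_j$, and then apply Jensen's inequality. I would first dispose of the trivial case $\sum_{j=1}^{q} 1/p_j \leq \tau$, in which $l = q$ and $Y \leq q < l+1$ holds automatically. In the remaining case, I would extend $G_1, \ldots, G_q$ to an infinite sequence by taking $G_{q+1}, G_{q+2}, \ldots$ independent geometric with parameter $p_q$; this extension preserves the value of $l$ and makes $T$ almost surely finite, while $Y \leq T-1$ because the partial sums are strictly increasing. It therefore suffices to prove $E[T-1] < l + 1$.

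The key identity I would establish is
\begin{equation*}
E\!\left[\sum_{j=1}^{T-1} \frac{1}{p_j}\right] = \tau.
\end{equation*}
Two ingredients combine to give this. First, a Wald-type identity for independent (but not identically distributed) summands yields $E[S_T] = \sum_{j \geq 1}(1/p_j)\Pr(T \geq j) = E\bigl[\sum_{j=1}^{T} 1/p_j\bigr]$; this is justified because $\{T \geq j\}$ is determined by $G_1, \ldots, G_{j-1}$ and is therefore independent of $G_j$. Second, the memoryless property of the geometric distribution lets me compute the overshoot: conditional on $T$ and $S_{T-1} \leq \tau$, the residual $G_T - (\tau - S_{T-1})$ is again geometric with parameter $p_T$, so $E[S_T] = \tau + E[1/p_T]$. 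Subtracting $E[1/p_T]$ from both expressions produces the displayed identity.

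To convert this identity into a bound on $E[T-1]$, I would use the piecewise linear function $f:[0,\infty)\to[0,\infty)$ with $f(0)=0$ and $f(k) = \sum_{j=1}^{k} 1/p_j$ at integers, extended by linear interpolation. The hypothesis $p_1 \geq p_2 \geq \ldots$ makes the slopes $1/p_{k+1}$ non-decreasing, so $f$ is convex and strictly increasing. Since $T-1$ is integer-valued, $f(T-1) = \sum_{j=1}^{T-1} 1/p_j$, and hence $E[f(T-1)] = \tau$ by the previous step. Jensen's inequality then gives $f(E[T-1]) \leq \tau$. The definition of $l$ gives $f(l) \leq \tau < f(l+1)$, so the unique $x^*$ with $f(x^*) = \tau$ lies in $[l,l+1)$, whence $E[T-1] \leq x^* < l+1$, concluding the argument.

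The main obstacle I anticipate is making Step 2 watertight — specifically verifying integrability of $S_T$ and $T$ in the extended sequence so that the Wald identity is applicable, and correctly handling the conditioning in the memoryless-overshoot calculation. Once the identity $E[\sum_{j=1}^{T-1} 1/p_j] = \tau$ is secured, the Jensen step is essentially automatic from the convex structure that the monotonicity assumption on $p_j$ already provides.
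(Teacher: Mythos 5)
Your argument is correct, and it takes a genuinely different route from the paper's. The paper conditions on $X_l=\sum_{i=1}^{l}G_i$, first reduces to the case $q=\tau$ and $p_l=p_{l+1}=\ldots=p_\tau$ (both reductions using the monotonicity of the $p_j$'s), computes $E[Y\mid X_l=t]=l+(\tau-t)p_l$ explicitly, and then uses memorylessness to evaluate the tail term $\sum_{t>\tau}t\Pr(X_l=t)$; the desired bound emerges from cancellation plus $p_l\leq p_j$. You instead prove the exact identity $E\bigl[\sum_{j=1}^{T-1}1/p_j\bigr]=\tau$ via a Wald-type decomposition and the memoryless overshoot $E[S_T]=\tau+E[1/p_T]$, and then convert it into the bound with a single application of Jensen to the convex piecewise-linear interpolant of the partial sums $\sum_{j\leq k}1/p_j$. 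Both proofs hinge on memorylessness, but you invoke the ordering of the $p_j$'s only once (for convexity), whereas the paper uses it in two separate reduction steps; your identity is also more informative than the inequality itself and makes the role of the hypothesis transparent. Two small points you should nail down in a final write-up: (i) integrability for the Wald step is actually immediate, since $G_j\geq 1$ forces $T\leq\tau+1$ deterministically, so all sums are finite once the parameters are positive; and (ii) the case $p_j=0$ must be excluded before extending the sequence, since otherwise $E[S_T]$ and $E[\sum_{j\leq T}1/p_j]$ are both infinite and the subtraction yielding your key identity is ill-defined --- the same one-line remark the paper makes (omitting zero-probability clients changes neither $E[Y]$ nor $l$) repairs this.
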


Hence, 
\begin{align}
 \text{\sf T}^{\text{3}}(\eta_{\text{g-static}}) &\stackrel{(a)}{=} \sum_{i=1}^{N}E[Y_i] \stackrel{(b)}{<} \sum_{i=1}^{N}(l_i+1)
\stackrel{(c)}{\leq}  C_{\text{det}}+N.\nonumber
\end{align}
where (a) follows from (\ref{LLN2});
(b) follows from Lemma \ref{lemmaup};
and (c) follows from the fact that $\sum_{i=1}^{N}l_i$ is the value of the objective function in (\ref{relaxed}) for a feasible solution.
Hence the proof of the right inequality in Theorem \ref{maintheorem} is complete.

\subsection{Proof of $C_{\text{det}}-2\sqrt {N(C_{\text{det}}+\frac{N}{4})}<C_{\text{{\sf T}}^{\text{3}}}$}

Consider the assignment proposed by the solution to the relaxed problem in (\ref{relaxed}), where the clients that are not assigned to any {\sf AP} for transmission are now assigned to {\sf AP}'s arbitrarily. 
Let $\vec\Pi_{\text{g-static}}^{\text{det}}=[\mathcal{I}_1^{\text{det}},\mathcal{I}_2^{\text{det}},\ldots ,\mathcal{I}_N^{\text{det}}]$ denote the resulting partition, and also let $\eta_{\text{g-static}}^{\text{det}}$ denote the corresponding greedy static scheduling policy.
Therefore, we have
%\begin{equation}
$\text{\sf T}^{\text{3}}(\eta_{\text{g-static}}^{\text{det}})\leq C_{\text{\sf T}^{\text{3}}}\nonumber.$
%\end{equation}
So, it is sufficient to prove that $C_{\text{det}}-2\sqrt {N(C_{\text{det}}+\frac{N}{4})}<\text{{\sf T}}^{\text{3}}(\eta_{\text{g-static}}^{\text{det}})$.
Let $Y_i^{\text{det}}$ denote the random variable indicating the number of successful deliveries by  $\text {{\sf AP}}_i$ during one interval, when $\eta_{\text{g-static}}^{\text{det}}$ is implemented, $i=1,2,\ldots ,N$.
 With the same argument as in part A we have
%\begin{equation}
$\text{{\sf T}}^{\text{3}}(\eta_{\text{g-static}}^{\text{det}})=\sum_{i=1}^{N}E[Y_i^{\text{det}}].$
%\end{equation}
Therefore, it is sufficient to prove that 
%\begin{equation}
$C_{\text{det}}-2\sqrt {N(C_{\text{det}}+\frac{N}{4})}<\sum_{i=1}^{N}E[Y_i^{\text{det}}].$
%\end{equation}
Define $q_i=|\mathcal I_i^{\text{det}}|$; and denote the enumeration of clients assigned to $\text{{\sf AP}}_i$ by
$\{\mathcal I_i^{\text{det}}(1),\mathcal I_i^{\text{det}}(2),\ldots,\mathcal I_i^{\text{det}}(q_i)\}$,
where the enumeration is according to the channel success probabilities of different clients in $\mathcal I_i^{\text{det}}$.
Further, let $G_{ij}$ be a geometric random variable with parameter $p_{ij},\quad i=1,2,\ldots, N,\quad j=1,2,\ldots,M$.
Then, it is easy to see that 
\begin{equation}
Y_i^{\text{det}}= \max\quad k\quad s.t. \quad \sum_{j=1}^{k} G_{i\mathcal I_i^{\text{det}}(j)}\leq \tau, \quad i\in\{1,2,\ldots ,N\},k\leq q_i,\nonumber
\end{equation}
since $\eta_{\text{g-static}}^{\text{det}}$ persistently sends a packet until it is delivered, or the interval is over.
Also define 
\begin{equation}  
l_i^{\text{det}}\triangleq \max \quad \hat l\quad s.t.\quad \sum_{j=1}^{\hat l}1/p_{i\mathcal I_i^{\text{det}}(j)}\leq \tau,\quad \hat l\leq q_i.\nonumber
\end{equation}
Therefore, $l_i^{\text{det}}$ is the maximum number of objects that fit into a bin of capacity $\tau$ when the channels are relaxed and clients in $\mathcal I_i^{\text{det}}$ are assigned to ${\sf AP}_i$. The following lemma (which is proved in Appendix \ref{leftlemma}) relates $l_i^{\text{det}}$ to $Y_i^{\text{det}}$.

\begin{lemma}\label{lemmadown}
Let $\tau \in \mathbb N$ and $G_1, G_2, \ldots, G_q$ be independent geometric random variables with parameters $p_1, p_2, \ldots, p_q$ respectively, such that $1\geq p_1 \geq p_2 \geq \ldots \geq p_q \geq 0$. Also define
%\begin{equation}  
$l\triangleq \max \hat l\quad s.t.\quad \sum_{i=1}^{\hat l}1/p_i\leq \tau$
%\end{equation}
 and
%\begin{equation}  
$Y\triangleq \max i\quad s.t. \quad \sum_{j=1}^{i} G_j\leq \tau, \quad i\in\{1,2,\ldots ,q\}.$
%\end{equation}
Then, 
%\begin{equation}
$l-2\sqrt{l+\frac{1}{4}}<E[Y].$
%\end{equation}
\end{lemma}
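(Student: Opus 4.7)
The plan is to show $E[(l-Y)^+] < 2\sqrt{l+1/4}$, from which the lemma follows since $E[Y] \geq l - E[(l-Y)^+]$. The first observation is that because every $G_i \geq 1$, the partial sums $S_k := \sum_{i=1}^k G_i$ are strictly increasing, so once $S_k > \tau$ holds for some $k$ the same is true for every larger index. Consequently $(l - Y)^+ = |\{k \in \{1, \ldots, l\} : S_k > \tau\}|$, and taking expectations yields
\[
E[(l-Y)^+] = \sum_{k=1}^{l} \Pr(S_k > \tau).
\]

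To control each term, I would set $\mu_k = E[S_k] = \sum_{i \leq k} 1/p_i$ and $\sigma_k^2 = \mathrm{Var}(S_k) = \sum_{i \leq k} (1-p_i)/p_i^2$. Since $\mu_k \leq \mu_l \leq \tau$, one-sided Chebyshev (Cantelli) applies and gives $\Pr(S_k > \tau) \leq \sigma_k^2/(\sigma_k^2 + (\tau - \mu_k)^2)$. The ordering $p_1 \geq \cdots \geq p_l$ is then used in three places: from $1/p_i \leq 1/p_k$ for $i \leq k$ I obtain both $\sigma_k^2 \leq (1/p_k - 1)\mu_k$ and $\mu_k \leq k/p_k$, while from $1/p_i \geq 1/p_{k+1} \geq 1/p_k$ for $i > k$ I obtain $\tau - \mu_k \geq (l-k)/p_k$. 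Substituting these into the Cantelli bound and simplifying algebraically makes all $p_k$-dependence cancel, leaving the clean estimate
\[
\Pr(S_k > \tau) \leq \frac{k}{k + (l-k)^2}.
\]

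What remains is the purely deterministic inequality $\sum_{k=1}^l k/(k + (l-k)^2) < 2\sqrt{l + 1/4}$. Substituting $m = l-k$ and completing the square rewrites the summand as $(l-m)/((m-1/2)^2 + (l-1/4))$, which is non-increasing in $m$ (a short differentiation check shows the derivative is $-x(2l-x)/(\text{positive})^2$). I would then compare the sum with the associated definite integral, which has a closed form in terms of $\arctan$ and $\ln$, and verify that the resulting bound sits below $2\sqrt{l + 1/4}$. The main obstacle is precisely this last step: a naive split that bounds the summand by $1$ for $k$ near $l$ and by Chebyshev otherwise yields at best $2\sqrt{l} + 1$, which actually \emph{exceeds} $2\sqrt{l+1/4}$. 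Closing the gap to match the exact constant $2$ and form $\sqrt{l+1/4}$ requires either a delicate integral comparison, or a sharper termwise inequality of the telescoping form $(l-m)/((m-1/2)^2+(l-1/4)) \leq 2\bigl(\sqrt{l + 1/4 - (m-1/2)} - \sqrt{l + 1/4 - (m+1/2)}\bigr)$, whose verification would be the chief technical step.
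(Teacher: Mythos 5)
Your reduction and probabilistic estimates are correct and follow essentially the same route as the paper: write $l-E[Y]$ (you use $E[(l-Y)^{+}]$, which also handles $q>l$ cleanly) as $\sum_{k=1}^{l}\Pr(S_k>\tau)$, then use a Chebyshev-type bound together with the monotonicity $p_1\geq\cdots\geq p_l$ to eliminate the $p_k$'s. Your use of Cantelli plus the bound $\sigma_k^2\leq k/p_k^2$ even gives a termwise estimate $\frac{k}{k+(l-k)^2}$ that is slightly sharper than the paper's $\min\bigl(1,\frac{k}{(l-k)^2}\bigr)$. Up to this point everything checks out.

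The genuine gap is the final deterministic inequality $\sum_{k=1}^{l}\frac{k}{k+(l-k)^2}<2\sqrt{l+\tfrac14}$, which you correctly identify as the crux but do not prove. This is not a routine verification: as you note, the naive split gives $2\sqrt{l}+1$, which is too large, so the entire content of the constant in the lemma lives in this step. Moreover, the one concrete route you write down --- the termwise telescoping bound
\begin{equation}
\frac{l-m}{(m-\tfrac12)^2+(l-\tfrac14)}\;\leq\;2\Bigl(\sqrt{l+\tfrac14-(m-\tfrac12)}-\sqrt{l+\tfrac14-(m+\tfrac12)}\Bigr)\nonumber
\end{equation}
--- is false. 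At $m=0$ the left side equals $\frac{l}{l}=1$, while the right side equals $2\bigl(\sqrt{l+\tfrac34}-\sqrt{l-\tfrac14}\bigr)=\frac{2}{\sqrt{l+3/4}+\sqrt{l-1/4}}\approx 1/\sqrt{l}$, which is far below $1$ for moderate $l$; the same failure persists for all $m\lesssim\sqrt{l}$, i.e., exactly the terms that contribute half the sum. A termwise comparison against that particular telescoping sequence cannot work, since the terms near $k=l$ are all close to $1$ while the telescoping increments there are $O(1/\sqrt{l})$; any successful argument must aggregate the $O(\sqrt{l})$ terms near $k=l$ rather than dominate them one by one. The paper closes this step with a separate lemma (Lemma~\ref{ell}): it verifies $l<18$ numerically, and for $l\geq 18$ it locates the crossover index $m$ with $l-\sqrt{l+\tfrac14}-\tfrac12<m\leq l-\sqrt{l+\tfrac14}+\tfrac12$, bounds the saturated block by $l-m$ exactly, and controls the tail via $\frac{l-j}{j^2}<\frac{l-j}{j-1}-\frac{l-j}{j}$ followed by a Cauchy--Schwarz estimate on the resulting harmonic sum. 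Since your summand is termwise dominated by the paper's, its Lemma~\ref{ell} would finish your argument, but as written your proof stops one genuine (and nontrivial) step short.
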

%By Lemma \ref{lemmadown} we have
Hence,
\begin{align}
C_{\text{{\sf T}}^{\text{3}}} & \geq \sum_{i=1}^{N}E[Y_i^{\text{det}}]  \stackrel{(a)}{>}  \sum_{i=1}^{N} l_i^{\text{det}}-2\sum_{i=1}^{N}\sqrt{l_i^{\text{det}}+\frac{1}{4}}\nonumber \\
& \stackrel{(b)}{\geq} \sum_{i=1}^{N} l_i^{\text{det}}-2\sqrt {N(\sum_{i=1}^{N}l_i^{\text{det}}+\frac{N}{4})}  = C_{\text{det}}-2\sqrt {N(C_{\text{det}}+\frac{N}{4})},\nonumber
\end{align}
where (a) follows from Lemma \ref{lemmadown};
and (b) follows from Cauchy-Schwarz inequality.
Therefore, the left inequality of Theorem \ref{maintheorem} is proved and the proof of Theorem \ref{maintheorem} is complete.

\section{\large Proof of Theorem \ref{algorithm}} \label{theorem2}
Note that RP is a mixed integer linear program. Linear relaxation of RP, denoted by LR-RP,  replaces the constraint $x_{ij}\in\{0,1\}$ with $x_{ij}\geq 0$ for $i\in[1:N],\quad j\in [1:M]$.
Any solution to LR-RP can be denoted by an $N$-by-$M$ matrix $\bold x=[x_{ij}]_{N\times M}$.
So, let $\bold {x^{*}}=[x^{*}_{ij}]_{N\times M}$ denote a basic optimal solution to  LR-RP with objective value $V^*$; i.e., $V^*=\sum_{i=1}^{N}\sum_{j=1}^{M}x^{*}_{ij}$. 
Define 
\begin{align}
& Z_1=\{j\in \{1,2,\ldots ,M \}  | \sum_{i=1}^{N} x^*_{ij}=0\} \nonumber \\
&  Z_2=\{j \in \{1,2,\ldots ,M \} | 0< \sum_{i=1}^{N} x^*_{ij}<1\} \nonumber  \\
& Z_3=\{j \in \{1,2,\ldots ,M \} | \sum_{i=1}^{N} x^*_{ij}=1,  \sum_{i=1}^{N}\lfloor x^*_{ij}\rfloor =0\} \nonumber  \\
& Z_4=\{j \in \{1,2,\ldots ,M \} | \sum_{i=1}^{N} x^*_{ij}=1,  \sum_{i=1}^{N}\lfloor x^*_{ij}\rfloor =1\}. \nonumber  
\end{align}
It is easy to see that $Z_1,Z_2,Z_3,Z_4$ partition the set $\{ 1,2,\ldots , M \}$. Therefore, $M=|Z_1|+|Z_2|+|Z_3|+|Z_4|$.
Furthermore, according to definitions of $V^*,Z_1$, and $Z_4$,
\begin{align}
& C_{det}\leq V^* \leq M-|Z_1| , \label{googooli1}\\
&  \sum_{i=1}^{N}\sum_{j=1}^{M}\lfloor x^*_{ij} \rfloor  =|Z_4|.\label{googooli2}
\end{align}
Hence, by considering (\ref{googooli1}) and (\ref{googooli2}), for proving $C_{det}-  \sum_{i=1}^{N}\sum_{j=1}^{M}\lfloor x^*_{ij} \rfloor \leq N$, it is sufficient to prove 
\begin{equation}\label{magool6}
M-|Z_1|-|Z_4|\leq N,\qquad \text{or equivalently,}\qquad  |Z_2|+|Z_3| \leq N.
\end{equation}
We use a similar approach to \cite{Trick}, \cite{Benders} . 
Note that since $\bold x=[x_{ij}]_{N\times M}$ is a basic solution to LR-RP, the number of inequalities in (\ref{magool1})-(\ref{magool3}) tightened by $\bold x$ is at least the total number of  variables, $MN$. So, if we denote the number of non-tight inequalities in (\ref{magool1}), (\ref{magool2}), (\ref{magool3}) by $n_1,n_2,n_3$, 
\begin{align}
& (N-n_1)+(M-n_2)+(MN-n_3) \geq MN\nonumber\\
&\Rightarrow \qquad n_1+n_2+n_3 \leq M+N. \label{magool5}
\end{align}
On the other hand, according to definition of $Z_1,Z_2,Z_3,Z_4$, we have
\begin{align}
&n_1\geq 0 \\
& n_2\geq |Z_1|+|Z_2|\\
& n_3 \geq |Z_2|+2|Z_3|+|Z_4|, \label{magool4}
\end{align}
where (\ref{magool4}) follows by counting the number of $x^*_{ij}>0$'s  with index $j$  in $Z_2,Z_3$ or $Z_4$; the number of $x^*_{ij}>0$ for which $j\in Z_3$ is at least $2|Z_3|$ since there should be at least two positive fractional values that add up to 1. 
Hence, by (\ref{magool5})-(\ref{magool4}),
\begin{align}
& |Z_1|+2|Z_2|+2 |Z_3|+|Z_4| \leq M+N \quad  \Rightarrow  \quad  |Z_2|+|Z_3| \leq  N, \nonumber
\end{align}
which is the desired inequality as stated in (\ref{magool6}); therefore, the proof is complete.
$\blacksquare$

\begin{corollary}\label{apxresult}
Suppose we choose a basic optimal solution to the LP relaxation of (\ref{relaxed}), denoted by $\bold {x^{*}}$, and round down the solution to get integral values. 
Let $\vec\Pi_{\text{det}}^{\text{apx}}$ denote the assignment suggested by the resulting integral values; and let $\eta_{\text{det}}^{\text{apx}}$ denote the corresponding greedy static scheduling policy.
For $C_{\text{{\sf T}}^{\text{3}}}>\frac{11N}{4}$ we have
\begin{equation}
C_{\text{{\sf T}}^{\text{3}}}-2N-2\sqrt {N(C_{\text{{\sf T}}^{\text{3}}}-\frac{7N}{4})}\leq ||\vec R(\eta_{\text{det}}^{\text{apx}})||_1  \leq C_{\text{{\sf T}}^{\text{3}}}.\nonumber
\end{equation}
\end{corollary}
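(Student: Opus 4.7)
The upper bound $\|\vec R(\eta_{\text{det}}^{\text{apx}})\|_1 \leq C_{\text{{\sf T}}^{\text{3}}}$ is immediate from the definition of $C_{\text{{\sf T}}^{\text{3}}}$ as a supremum over all scheduling policies. For the lower bound, my plan is to mirror the structure of Corollary~\ref{run}, but to insert Theorem~\ref{algorithm} as an extra link in the chain in order to absorb the additive gap between $C_{\text{det}}$ and the value of the rounded LP solution.

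Set $V_{\text{apx}} \triangleq \sum_{i,j}\lfloor x^{*}_{ij}\rfloor$ and $l_i^{\text{apx}} \triangleq \sum_j \lfloor x^{*}_{ij}\rfloor$. First I would observe that LR-RP feasibility of $x^{*}$ gives $\sum_j \lfloor x^{*}_{ij}\rfloor / p_{ij} \leq \sum_j x^{*}_{ij}/p_{ij} \leq \tau$, so under the partition $\vec\Pi_{\text{det}}^{\text{apx}}$ all $l_i^{\text{apx}}$ integrally-assigned clients at ${\sf AP}_i$ already fit in a bin of capacity $\tau$. Applying Lemma~\ref{lemmadown} to ${\sf AP}_i$ with $l = l_i^{\text{apx}}$ then yields $E[Y_i^{\text{apx}}] > l_i^{\text{apx}} - 2\sqrt{l_i^{\text{apx}} + 1/4}$, where $Y_i^{\text{apx}}$ is the number of successful deliveries by ${\sf AP}_i$ during an interval under $\eta_{\text{det}}^{\text{apx}}$. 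Clients that are \emph{not} integrally assigned by the rounding (those in $Z_1 \cup Z_2 \cup Z_3$ in the proof of Theorem~\ref{algorithm}) must still be placed somewhere to complete the static partition; I would assign each such client to any ${\sf AP}$ for which its success probability is smallest so that it sits at the tail of the greedy ordering. Since the greedy policy services higher-$p_{ij}$ clients first, attaching such tail clients can only increase $E[Y_i^{\text{apx}}]$, so the above bound is preserved.

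Summing over $i$ and using the LLN identity (\ref{LLN2}) gives $\|\vec R(\eta_{\text{det}}^{\text{apx}})\|_1 = \sum_i E[Y_i^{\text{apx}}]$, and Cauchy-Schwarz on $\sum_i \sqrt{l_i^{\text{apx}} + 1/4}$ yields
\begin{equation*}
\|\vec R(\eta_{\text{det}}^{\text{apx}})\|_1 > V_{\text{apx}} - 2\sqrt{N(V_{\text{apx}} + N/4)}.
\end{equation*}
Now I would pull $V_{\text{apx}}$ back to $C_{\text{{\sf T}}^{\text{3}}}$ in two steps: Theorem~\ref{algorithm} gives $V_{\text{apx}} \geq C_{\text{det}} - N$, and the right inequality of Theorem~\ref{maintheorem} gives $C_{\text{det}} > C_{\text{{\sf T}}^{\text{3}}} - N$, so together $V_{\text{apx}} > C_{\text{{\sf T}}^{\text{3}}} - 2N$.

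The final ingredient is monotonicity of $f(x) \triangleq x - 2\sqrt{N(x + N/4)}$: a derivative check shows $f'(x) > 0$ precisely when $x > 3N/4$. The hypothesis $C_{\text{{\sf T}}^{\text{3}}} > \frac{11N}{4}$ is exactly what is needed so that $C_{\text{{\sf T}}^{\text{3}}} - 2N > \frac{3N}{4}$, placing both $V_{\text{apx}}$ and $C_{\text{{\sf T}}^{\text{3}}} - 2N$ in the monotone regime; hence $f(V_{\text{apx}}) \geq f(C_{\text{{\sf T}}^{\text{3}}} - 2N) = C_{\text{{\sf T}}^{\text{3}}} - 2N - 2\sqrt{N(C_{\text{{\sf T}}^{\text{3}}} - 7N/4)}$, which is the claimed bound. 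The main obstacle I anticipate is purely bookkeeping: carefully handling the unassigned clients $Z_1 \cup Z_2 \cup Z_3$ when invoking Lemma~\ref{lemmadown} per ${\sf AP}$, and verifying that the monotonicity threshold $3N/4$ is preserved under every substitution along the chain $V_{\text{apx}} \geq C_{\text{det}} - N > C_{\text{{\sf T}}^{\text{3}}} - 2N > \frac{3N}{4}$. Beyond that, everything is a direct composition of results established earlier in the paper.
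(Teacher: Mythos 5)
Your proposal is correct and follows essentially the same route as the paper, which likewise chains Theorem~\ref{algorithm} and the right inequality of Theorem~\ref{maintheorem} to get $\sum_{i,j}\lfloor x^{*}_{ij}\rfloor \geq C_{\text{det}}-N \geq C_{\text{{\sf T}}^{\text{3}}}-2N$ and then repeats the argument of Corollary~\ref{run} (Lemma~\ref{lemmadown} plus Cauchy--Schwarz plus monotonicity of $x-2\sqrt{N(x+N/4)}$ for $x>\frac{3N}{4}$). The extra bookkeeping you flag about placing the non-integrally-assigned clients is handled in the paper by assigning them arbitrarily and noting the resulting $l_i$'s can only be larger, exactly as you propose.
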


\begin{proof}
Let $C_{\text{det}}^{\text{apx}}\triangleq \sum_{i=1}^{N}\sum_{j=1}^{M}\lfloor x^{*}_{ij}\rfloor$ denote the  objective value of the rounded down basic optimal solution to  LR-RP.
According to Theorem \ref{maintheorem} and Theorem \ref{algorithm}, $C_{\text{det}}^{\text{apx}}\geq C_{\text{det}}-N\geq C_{\text{{\sf T}}^{\text{3}}}-2N$. 
Therefore, by using the similar argument as  in Corollary \ref{run} the proof will be complete.
\end{proof}

\section{\large Extensions}\label{extensions}
In this section we investigate four important extensions to our main problem formulation:
time-varying channels and real-time traffic; weighted total timely throughput;
lifting the restriction on splitting packets among {\sf AP}'s;
and fading channels, {\sf AP}'s accessing multiple clients simultaneously, clients receiving packets from multiple {\sf AP}'s, and  rate adaptation.
 
%First, we extend our model to allow  time-varying channels, where channels' statistics change over time. 
%In addition, we consider time-varying traffic where in each interval a client has request for a variable number of packets (possibly zero). 
%Second, we investigate the problem of maximizing weighted total timely throughput, and present  results similar to  Theorem %\ref{maintheorem} for both of these extensions.
%Third, we extend the problem setting in Section \ref{model}   to the case that the packets (or clients) are not necessarily partitioned among {\sf AP}'s at the beginning of each interval.
%Fourth, we  extend the problem setting to allow for fading channels, {\sf AP}'s accessing multiple clients simultaneously, clients %receiving packets from multiple {\sf AP}'s, and  rate adaptation.

\subsection{Time-Varying Channels and Real-Time Traffic}\label{traffic}
So far, we have  assumed that at the beginning of each interval each client has request for exactly one packet. 
This assumption can be modified by considering a time-varying packet generation pattern, in which for every interval, each client might have request for no packets, or  for multiple packets. 
In addition, the number of packets requested by clients for one interval might depend on the number of packets requested for other intervals.
Furthermore, we have so far assumed that  channel success probabilities do not change over time. 
But, this model can be generalized to include time-varying channels with statistical behaviors that are not necessarily independent of one another.

We capture the above two generalizations by considering an irreducible Finite-State Markov Chain (FSMC), in which each state jointly specifies the number of packets requested by each client, as well as the channel states for different channels during an interval. 
When a new interval begins, the Markov Chain might change its state, and in this case, packets for a new subset of clients are requested, and the channel reliabilities change.
%By considering this FSMC we allow for variable-bit-rate traffic, as well as variations in channel reliabilities over time. 
Denote the set of all possible states of the FSMC by $\mathcal C$. Each state $\lambda\in\mathcal C$ specifies a pair $(\vec B(\lambda),\bold{P}(\lambda))$, where $\vec B(\lambda)\triangleq [B_1(\lambda),B_2(\lambda),\ldots ,B_M(\lambda)]$, such that $B_j(\lambda)$ is the number of the packets requested by client $j$, and $\bold{P}(\lambda)$ is an $N\times M$ matrix that contains channel success probabilities.
It is assumed that channel success probabilities remain the same during each interval, and are known to the {\sf AP}'s.
% at the beginning of the interval. 

Our objective is again to find $C_{\text{{\sf T}}^{\text{3}}}$.
We use a similar argument to the one in \cite{Realtime} for extensions to time-varying channels and variable-bit-rate traffic. 
In particular, 
we decompose the set of intervals into different subsets, where each subset contains the intervals that are in the same state of the FSMC.
%Consider all the intervals for which the system is at state $\lambda$.
For those intervals in which the system is at state $\lambda$, we convert our problem to an instance of the problem described in Section \ref{model}. 
More particularly, for the system described by state $\lambda$, we ignore all the clients that do not have packet request. 
Furthermore, for any $j\in\{1,2,\ldots ,M\}$ where $B_j(\lambda)>1$ we  consider $B_j(\lambda)-1$ virtual clients, such that the channel between $\text {{\sf AP}}_i$ and each of those virtual clients would have success probability $P_{ij}(\lambda)$. This means that these virtual clients are copies of $\text{ {\sf Rx}}_j$. Consequently, for the intervals for which the system is at state $\lambda$ the problem becomes the same as described in Section \ref{model}.
With the same argument as in proof of Theorem \ref{maintheorem}, there exists a fixed assignment $\vec\Pi(\lambda)$, which if used together with its corresponding optimal ordering for such intervals, achieves the optimal $\text{{\sf T}}^{\text{3}}$ for those intervals.
We denote this optimal $\text{{\sf T}}^{\text{3}}$ by $C_{\text{{\sf T}}^{\text{3}}}(\lambda)$.
In addition, let $C_{\text{det}}(\lambda)$ denote the solution to the relaxed problem when the system is at state $\lambda$. 
For any state $\lambda\in \mathcal C$, with the same argument as in the proof of Theorem \ref{maintheorem}, we have
%\begin{equation}
$C_{\text{det}}(\lambda)-2\sqrt {N(C_{\text{det}}(\lambda)+\frac{N}{4})}< C_{\text{{\sf T}}^{\text{3}}}(\lambda)<C_{\text{det}}(\lambda)+N.$
%\end{equation}
Now, let $\pi_\lambda$ denote the steady state probability of  $\lambda$. Therefore, 
%\begin{eqnarray}
$C_{\text{{\sf T}}^{\text{3}}}=\sum_{\lambda\in \mathcal C} \pi_\lambda C_{\text{{\sf T}}^{\text{3}}}(\lambda),  C_{\text{det}}=\sum_{\lambda\in \mathcal C} \pi_\lambda C_{\text{det}}(\lambda).$  %\nonumber
%\end{eqnarray}
Hence, 
\begin{equation}
C_{\text{det}}-2\sum_{\lambda\in\mathcal C}\pi_\lambda \sqrt {N(C_{\text{det}}(\lambda)+\frac{N}{4})}< C_{\text{{\sf T}}^{\text{3}}}<C_{\text{det}}+N.\label{aslan}
\end{equation}
On the other hand, by using Cauchy-Schwarz inequality we have
\begin{align}
\sum_{\lambda\in\mathcal C}\pi_\lambda \sqrt {N(C_{\text{det}}(\lambda)+\frac{N}{4})} & \leq \sqrt{\sum_{\lambda\in\mathcal C}\pi_\lambda}\sqrt {\sum_{\lambda\in \mathcal C}N\pi_\lambda(C_{\text{det}}(\lambda)+\frac{N}{4})}=\sqrt {N(C_{\text{det}}+\frac{N}{4})}.\label{gg}
\end{align}
Putting (\ref{aslan}) and (\ref{gg}) together we get
%\begin{equation}
$C_{\text{det}}-2\sqrt {N(C_{\text{det}}+\frac{N}{4})}< C_{\text{{\sf T}}^{\text{3}}}<C_{\text{det}}+N,$
%\end{equation}
which is the same as the result in Theorem 1.

\begin{theorem}
For the network model described in Section \ref{model} consider the extension to time-varying channels and real-time traffic, modeled by the FSMC described in Section \ref{traffic}, where each state of FSMC captures both the success probability of channels and the number of packets for each client during an interval. We have 
\begin{equation}
C_{\text{det}}-2\sqrt {N(C_{\text{det}}+\frac{N}{4})}< C_{\text{{\sf T}}^{\text{3}}}<C_{\text{det}}+N.\nonumber
\end{equation}
\end{theorem}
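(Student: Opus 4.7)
The plan is to reduce the FSMC extension to a $\pi_\lambda$-weighted combination of independent instances of Theorem~\ref{maintheorem}, one per Markov state. First I would invoke ergodicity of the irreducible FSMC: the long-run fraction of intervals spent in state $\lambda$ converges almost surely to $\pi_\lambda$, so it is natural to decompose both $C_{\text{\sf T}^{\text{3}}}$ and $C_{\text{det}}$ along states. For a single state $\lambda$, I would reduce to the base model of Section~\ref{model} by dropping clients with $B_j(\lambda)=0$ and introducing $B_j(\lambda)$ virtual copies of each requesting client, where each copy sees channel success probability $P_{ij}(\lambda)$; since erasures across copies are independent, this is a faithful reduction to a stationary problem with fixed channel matrix.

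Next I would apply Lemma~\ref{lemorder} within each state to argue that only greedy static policies need be considered during the sub-sequence of $\lambda$-intervals, and that the per-state optima decouple: $C_{\text{\sf T}^{\text{3}}}=\sum_{\lambda\in\mathcal C}\pi_\lambda C_{\text{\sf T}^{\text{3}}}(\lambda)$ and $C_{\text{det}}=\sum_{\lambda\in\mathcal C}\pi_\lambda C_{\text{det}}(\lambda)$. Theorem~\ref{maintheorem} then gives, state by state,
\[
C_{\text{det}}(\lambda)-2\sqrt{N(C_{\text{det}}(\lambda)+\tfrac{N}{4})}<C_{\text{\sf T}^{\text{3}}}(\lambda)<C_{\text{det}}(\lambda)+N.
\]

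The final step is to aggregate. The upper bound is immediate by linearity, since $\sum_\lambda\pi_\lambda(C_{\text{det}}(\lambda)+N)=C_{\text{det}}+N$. The lower bound is subtler because it involves a sum of square roots; here I would use Cauchy--Schwarz in the form $\sum_\lambda\pi_\lambda\sqrt{a_\lambda}\le\sqrt{\sum_\lambda\pi_\lambda}\,\sqrt{\sum_\lambda\pi_\lambda a_\lambda}=\sqrt{\sum_\lambda\pi_\lambda a_\lambda}$ with $a_\lambda=N(C_{\text{det}}(\lambda)+N/4)$, collapsing the penalty term into $2\sqrt{N(C_{\text{det}}+N/4)}$.

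The main obstacle I anticipate is the state-decomposition step: rigorously justifying that the global supremum factors as $\sum_\lambda\pi_\lambda C_{\text{\sf T}^{\text{3}}}(\lambda)$ despite the fact that a scheduler could in principle exploit cross-interval history (including knowledge of the FSMC state trajectory) to coordinate between states. The resolution should mirror the argument underlying Lemma~\ref{lemorder}: by the strong Markov property, successive visits to a given state $\lambda$ are i.i.d.\ in terms of channel realizations and requested packets, so any policy can be matched in the limsup sense by a policy that is greedy-static conditional on the current state. Once this reduction is in place, the rest of the argument is just algebraic aggregation plus Cauchy--Schwarz, which should follow essentially verbatim from the proof of Theorem~\ref{maintheorem}.
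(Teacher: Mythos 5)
Your proposal is correct and follows essentially the same route as the paper: decompose by FSMC state, reduce each state to the base model via virtual clients (the paper adds $B_j(\lambda)-1$ copies so that each requested packet corresponds to one client, which is what you intend), apply Theorem~\ref{maintheorem} per state, and aggregate with linearity for the upper bound and the Cauchy--Schwarz step $\sum_\lambda\pi_\lambda\sqrt{a_\lambda}\le\sqrt{\sum_\lambda\pi_\lambda a_\lambda}$ for the lower bound. The state-decomposition issue you flag is handled in the paper exactly as you suggest, by the same time-homogeneity argument underlying Lemma~\ref{lemorder} applied to the sub-sequence of intervals in each state.
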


\subsection{Weighted Total Timely Throughput}\label{weightedt3}
In Section \ref{model} we considered the same importance for all the flows in the network; and our objective was to maximize $\text{{\sf T}}^{\text{3}}$. However, it might be the case that in a network some of the flows are more important than the others, and should be prioritized accordingly. 
In this section the formulation remains the same as the one described in Section \ref{model}, except the objective function, which rather than maximizing $\text{{\sf T}}^{\text{3}}$, maximizes a weighted average of timely throughputs.
In particular, weighted total timely throughput of the scheduling policy $\eta$, $w\text{-{\sf T}}^{\text{3}}(\eta)$, is defined as
\begin{equation}\label{wt3rj}
w\text{-{\sf T}}^{\text{3}}(\eta)\triangleq \sum_{j=1}^{M}\omega_j R_j(\eta),
\end{equation}
where  $\omega_j$'s are arbitrary weights greater than $1$ $(j=1,2,\ldots, M)$.
Our objective is to find 
\begin{equation}\label{weightedmain}
C_{w\text{-T}^{\text{3}}}\triangleq\sup_{\eta\in\mathcal S}\quad w\text{-{\sf T}}^{\text{3}}(\eta).
\end{equation}

For this extension of the problem we again propose the channel relaxation which results in a new integer program. This integer program is again a GAP.
The formulation of the relaxed problem is as follows:
\begin{eqnarray}
C_{w\text{-det}}\triangleq &&\max \sum_{i=1}^{N}\sum_{j=1}^{M} x_{ij}\omega_{j}\label{weighteddet}\\
&&\text{s.t.}\quad  \sum_{j=1}^{M} \frac{x_{ij}}{p_{ij}}\leq \tau,\quad \sum_{i=1}^{N}x_{ij}\leq 1, \quad\quad x_{ij}\in \{0,1\},\qquad i=[1:N]\quad j=[1:M]\nonumber.
\end{eqnarray}

The following theorem, which is proved in the Appendix \ref{weightedproof}, states that the value of the solution to (\ref{weightedmain}) is asymptotically the same as the value of the solution to (\ref{weighteddet}) as $C_{w\text{-{\sf T}}^{\text{3}}}\to\infty$ (or equivalently $C_{w\text{-det}}\to\infty$).
\begin{theorem}\label{weightedtheorem}
Let $C_{w\text{-{\sf T}}^{\text{3}}}$ denote the value of the solution to (\ref{weightedmain}).
Further, let $C_{w\text{-det}}$ denote the value of the solution to (\ref{weighteddet}). Then,
for $\omega_{max}=\max \{\omega_1,\omega_2,\ldots,\omega_M\}$,
\begin{equation}
C_{w\text{-det}}-2\omega_{max}\sqrt {N(C_{w\text{-det}}+\frac{N}{4})}<C_{w\text{-{\sf T}}^{\text{3}}}<  C_{w\text{-det}}+N\omega_{max}.\label{weightedresult}
\end{equation}
\end{theorem}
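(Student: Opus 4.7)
The plan is to propagate the weights $\omega_j$ through the two-sided argument of Theorem \ref{maintheorem}. A weighted analogue of Lemma \ref{lemorder} is immediate: the time-homogeneity argument in Appendix \ref{LemmaInv} is indifferent to the objective, so $C_{w\text{-{\sf T}}^{\text{3}}}$ is achieved by a static policy with some partition $\vec\Pi^*$ and per-AP orderings $\sigma_i^*$ (which, unlike the unweighted case, need not be greedy in $p_{ij}$). This reduces the problem to a single interval and lets us write $C_{w\text{-{\sf T}}^{\text{3}}}=\sum_{i=1}^N E[W_i]$, where $W_i=\sum_{k=1}^{Y_i}\omega_{\sigma_i^*(k)}$ is the weight collected by $\text{\sf AP}_i$ and $Y_i$ is defined as in the proof of Theorem \ref{maintheorem}.

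For the upper bound I would use LP duality on the per-AP knapsack. Let $(y_i^*,z_i^*)$ be a dual optimum of the LP relaxation of (\ref{weighteddet}) restricted to the clients of $\Pi_i^*$. Dual feasibility $\omega_j\le y_i^*/p_{ij}+z_{ij}^*$, together with $\Pr(j\text{ delivered})\le 1$ and $z_{ij}^*\ge 0$, gives $E[W_i]\le y_i^* A_i+\sum_j z_{ij}^*$, where $A_i:=\sum_k(1/p_{i\sigma_i^*(k)})\Pr(Y_i\ge k)=E\bigl[\sum_{k\le Y_i}1/p_{i\sigma_i^*(k)}\bigr]$. I would then establish $A_i\le\tau$ by a stopping-time argument with $T=Y_i+1$: Wald's identity for independent (non-identically distributed) geometric summands, combined with the memoryless property, produces $A_i=\tau$ whenever $\text{\sf AP}_i$ always exhausts its time before its client list, while $A_i\le\sum_{j\in\Pi_i^*}1/p_{ij}\le\tau$ in the remaining boundary case. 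Substituting yields $E[W_i]\le y_i^*\tau+\sum_j z_{ij}^*=R_i^{\text{LP}}$, and the standard knapsack LP--IP gap bound $R_i^{\text{LP}}\le R_i+\omega_{max}$ (round down the unique fractional variable of a basic LP optimum) then gives $E[W_i]<R_i+\omega_{max}$, where $R_i$ is the per-AP integer knapsack optimum. Since $\sum_i R_i$ is a feasible integer objective value for (\ref{weighteddet}), summing over $i$ delivers $C_{w\text{-{\sf T}}^{\text{3}}}<C_{w\text{-det}}+N\omega_{max}$.

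For the lower bound I would translate the left-inequality proof of Theorem \ref{maintheorem} nearly verbatim. Take an integer optimum of (\ref{weighteddet}), giving an assignment $\vec\Pi_{\text{det}}$ and selected subsets $S_i^*\subseteq\Pi_{\text{det},i}$ with $\sum_{j\in S_i^*}1/p_{ij}\le\tau$ and $\sum_i\sum_{j\in S_i^*}\omega_j=C_{w\text{-det}}$. Run the greedy static policy that on $\text{\sf AP}_i$ transmits only the clients of $S_i^*$ in decreasing order of $p_{ij}$. Writing $m_i=|S_i^*|$, Lemma \ref{lemmadown} applied with $l=m_i$ gives $E[Y_i]>m_i-2\sqrt{m_i+1/4}$. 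Since each undelivered member of $S_i^*$ costs at most $\omega_{max}$, $W_i\ge\sum_{j\in S_i^*}\omega_j-\omega_{max}(m_i-Y_i)$ pointwise, so $E[W_i]>\sum_{j\in S_i^*}\omega_j-2\omega_{max}\sqrt{m_i+1/4}$. Summing over $i$, invoking Cauchy--Schwarz on $\sum_i\sqrt{m_i+1/4}$, and using $\sum_i m_i\le\sum_i\sum_{j\in S_i^*}\omega_j=C_{w\text{-det}}$ (where the hypothesis $\omega_j\ge 1$ enters crucially) yields $C_{w\text{-{\sf T}}^{\text{3}}}>C_{w\text{-det}}-2\omega_{max}\sqrt{N(C_{w\text{-det}}+N/4)}$.

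The hard step is the upper bound, specifically the inequality $A_i\le\tau$ for an arbitrary ordering. The naive weighted analogue of the unweighted bound---replacing $E[Y_i]<l_i+1$ with $E[W_i]<\sum_{k\le l_i}\omega_{\sigma_i^*(k)}+\omega_{max}$---actually fails, because when the optimal ordering concentrates weight on positions beyond $l_i$ the tail $\omega_{max}\sum_{k>l_i}\Pr(Y_i\ge k)$ can grow like $\omega_{max}\sqrt{l_i}$, far larger than $\omega_{max}$. The LP-duality route sidesteps this pitfall, but only once $A_i\le\tau$ has been established uniformly in the ordering via the stopping-time identity, with the boundary case $Y_i=q_i$ treated separately.
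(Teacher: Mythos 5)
Your lower bound is essentially the paper's own argument: restrict each {\sf AP} to the clients selected by an integer optimum of (\ref{weighteddet}), apply Lemma \ref{lemmadown} per {\sf AP}, charge at most $\omega_{max}$ per undelivered packet, and finish with Cauchy--Schwarz plus $\omega_j\geq 1$ (the paper packages this as Lemma \ref{weightedlemmadown}). Your upper bound is a genuinely different route, and your stated reason for taking it is incorrect. The paper first shows, by a pairwise-interchange argument (Lemma \ref{wkomaki}), that the optimal per-{\sf AP} ordering is by decreasing $\omega_j p_{ij}$, and then proves exactly the ``naive weighted analogue'' you claim fails: Lemma \ref{weightedlemmaup} gives $E[W_i]<\sum_{k\leq l_i}\omega_{\sigma(k)}+\omega_{max}$ for that ordering. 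Your objection counts only the gain $\omega_{max}\sum_{k>l_i}\Pr(Y_i\geq k)$, which can indeed be of order $\omega_{max}\sqrt{l_i}$; but under the $\omega p$-ordering the shortfall on the first $l_i$ positions is at least as large, and in the paper's computation the two contributions collapse to $\sum_{i<l}\Pr(Y=i)\sum_{j=i+1}^{l}(\omega_l p_l/p_j-\omega_j)\leq 0$, precisely because $\omega_j p_j\geq \omega_l p_l$ for $j\leq l$. So the direct route does not fail; it merely requires identifying the right ordering, which your duality argument elegantly avoids.

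The duality route itself is sound and, to its credit, ordering-free, but the one step you single out as hard is not justified as written. Wald with the stopping time $T=Y_i+1$ yields $\sum_k (1/p_{\sigma(k)})\Pr(Y_i\geq k-1)=E[\sum_{k\leq Y_i+1}G_k]$, an index-shifted version of $A_i$, and your ``remaining boundary case'' rests on $\sum_{j}1/p_{ij}\leq\tau$ over the assigned clients, which is false in general: an {\sf AP} can be assigned far more load than fits in an interval and still finish its entire list with positive probability (e.g.\ two clients with $p=1/2$ and $\tau=3$). The inequality $A_i\leq\tau$ is nonetheless true and provable uniformly: the process $\sum_{k\leq N(t)}1/p_{\sigma(k)}-t$, where $N(t)$ counts deliveries by slot $t$, has one-step drift $p\cdot(1/p)-1=0$ while undelivered clients remain and nonpositive drift afterwards, hence is a supermartingale started at $0$, and optional stopping at the deterministic time $\tau$ gives $A_i=E[\sum_{k\leq Y_i}1/p_{\sigma(k)}]\leq\tau$ (equivalently, append dummy clients with $p=1$, $\omega=0$ to force your first case). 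With that repair, the chain $E[W_i]\leq y_i^*\tau+\sum_j z_{ij}^*=R_i^{\mathrm{LP}}<R_i+\omega_{max}$ together with $\sum_i R_i\leq C_{w\text{-det}}$ is correct and delivers the stated upper bound.
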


\subsection{Dynamic Splitting}\label{optimalonline}
 We  assumed in  Section \ref{model} that the packets are partitioned between {\sf AP}'s at the beginning of each interval,  to reduce the overhead for tracking ACKs and NACKs in the network. If packets are available to all {\sf AP}'s for transmission (i.e., no partitioning is done beforehand), in order to maximize the total timely throughput, each {\sf AP} has to constantly track all ACKs and NACKs of all clients, in order to know whether a  packet has already been delivered to its destination. 
%This may impose significant overhead in implementing the scheduling algorithm (as well as robustness to missing ACKs/NACKs).
Here we lift the partitioning restriction to understand how much capacity gain can be obtained. 
We first describe the model, and formulate the problem as a Markov Decision Process (MDP).
We then  discuss the tractability of solving the MDP,  propose a  fast greedy heuristic for the MDP, and analyze its computational complexity. 
Finally, we show the performance of the proposed heuristic via numerical results.
% that the proposed heuristic performs well compared to the optimal scheduler. Moreover, the numerical results indicate that the optimal scheduler has marginal gain over the best splitting algorithm, which establishes the fact that one will not lose much by only focusing on splitting algorithms.

\subsubsection{Network Model}
We consider the same network configuration, time model, channel model, and packet arrival as in Section \ref{model}.
Nevertheless, the packets requested for each interval are now available to all {\sf AP}'s (i.e. they are not split among the {\sf AP}'s at the beginning of each interval), and a packet might be served by several {\sf AP}'s.
The {\sf AP}'s can then dynamically choose what packet to transmit in a coordinated manner at each time-slot.
The choice of the packet to be sent by each {\sf AP} may be based on the channels and the past outcomes of the transmissions.
Our objective is to find a scheduling policy which maximizes the total timely throughput of the system, as defined in Section \ref{model}. We call the optimal scheduling the ``Optimal Online Scheduling", since each {\sf AP} has to decide what the optimal strategy is at each time-slot.

\subsubsection{An MDP Formulation}
One can argue in a similar way as in Lemma 1 that due to the time-homogeneous structure of the system, the maximal total timely throughput is equal to the maximum achievable expected number of deliveries in one interval. 
Therefore, the new problem can be formulated as a finite-horizon Markov Decision Process (MDP), as detailed below:
%\begin{itemize}

\underline{State Space:} 
%The state of the system must be a sufficient statistic that conveys all the information at the beginning of each decision.
The state of the system is an $(M+1)$-tuple where the first $M$ components are binary variables $\{Q_j(t)\}_{j=1}^{M}$, and $Q_j(t)=1$ if $Rx_j$ has not yet received its packet successfully, and $Q_j(t)=0$ otherwise.
The ($M+1$)-th component is the time-slot that the system is currently at, i.e. $Q_{M+1}\in \{1,2,\ldots ,\tau\}$.
We denote the state space by $\mathcal Q$.

\underline{Action Space:}
 For any state $s\in \mathcal Q$ corresponding to set of clients $U(s)$ not having received their packets yet, the action space is an $N$-tuple $(i_1,\ldots, i_N)$ where $i_k\in U(s)\cup \{0\}$ for $k\in \{1,2,\ldots, N\}$.
If $i_k=j$, it means that client $j$ is served by ${\sf AP}_k$, and if $j=0$,  ${\sf AP}_k$ will be idle during the time-slot.
%The action space is time-invariant; i.e. the set of possible actions does not depend on time; but of course, it depends on the state.
A policy $\mathcal P$ is a function mapping the state space to action space.

\underline{Reward:}
 For successful delivery of each packet, a reward equal to $1$ is obtained.

\underline{Transition Function:}
For $t<\tau$, transition probability from state $s=(q_1,\ldots , q_M,t)$ to state \\$s'=(q'_1,\ldots , q'_M,t+1)$  using action $a(s)$ is simply probability of the event in which in one time-slot using action $a(s)$ the state changes from $s$ to $s'$.
\footnote[1]{More specifically, the transition probability is
$Pr(\cap_{j=1}^{M} (\cup_{1\leq k\leq N, i_k=j}B(p_{kj})=q_j-q'_j)),$
where $B(p)$ is a Bernoulli random variable with parameter $p$.}

 \underline{Objective:} We want to find the optimal policy that maximizes the expected number of deliveries in $\tau$ time-slots.
The objective is similar to the objective initially considered in Section II.
%\end{itemize}

%\subsection{Optimization Problem As A DP}
One can use the common technique of using Dynamic Programming to calculate the maximal value.
More specifically, for 2 {\sf AP}'s ($N=2$), the optimization problem reduces to the following.

Let $V^t(U)$ denote the maximum expected number of deliveries for the set of packets $U$ and during time-slots $t,t+1,\ldots ,\tau$.
Therefore, the objective can be rewritten as follows.
\begin{align}
\text{Objective:} \qquad    V^{1}(\{1&,2,\ldots, M\}),\nonumber\\
  \text{where}  \qquad V^t(U) &=\max_{\{i,j\}\in U}\{ p_{1i}p_{2j}[2-\mathbb I (i=j)+V^{t+1}(U\setminus\{i,j\})]\nonumber\\
&+p_{1i}(1-p_{2j})[1+V^{t+1}(U\setminus\{i\})]\nonumber\\
\qquad\qquad\qquad\qquad   &+(1-p_{1i})p_{2j}[1+V^{t+1}(U\setminus\{j\})]+(1-p_{1i})(1-p_{2j})V^{t+1}(U)\},\nonumber
\end{align}
and $V^{\tau}(U)=\max_{\{i,j\}\in U} [p_{1i}+p_{2j}-p_{1i}p_{2j}\mathbb I(i=j)]$, where $\mathbb I(.)$ is the indicator function.
%Furthermore, $\delta (i-j)=1$ if $i=j$, and $\delta (i-j)=0$ if $i\ne j$.

Computational complexity of solving the DP is polynomial in $\tau$, but exponential in $M$.
This complexity grows even faster as $N>2$.
Hence, calculating the optimal solution is challenging. However, we will propose a fast greedy heuristic that  approximates the optimal solution well.

\subsubsection{A Greedy Heuristic}

The greedy heuristic (Algorithm \ref{greed}) essentially ignores time, and at each time-slot sends a subset of packets by the {\sf AP}'s which would maximize the expected number of deliveries for that specific time-slot. 
Moreover, according to Lemma \ref{greedlemma}, for finding the subset of packets which results in the maximum expected delivery for a time-slot, it is not necessary to search over all $N^M$ possible subsets; instead, it is sufficient to only focus on  $N^N$  subset of them.
Algorithm \ref{greed} is repeated for all intervals.

\begin{algorithm}
\caption{}
\begin{algorithmic}\label{greed} 

\STATE Set $t=1$ and $U=\{1,2,\ldots ,M\}$. 
\STATE Create $N$ vectors $L_1,\ldots ,L_N$, and put the packets $\{1,2,\ldots ,M\}$ in all of them.
\STATE Order packets in each $L_k$, $k\in\{1,2,\ldots ,N\}$, according to $p_{kj}$'s and in decreasing order.
\WHILE {$t\leq \tau , U\ne \Phi$}
\STATE Find $[j_1,\ldots ,j_N]=\arg\max_{j_1\in L_1(1:N),\ldots, j_N\in L_N(1:N)} \{\sum_{i=1}^{M}(1-\Pi_{j_m=i, 1\leq m\leq N}(1-p_{mi}))\}.$
\STATE Transmit $j_1,j_2,\ldots, j_N$ by ${\sf AP}_1,{\sf AP}_2,\ldots, {\sf AP}_N$ respectively.
\STATE Update $L_1,L_2,\ldots ,L_N$ according to the outcome of transmissions (remove any of $j_1,j_2,\ldots, j_N$ from them which is successfully delivered, and shift the queues to the left to fill the gap of the removed packets). Also, remove the delivered packets from $U$.
\STATE $t\leftarrow t+1$
\ENDWHILE

\end{algorithmic}
\end{algorithm}

In fact, $\sum_{i=1}^{M}(1-\Pi_{j_m=i, 1\leq m\leq N}(1-p_{mi}))\}$ is the expected number of deliveries for a time-slot, when $j_m$ is transmitted by ${\sf AP}_m$.
The following lemma establishes why if packets are ordered in the queues of {\sf AP}'s, then for finding the subset of packets which results in maximum expected delivery for the time-slot, it is sufficient to just look at the first $N$ elements of each queue.

\begin{lemma}\label{greedlemma}
Suppose $[j_1:j_N]=\arg\max_{j_1\in L_1(1:N),\ldots, j_N\in L_1(1:N)} \{\sum_{i=1}^{M}(1-\Pi_{j_m=i, 1\leq m\leq N}(1-p_{mi}))\}.$
If $1\geq p_{kL_k(1)}\geq p_{kL_k(2)}\geq \ldots, \geq p_{kL_k(|U|)}\geq 0$ for $k\in\{1,2,\ldots ,N\}$,
then
$$\sum_{i=1}^{M}(1-\Pi_{j_m=i, 1\leq m\leq N}(1-p_{mi}))
=\max_{j'_1,\ldots, j'_N\in \{1,2,\ldots, M\}} \{\sum_{i=1}^{M}(1-\Pi_{j'_m=i, 1\leq m\leq N}(1-p_{mi}))\}.$$
\end{lemma}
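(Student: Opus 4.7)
The plan is to prove the inequality $\sum_i (1 - \prod_{j_m=i}(1-p_{mi})) \ge \max_{j'_1,\dots,j'_N \in \{1,\dots,M\}}\sum_i(1-\prod_{j'_m=i}(1-p_{mi}))$ (the reverse direction is automatic, since restricting the feasible set only shrinks the maximum). To do this I would start with any unconstrained optimal assignment $(j'_1,\dots,j'_N)$ and show that if some $j'_m$ lies outside $L_m(1:N)$, I can swap it for a better-ranked packet without decreasing the objective. Iterating this swap will produce an assignment with all coordinates in the top-$N$ sublists that achieves the unconstrained maximum, which is exactly what the lemma asserts.

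The key combinatorial observation is pigeonhole: suppose $j'_m \notin L_m(1:N)$. The sublist $L_m(1:N)$ contains $N$ distinct packet indices, while at most $N-1$ of them can coincide with some $j'_k$ for $k \ne m$. Hence there exists $i^\star \in L_m(1:N)$ that is not transmitted by any AP other than $m$, and by the ordering hypothesis $p_{m i^\star} \ge p_{m j'_m}$. I then define the modified assignment by replacing $j'_m$ with $i^\star$ and compare objective values packet by packet. Only the contributions of packets $i^\star$ and $j'_m$ change; every other packet's delivery probability is untouched.

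The heart of the argument is a case analysis on the old value $j'_m$. In \textbf{Case 1}, no other AP was transmitting $j'_m$; then the change in the objective is $p_{m i^\star} - p_{m j'_m} \ge 0$. In \textbf{Case 2}, one or more other APs were also transmitting $j'_m$; then $i^\star$ contributes a fresh $p_{m i^\star}$ to the sum, while $j'_m$'s delivery probability drops from $1 - (1-p_{m j'_m})\prod_{k \in S}(1-p_{k j'_m})$ to $1 - \prod_{k \in S}(1-p_{k j'_m})$ (where $S$ is the set of remaining APs targeting $j'_m$), a drop of exactly $p_{m j'_m}\prod_{k \in S}(1-p_{k j'_m})$. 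The net change is
\[
p_{m i^\star} - p_{m j'_m}\prod_{k\in S}(1-p_{k j'_m}) \;\ge\; p_{m i^\star} - p_{m j'_m} \;\ge\; 0,
\]
so the swap again weakly improves the objective.

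I expect the main obstacle to be bookkeeping rather than any deep inequality: one must verify carefully that no packet other than $j'_m$ and $i^\star$ sees its delivery probability altered by the swap (which requires that $i^\star$ was truly untouched beforehand, giving the pigeonhole step its teeth), and that after a finite number of such swaps the process terminates with every coordinate inside its top-$N$ window. Termination is easy to argue via a monovariant such as $\sum_m \mathrm{rank}_{L_m}(j'_m)$, which strictly decreases at every swap while the objective is monotone nondecreasing, so after finitely many iterations we reach a configuration with all $j_m \in L_m(1:N)$ of equal or greater objective value, completing the proof.
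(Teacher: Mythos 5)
Your proposal is correct and follows essentially the same route as the paper's proof: the identical pigeonhole observation (the $N$ elements of $L_m(1:N)$ cannot all be covered by the other $N-1$ APs, so some untransmitted $i^\star$ with $p_{mi^\star}\geq p_{mj'_m}$ exists) followed by the same swap argument. Your write-up is in fact somewhat more careful than the paper's, which phrases the exchange as a one-step contradiction and omits both the explicit case analysis showing the objective cannot decrease and the termination argument for iterated swaps.
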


\begin{proof}
Consider the $N$ vectors $L_1,\ldots ,L_N$, defined in Algorithm \ref{greed}, where  packets in each $L_k$, $k\in\{1,2,\ldots ,N\}$ are ordered according to $p_{kj}$'s and in decreasing order, meaning that $1\geq p_{kL_k(1)}\geq p_{kL_k(2)}\geq \ldots, \geq p_{kL_k(M)}\geq 0$ for $k\in\{1,2,\ldots ,N\}$.
Suppose there is no subset of packets $j_1,j_2,\ldots, j_N$ such that each $j_k$ is one of the first $N$ elements of $L_k$, and 
$j_1,j_2,\ldots, j_N$ maximizes the expected deliveries over a time-slot for the set of packets $\{1,2,\ldots, M\}$.
More precisely, suppose there is no $j_1,j_2,\ldots, j_N$  such that $j_k\in L_k(1:N)$ for $k\in\{1,2,\ldots, N\}$, and it maximizes the $\sum_{i=1}^{M}(1-\Pi_{j_m=i, 1\leq m\leq N}(1-p_{mi}))\}$. 
Consider an arbitrary $j_1,j_2,\ldots, j_N$ which maximizes the expected delivery $\sum_{i=1}^{M}(1-\Pi_{j_m=i, 1\leq m\leq N}(1-p_{mi}))\}$.
Therefore, there is one of the $j_k$'s that does not belong to the first $N$ elements of $L_k$.
More precisely, there exists a $k$, $k\in \{1,2,\ldots ,N\}$, for which $j_k\notin L_k(1:N)$. 
Therefore, there is at least one of the first $N$ elements of $L_k$ which is not going to be transmitted by any {\sf AP}.
In other words, there must exist an $l$ such that $l\in L_k(1:N)$, and $l\ne j_i$ for $i\in\{1,2,\ldots ,N\}$.
Since, $p_{kl}> p_{kj_k}$, by serving $l$ on ${\sf AP}_k$ the expected deliveries, $\sum_{i=1}^{M}(1-\Pi_{j_m=i, 1\leq m\leq N}(1-p_{mi}))\}$, will only increase. This contradicts the assumption that $j_1,j_2,\ldots, j_N$ produce the maximal expected number of deliveries; 
and therefore, $j_k\in L_k(1:N)$ for $k\in \{1,2,\ldots, N\}$, and the proof is complete.
Note that although the lemma and its proof are stated for the set of packets $\{1,2,\ldots, M\}$, they hold for any arbitrary set of packets $U$, too.
\end{proof}

The total processing time of Algorithm \ref{greed} is $O(\tau MN^{N+1})$; since
the while loop is run $\tau$ times, and finding $j_1,j_2,\ldots ,j_N$ takes $O(N^NMN)=O(MN^{N+1})$.
% Therefore, the total processing time is $O(\tau MN^{N+1})$.

\subsubsection{Numerical Results}

We compare the total timely throughput capacity for optimal online policies, splitting policies ($C_{{\sf T}^3}$), and greedy heuristic (Algorithm \ref{greed}). 
%The result shows two things. First, the solution to Algorithm \ref{greed} and the optimal online policy are quite close.
%Second, the result indicates that restricting ourselves to splitting algorithm does not harm much, and the resulting throughput is near optimal.

Heuristic Algorithm \ref{greed} is not optimal in general. %\footnote[1]{Example:
% If $\tau=2, M=2,N=2$ and $p_{11}=0.4,p_{12}=0,p_{21}=1,p_{22}=0.5$, then the greedy algorithm will choose to send packet of client 1 on both {\sf AP}'s for the first time-slot, and send packet of client 2 on ${\sf AP}_2$ for the second time-slot. This results in the expected number of deliveries $1.5$. 
% However, the optimal scheduler at time-slot 1 chooses to send packet of client 1 on ${\sf AP}_1$, and packet of client 2 on ${\sf AP}_2$. Then, in the second time-slot, according to the outcome of the first time-slot, chooses the pair of packets that would result in the highest expected delivery for the second time-slot. This strategy will result in expected delivery of $1.6$ for the two time-slots.
%}
However, as numerical results  indicate, the value provided by the greedy algorithm is quite close to the optimal value. In fact, the numerical results suggest that  Algorithm \ref{greed} is a decent approximation for the optimal value.

\begin{figure}
\centering
\subfigure[]{\label{configur}\includegraphics[scale=.4,trim = 50mm 30mm 40mm 60mm]{simulation121011networkconfig.pdf}}
\subfigure[]{\label{greedyfig}\includegraphics[scale=.5, trim= 30mm 80mm 30mm 90mm]{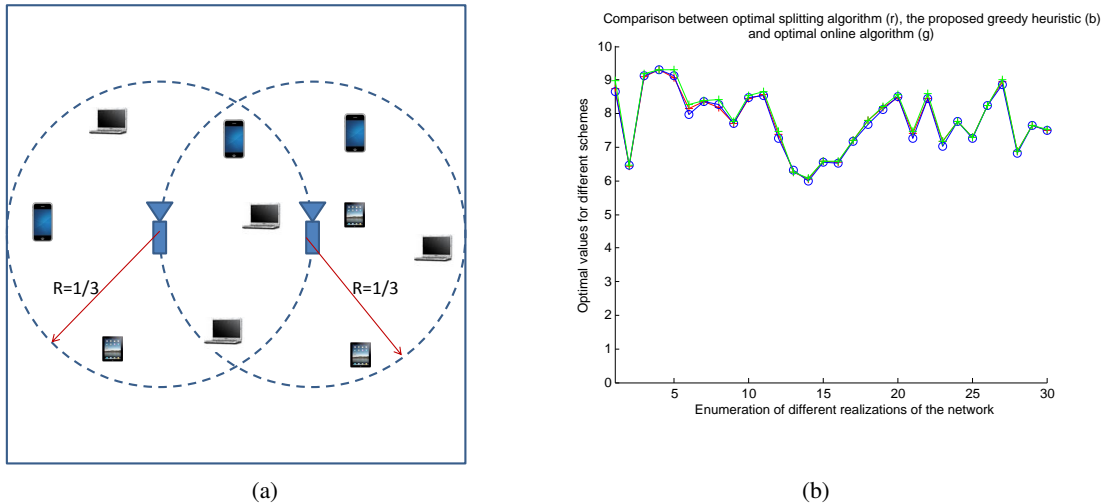}}
\caption{(a) wireless network with two {\sf AP}'s with coverage radius $\frac{1}{3}$, $10$ randomly and uniformly located clients in the coverage area of the {\sf AP}'s with channel erasure probabilities proportional to the distance, and $\tau=15$,  for 30 different realizations of the network. 
In (b) the red curve demonstrates the total timely throughput capacity when the scheduling is restricted to partitioning  the set of packets across {\sf AP}'s. The green curve demonstrates the total timely throughput capacity when the splitting assumption is relaxed. The blue curve demonstrates the total timely throughput achieved by the greedy heuristic described in Algorithm \ref{greed}. The curves demonstrate that (i) the greedy heuristic solution is near optimal; and (ii)
 very marginal capacity gain can be obtained by relaxing the partitioning assumption.}
\end{figure}

%The remaining question would be: how far is the solution to the optimal splitting algorithm from the optimal online algorithm?
Interestingly, as numerical results in Fig. \ref{greedyfig} indicate, the throughput of offline splitting algorithm is very close to that of optimal online scheduling which is the maximum throughput over all possible policies. 
Hence, lifting the assumption of partitioning traffic among {\sf AP}'s  provides marginal gain over the optimal splitting algorithm, while it requires much more coordination of ACK/NACKs.
Consequently, for a system-level design, one may only focus on how to split the traffic among different {\sf AP}'s, and they are ensured that the  solution will be near optimal.

\subsection{Fading Channels and Rate Adaptation}\label{rateadaptation}
Section \ref{model}  considered a packet erasure model for channels, and assumed  that each {\sf AP} can transmit one packet at a time.
We extend the model to  consider fading channels in order to better capture the channel physical properties. 
In addition, we allow  each {\sf AP} to allocate a portion of its available bandwidth to each client during a time-slot.
This means that each {\sf AP} can access several clients simultaneously.
%Moreover,   we did not consider the possibility of rate adaptation.
Moreover, we allow for rate adaptation, where according to the time-frequency resource allocated to each client, a certain \emph{reward} will be obtained.
%Therefore, in order to account for fading in  wireless channels, multiple simultaneous transmissions by each {\sf AP}, clients getting %service from multiple {\sf AP}'s, and rate adaptation we generalize our model described in Section \ref{model}.

\subsubsection{Model Setup}
Consider the network topology and time model  described in Section  \ref{model}.
%For each client $\text{ {\sf Rx}}_j$ the packets arrive at the beginning of intervals of length $\tau$ time-slots. 
%At the beginning of each interval each client has request for a set of packets that  have to be delivered to the respective client by %the end of the interval. 
%If packets are not delivered by the end of an interval, they will be dropped and will not be useful anymore.
In addition, for $i\in\{1,2,\ldots, N\}$, ${\sf AP}_i$ has bandwidth $W_i$, where $W_i\in \mathbb N$, which means at most $W_i$ simultaneous transmissions can occur during a time-slot by ${\sf AP}_i$.
On the other hand, all the bandwidth of ${\sf AP}_i$ during a time-slot might be allocated to a certain client.
%In addition, let $T$ denote the least common multiplier of $\tau_1,\ldots, \tau_M$.

Define $R_j^{i_1,i_2,\ldots ,i_N}$ to be the total reward obtained by$\text{ {\sf Rx}}_j$ during an interval if it is served $i_1,i_2,\ldots, i_N$ times on ${\sf AP}_1,{\sf AP}_2,\ldots, {\sf AP}_N$, respectively.
The amount of this reward is determined by the rate adaptation which is used in the {\sf AP}'s.
Further, assume that $R_j^{i_1,i_2,\ldots ,i_N}$ for $j=1,2,\ldots , M$ is a non-negative, increasing function in all dimensions $i_1,i_2,\ldots, i_N$.

A scheduling policy $\eta$ for the system allocates, possibly at random, the bandwidth of each {\sf AP} to different clients in each time-slot, based on the past history of the system. 
Let $q_j(k)$ denote the reward obtained for client $\text{ {\sf Rx}}_j$ during interval $k$ under some scheduling policy.
 The average reward for $\text{ {\sf Rx}}_j$ is defined as $q_j=\limsup_{k\to\infty}\frac{\sum_{i=1}^{k}q_j(i)}{k}$.
%Let $f_j^i$ denote the long-term fraction of intervals in which $i^{th}$ packet of client $j$ is served. 
The objective is to maximize $\sum_{j=1}^{M}q_j$, which is the total average reward.

\begin{remark}

The Relaxed Problem introduced in Section \ref{results} was in fact a deterministic scheduling problem with binary rewards; i.e. either size $1/p_{ij}$ would be allocated to packet of client $\text{ {\sf Rx}}_j$ in bin $i$, which would result in reward one (it will contribute to the objective function by setting $x_{ij}=1$); or, it would not add to the value of the objective function at all (for $x_{ij}=0$).
Therefore, the value of $\sum_{i=1}^{N}\sum_{j=1}^{M}x_{ij}$ can be viewed as the reward resulting from a scheduling policy.
% the higher the $\sum_{i=1}^{N}\sum_{j=1}^{M}x_{ij}$, the better.
Nevertheless, a more practical model for the reward is a function with input argument being the amount of time-frequency allocated to the client.
%which is increasing and concave in the amount of resource allocated to a client. 
%More specifically, the reward obtained by client $j$ when allocated $i$ time-slots on an {\sf AP} can be denoted by $R^i_j$, where $R^i_j$ is an increasing concave function in $i$.
Therefore, the model extension we are considering can also be viewed as a generalization of the deterministic scheduling (RP).  
\end{remark}

A similar model has been considered in \cite{HetReward} for $N=1$, where no simultaneous transmissions are allowed, i.e. the bandwidth of {\sf AP} is equal to $1$, and intervals for clients are not necessarily equal.
They show that for checking if a set of reward requirements is feasible, it is sufficient to look at the average behavior of the system.%, rather than what specifically happens inside each interval.
 However, when going from one {\sf AP} to multiple {\sf AP}'s checking the average behavior is not sufficient, even when multiple simultaneous transmissions is not allowed, and all deadlines are equal.
%Therefore, there is no clear way how one can verify the feasibility of average rewards.
We focus on maximizing the total average reward, which is the equivalent of $C_{\text{{\sf T}}^{\text{3}}}$ in our original result.
%and we propose an algorithm for maximizing the total average reward.
To this aim, we first state the following lemma which reduces the problem to a maximization problem over an interval of length $\tau$.
Then, we show that this new maximization problem can be solved using a Dynamic Programming.

\begin{lemma}\label{Reduction4Rewards}
The optimization problem can be formulated as follows.
\begin{equation} \label{rewardreduction}
\max\quad \sum_{j=1}^{M}R_j^{x_{1j},\ldots ,x_{Nj}}\quad s.t.\quad \sum_{j=1}^{M}x_{ij}\leq W_i\tau,  \quad  x_{ij}\in \mathbb Z^+\cup \{0\},  \quad i\in [1:N], j\in[1:M].   
\end{equation}
\end{lemma}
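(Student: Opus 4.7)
The plan is to reduce the long-run optimization over all (possibly randomized and history-dependent) scheduling policies to a single-shot integer program over one interval, in the same spirit as Lemma~\ref{lemorder}. The two structural features that enable this reduction are: (i) the reward $R_j^{i_1,\ldots,i_N}$ depends deterministically only on the aggregate service-count vector, and (ii) the per-interval resource budget of each ${\sf AP}_i$ is a fixed constant $W_i\tau$ with no carry-over between intervals. Together, these make the system time-homogeneous and the optimization separable across intervals, so that a static deterministic schedule should be optimal.

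For the upper bound, I would fix an arbitrary policy $\eta$ and, for interval $k$, let $X_{ij}(k)$ denote the (possibly random) number of time-frequency units allocated by ${\sf AP}_i$ to client $j$ during that interval. Because ${\sf AP}_i$ supplies $W_i$ units per time-slot across $\tau$ slots, every sample path satisfies $\sum_{j=1}^M X_{ij}(k)\le W_i\tau$, so $(X_{ij}(k))_{i,j}$ is almost surely a feasible point of~(\ref{rewardreduction}). Hence $\mathbb{E}\bigl[\sum_{j=1}^M R_j^{X_{1j}(k),\ldots,X_{Nj}(k)}\bigr]$ is a convex combination of values of the objective in~(\ref{rewardreduction}) at feasible points, and is therefore at most that IP's optimum. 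Averaging over $k=1,\ldots,r$ and taking $\limsup$ shows $\sum_{j=1}^M q_j$ is bounded above by the value of~(\ref{rewardreduction}).

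For achievability, let $(x_{ij}^*)$ be any optimal integer solution to~(\ref{rewardreduction}). I would then consider the static deterministic policy in which, during every interval, ${\sf AP}_i$ devotes exactly $x_{ij}^*$ time-frequency units to client $j$; the constraint $\sum_j x_{ij}^*\le W_i\tau$ guarantees that this schedule fits inside the $W_i\tau$ units available per interval. Under this policy the per-interval reward for client $j$ is the constant $R_j^{x_{1j}^*,\ldots,x_{Nj}^*}$, so $q_j$ equals that constant and $\sum_{j=1}^M q_j$ attains the IP's optimum, matching the upper bound.

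The main obstacle I expect is tightening the upper-bound step: one must verify that an arbitrary history-dependent, randomized policy, which may split bandwidth fractionally among clients within a slot and condition on past outcomes, cannot exploit anything beyond the per-interval resource budget. The key point is that because the reward depends only on aggregate integer service counts per interval, and the feasible region $\{\sum_j x_{ij}\le W_i\tau,\,x_{ij}\in\mathbb{Z}^+\cup\{0\}\}$ does not couple across intervals, no randomization or look-back can produce expected per-interval rewards outside the convex hull of the IP's feasible set. Once this convex-combination observation is in place, the lemma reduces to a standard static-versus-dynamic equivalence.
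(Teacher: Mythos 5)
Your proof is correct and takes essentially the same route as the paper's: the paper likewise reduces the long-run problem to a single interval by observing that each interval offers a finite set of feasible allocations bounded by the budget $W_i\tau$, that no policy can exceed the best single-interval reward on average, and that repeating the optimal single-interval allocation achieves it. Your write-up merely makes the converse step slightly more explicit via the almost-sure feasibility and convex-combination observation.
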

\begin{proof}
It is  sufficient to show that the maximal total average reward is obtained using a policy which is implemented for all intervals; since (\ref{rewardreduction}) finds  the maximal total reward over one interval.
The proof in essence is similar to that of Lemma 1.
Consider  the following two observations.
First, we have a finite number of possible actions to take for each interval. More specifically, since we have $M$ clients, $N$ {\sf AP}'s, and $W_i\tau$ chunks of resource in ${\sf AP}_i$, total number of different possible actions for an interval is  $M^{\sum_{i=1}^{N}\tau W_i}$.
Second,  each policy produces a certain reward. Among all possible policies for one interval, there is one policy $\mathcal P$ with maximal total reward $R^*$. 
Hence, any sequence of policies that is implemented on the sequence of intervals produces at most the total average reward of $R^*$, which is obtained by applying $\mathcal P$ to all intervals.
\end{proof}

\subsubsection{Dynamic Programming Solution}
In this part we use Lemma \ref{Reduction4Rewards} to propose a  DP solution to the problem.
Define $OPT[m,t_1,\ldots, t_N]$  to be the maximal total reward obtained when only scheduling the first $m$ clients, with the available resource  being $t_1,t_2,\ldots, t_N$ on ${\sf AP}_1,{\sf AP}_2,\ldots, {\sf AP}_N$, respectively.
Hence, our objective is to find $OPT[M,W_1\tau,W_2\tau,\ldots, W_N\tau]$.

\begin{algorithm}
\caption{}
\begin{algorithmic}\label{DP} 

\STATE Input $R_j^{i_1,i_2,\ldots, i_N}$ for $1\leq j\leq M, 0\leq i_1\leq \tau W_1,0\leq i_2\leq \tau W_2,\ldots, 0\leq i_N\leq \tau W_N.$
\STATE  Initialize a $[M\times (W_1\tau+1)\times \ldots \times (W_N\tau+1)]$ array $OPT$.
\FOR {$m=1,2,\ldots, M$}
\FOR {$t_1\in [0:\tau W_1],\ldots ,t_N\in [0:\tau W_N]$}
\IF {$m=1$}
\STATE $OPT[m,t_1,\ldots, t_N]\leftarrow R_1^{t_1,\ldots, t_N}$;
\ELSE
\STATE $OPT[m,t_1,\ldots ,t_N]  \leftarrow       \max_{0\leq x_1\leq t_1,\ldots,0\leq x_N\leq t_N} 
\{OPT[m-1,t_1-x_1,\ldots ,t_N-x_N]+R_m^{x_1,\ldots, x_N}\}$;
\ENDIF

\ENDFOR
\ENDFOR
\STATE Output $OPT[M,W_1\tau,\ldots, W_N\tau]$.
\end{algorithmic}
\end{algorithm}

\begin{theorem}\label{DynProg}
Algorithm \ref{DP}  solves the problem of finding the maximum total average reward  in $O(M\tau^{2N}\Pi_{i=1}^{N}W_i^2)$ time.
\footnote[1]{The same methodology of applying Dynamic Programming can be used to solve the problem when packets arrive at the beginning of intervals, but they have different deadlines during the interval.}
\end{theorem}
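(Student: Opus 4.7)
The plan is to prove the theorem in two parts: first establishing correctness of the Bellman-style recurrence in Algorithm \ref{DP}, and then analyzing its running time by counting the DP cells and the work per cell.

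For correctness, I would invoke Lemma \ref{Reduction4Rewards} to reduce the entire scheduling problem to the static integer optimization of maximizing $\sum_{j=1}^{M} R_j^{x_{1j},\ldots,x_{Nj}}$ subject to $\sum_{j=1}^{M} x_{ij}\leq W_i\tau$, $x_{ij}\in\mathbb{Z}^+\cup\{0\}$. I would then interpret $OPT[m,t_1,\ldots,t_N]$ as the optimal value of the restricted problem in which only clients $1,\ldots,m$ participate and the available resource on ${\sf AP}_i$ is only $t_i$ (rather than $\tau W_i$). The target quantity is therefore $OPT[M,\tau W_1,\ldots,\tau W_N]$.

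I would then verify the recurrence by induction on $m$. The base case $m=1$ is immediate: with a single client, since $R_1^{t_1,\ldots,t_N}$ is non-decreasing in each coordinate, it is optimal to allocate all of the remaining resources to client $1$, giving $OPT[1,t_1,\ldots,t_N]=R_1^{t_1,\ldots,t_N}$. For the inductive step, in any feasible allocation to the first $m$ clients, let $(x_1,\ldots,x_N)$ denote the amounts assigned to client $m$; then $0\le x_i\le t_i$ and the remaining budgets $(t_1-x_1,\ldots,t_N-x_N)$ are distributed optimally among the first $m-1$ clients. Taking the maximum over $(x_1,\ldots,x_N)$ gives precisely the update rule in Algorithm \ref{DP}, and the inductive hypothesis gives the value for the residual subproblem. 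This establishes that the algorithm returns the optimal value of (\ref{rewardreduction}), hence by Lemma \ref{Reduction4Rewards} the maximum total average reward.

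For the running-time analysis, I would simply count. The DP table has $M\cdot\prod_{i=1}^{N}(\tau W_i+1)=O(M\tau^{N}\prod_{i=1}^{N}W_i)$ cells. Computing a single cell (except the base cases) requires a maximization over the box $\{(x_1,\ldots,x_N):0\le x_i\le t_i\}$, which contains at most $\prod_{i=1}^{N}(\tau W_i+1)=O(\tau^{N}\prod_{i=1}^{N}W_i)$ tuples, each evaluated in $O(1)$ by one table lookup and one reward query. Multiplying cell count by per-cell work yields the claimed $O\!\bigl(M\tau^{2N}\prod_{i=1}^{N}W_i^{2}\bigr)$ total time. The only mildly non-routine step is arguing that the monotonicity assumption on $R_j^{i_1,\ldots,i_N}$ in each coordinate is enough to justify the $m=1$ base case (so that the allocation uses up all of $t_1,\ldots,t_N$ rather than wasting resource); this is not really an obstacle but it is the one place where the monotonicity hypothesis stated in the model is used essentially.
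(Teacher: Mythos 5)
Your proposal is correct and follows essentially the same route as the paper: reduce to the single-interval integer program via Lemma \ref{Reduction4Rewards}, prove the recurrence by induction on $m$ with the same interpretation of $OPT[m,t_1,\ldots,t_N]$, and multiply the $O(M\tau^{N}\Pi_{i=1}^{N}W_i)$ table size by the $O(\tau^{N}\Pi_{i=1}^{N}W_i)$ per-cell maximization. Your explicit remark that the monotonicity of $R_j^{i_1,\ldots,i_N}$ is what justifies the base case $OPT[1,t_1,\ldots,t_N]=R_1^{t_1,\ldots,t_N}$ is a small but worthwhile clarification that the paper leaves implicit.
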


\begin{proof} The proof contains two parts: processing time of the algorithm, and proof of correctness.
%\begin{enumerate}
%\item
Total Processing Time:
there are total of $O(M\tau^N\Pi_{i=1}^{N}W_i)$ iterations, each with computational complexity of $O(\tau^N\Pi_{i=1}^{N}W_i)$.
Therefore, the total processing time is $O(M\tau^{2N}\Pi_{i=1}^{N}W_i^2)$, which is polynomial in the number of clients.
(Also, note that the number of {\sf AP}'s, $N$, is typically small, around 2,3, or 4.)
%\item
Proof of Correctness:
The algorithm stores an $(N+1)$-dimensional array $OPT$. 
 We use induction  over the entries of the dynamic programming table, in order that the algorithm fills them in.
Induction hypothesis is that $OPT[m,t_1,\ldots, t_N]$ is the maximal total reward when there are only the first $m$ clients in the system, and the available resource on ${\sf AP}_1,\ldots, {\sf AP}_N$ are $t_1,\ldots, t_N$ respectively. 
%%%%%%%%%%%%
For the base case of $m=1$ the algorithm allocates all the available resource to the first client, and the table is initialized correctly. We now check the induction step. Consider the time when $OPT[m,t_1,\ldots, t_N]$ is going to be computed by the algorithm; and assume  all the previous entries of the table $OPT$ have been correctly computed.
First, note that all the entries of the table that the recursive formula for finding $OPT[m,t_1,\ldots, t_N]$   is referring to have already been computed in  earlier steps.
Second, note that the maximization  in the recursive formula accounts for all the possible allocations of the resource to the $m$-th client, and then for each allocation it computes the maximal total reward, which is the reward using that allocation for client $m$ plus the  maximal reward for the subproblem of only having the first $m-1$ clients, which is already computed correctly according to the induction hypothesis.  
%\end{enumerate}
\end{proof}

\section{\large Numerical Analysis}\label{numerical}
In this section we provide numerical analyses for our deterministic relaxation scheme. 
%and we compare the resulting $\text{{\sf T}}^{\text{3}}$ with the case in which the entire network is covered only by a single {\sf AP}.
So, we consider a wireless network with 2 {\sf AP}'s, and several wireless clients that are uniformly and randomly located in the network (see Figure~\ref{configur}). 
Channel from every {\sf AP} to every client is an erasure channel with erasure probability which is proportional to the distance between the {\sf AP} and the client.
The distances in the network are normalized, and we assume that the {\sf AP}'s have the same coverage radius $R=\frac{1}{3}$.
Therefore, the channel erasure probability is 1 for the channel between an {\sf AP} and a client which is located at the distance $R\geq \frac{1}{3}$ from it. Furthermore, the distance between the two AP's is $\frac{1}{3}$.

\begin{figure}
\centering
\subfigure[]{\label{configur}\includegraphics[scale=.4,trim = 30mm 30mm 30mm 110mm]{simulation121011networkconfig.pdf}}
\subfigure[]{\label{121311}\includegraphics[scale=.5,trim = 30mm 80mm 30mm 110mm]{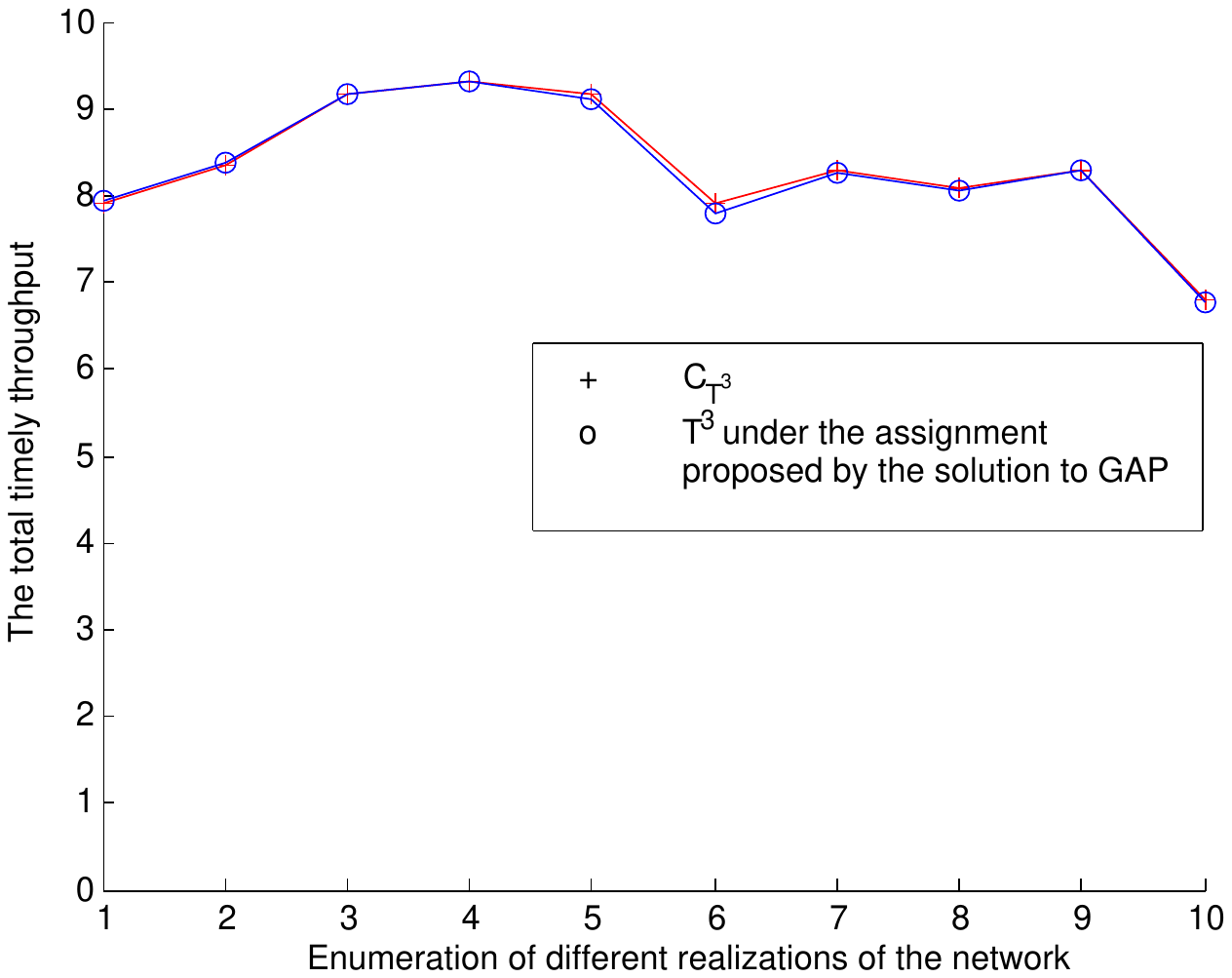}}\\
\subfigure[]{\label{121411}\includegraphics[scale=.5,trim = 30mm 80mm 30mm 80mm]{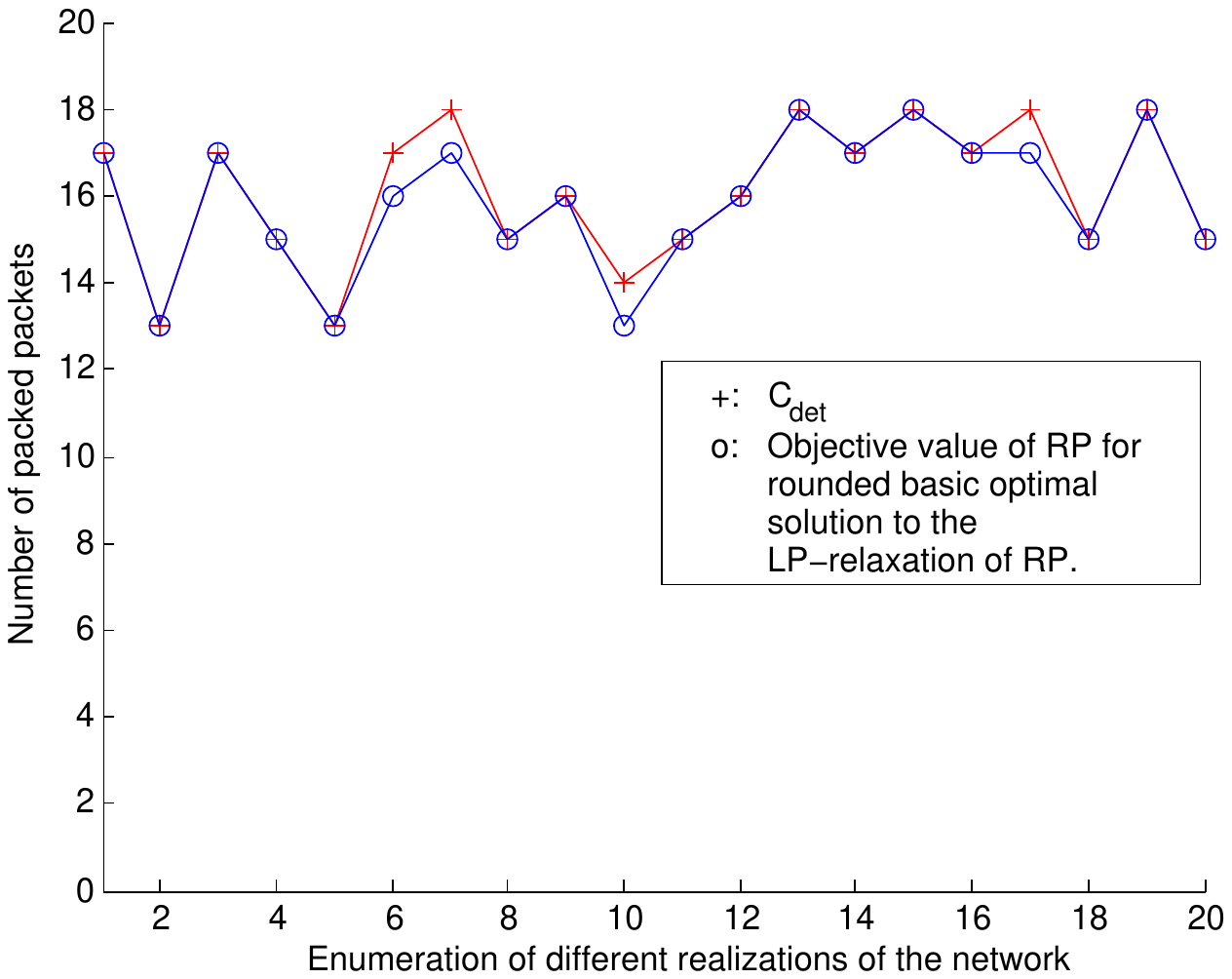}}
\subfigure[]{\label{121511}\includegraphics[scale=.5,trim = 30mm 80mm 30mm 80mm]{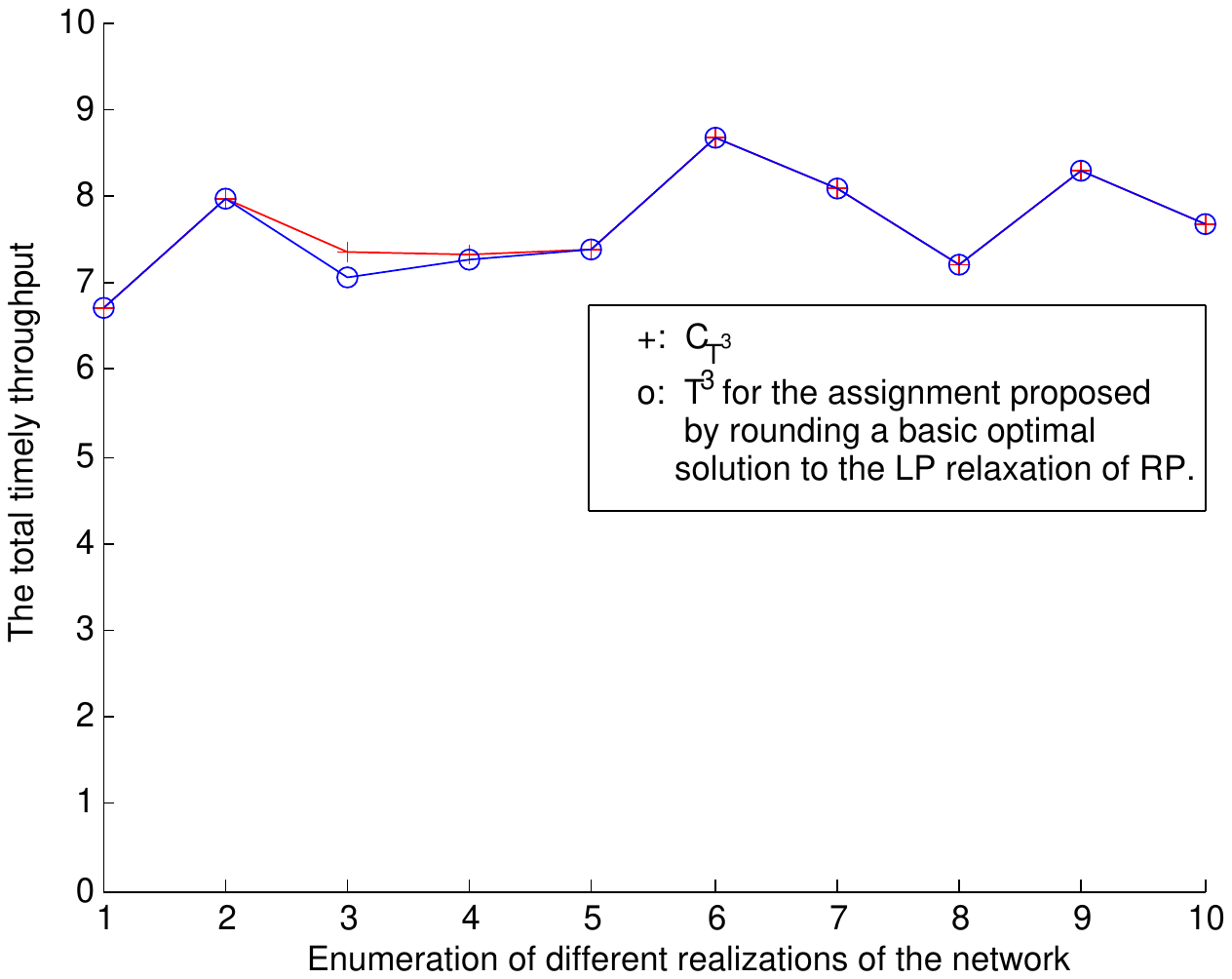}}
\caption{Numerical Results. (a) illustrates the network configuration  with two {\sf AP}'s with coverage radius $\frac{1}{3}$ each, $M$ randomly and uniformly located clients in the coverage area of the {\sf AP}'s, and channel erasure probabilities proportional to the distances.
(b) compares $C_{\text{{\sf T}}^{\text{3}}}$ with the $\text{{\sf T}}^{\text{3}}$  resulted from the assignment proposed by (\ref{relaxed}), $\eta_{\text{det}}$, for  $M=10,\tau=15$ and $10$ different realizations of the network.
`+' and `o' in (b) indicate the values of $C_{\text{{\sf T}}^{\text{3}}}$ and $\text{{\sf T}}^{\text{3}}(\eta_{\text{det}})$ for each realization, respectively.
(c) compares $C_{\text{det}}$ (denoted by `+') with the objective value of the rounded basic optimal solution (denoted by `o') for  $M=20,\tau=30$ and $20$ different network realizations.
Finally, (d) compares $C_{\text{{\sf T}}^{\text{3}}}$ (denoted by `+') with the $\text{{\sf T}}^{\text{3}}$ resulted from the assignment proposed by Algorithm \ref{apxalgorithm} (denoted by `o') for $M=10,\tau=15$ and $10$ different realizations of the network.
}
\end{figure}

Figure ~\ref{121311} corresponds to the case where $M=10$ and $\tau=15$.  In each realization $10$ clients are randomly located in the network. For each realization $C_{\text{{\sf T}}^{\text{3}}}$ is calculated. Then, the corresponding relaxed problem is solved, and the network is run for 10000 intervals under the assignments proposed by its deterministic relaxation solution. Fig \ref{121311} shows the comparison between the two for 30 different realizations of the network.

Figure ~\ref{121411} demonstrates how our proposed LP-rounding algorithm (Algorithm \ref{apxalgorithm}) performs compared to $C_{\text{det}}$.
We consider $M=20$ and $\tau=30$, and 30 different realizations of network. For each realization $C_{\text{det}}$, and the value proposed by our approximation algorithm (Algorithm \ref{apxalgorithm}) are found.
 The result confirms the fact that our proposed algorithm performs well in approximating the optimal solution. The performance improves as the number of clients increases.

Figure ~\ref{121511} shows how far our $\text{{\sf T}}^{\text{3}}$ will be from $C_{\text{{\sf T}}^{\text{3}}}$  if we use Algorithm \ref{apxalgorithm} as the assignment strategy for the packets, and run the network for 10000 intervals according to that assignment. In this case we have considered $M=10,\tau=15,$ and  $10$ different instances of the network.

\section{\large Conclusion}\label{conclusion}
In this work we investigated the improvement by utilizing network heterogeneity in order to enhance the timely throughput of a wireless network.
In particular, we studied the problem of maximizing total timely throughput of the downlink of a wireless network with $N$ Access points and $M$ clients, where each client might have access to several Access points.
This problem is challenging to attack directly. 
However, we proposed a deterministic relaxation of the problem which is based on converting the problem to a network with deterministic delay for each link.

First, we showed that the value of the solution to the relaxed problem, $C_{\text{det}}$, is very close to the value of the solution to the original problem, $C_{\text{\sf T}^{\text{3}}}$. In fact, as $C_{\text{\sf T}^{\text{3}}}\to\infty,\frac{C_{\text{det}}}{C_{\text{\sf T}^{\text{3}}}}\to 1$. 
Furthermore, the numerical results indicate that for networks with limited number of clients, the gap between $C_{\text{\sf T}^{\text{3}}}$ and $C_{\text{det}}$ is very small. Second, we proposed a simple polynomial-time algorithm with additive performance guarantee of $N$ for approximating the relaxed problem. This approximation performs well as $N$ is for most cases between 2-4.
We also extended the formulation to allow time-varying channels, real-time traffic, and weighted total timely throughput maximization, and proved similar results. 
In addition, we extended the model to account for fading, multiple simultaneous transmissions by Access Points, and rate adaptation.
Two future directions are considering  multi-hop model, and allowing different deadlines for clients.

\appendices
\section{Proof of the tightness of the bounds in Theorem 1}\label{tightnessproof}
We prove that the upper and lower bounds given in (\ref{result}) are tight. More specifically, we show that there exist $N,M,$ and some channel success probabilities for which $C_{\text{{\sf T}}^{\text{3}}}$ gets arbitrarily close to $C_{\text{det}}+N$. In addition, there exist $N,M,$ and some channel success probabilities for which $O(|C_{\text{{\sf T}}^{\text{3}}}-C_{\text{det}}|)=O(\sqrt{NC_{\text{det}}})$.
%Then, we prove Theorem 2.

\subsection{Proof of the tightness of the upper bound}

We show that for any given $N$ and $0<\epsilon <N$  there exist $M,\tau$, and channel success probabilities such that $C_{\text{{\sf T}}^{\text{3}}}-C_{\text{det}}=N-\epsilon$.
We set $M=N\tau$, and we choose $C_{\text{det}}$ such that $C_{\text{det}}<M-N$ and $\frac{C_{\text{det}}}{N}\in \mathbb N$. 
Further, for the channel between  $\text {{\sf AP}}_i$ and $\text{{\sf Rx}}_j$ we set the channel success probability $p_{ij}=\frac{C_{\text{det}}+N-\epsilon}{N\tau}$, where $i=1,2,\ldots ,N $ and $j=1,2,\ldots ,M$. 
Therefore, according to symmetry, both the optimal assignment which results in $C_{\text{{\sf T}}^{\text{3}}}$ and the optimal assignment for the relaxed problem which results in $C_{\text{det}}$ assign $\tau$ packets to each {\sf AP}. Furthermore, without loss of generality we can assume that for $\text {{\sf AP}}_i$ packets of clients $j=1+(i-1)\tau,\ldots ,i\tau$ are assigned to  $\text {{\sf AP}}_i$. 
 It is easy to check that the following inequalities hold for any $\text {{\sf AP}}_i$, $i=1,2,\ldots ,N$:
\begin{equation}
\sum_{j=1+(i-1)\tau}^{1+(i-1)\tau+C_{\text{det}}/N}\frac{1}{p_{ij}}=(\frac{C_{\text{det}}}{N})(\frac{N\tau}{C_{\text{det}}+N-\epsilon})<\tau<\sum_{j=1+(i-1)\tau}^{1+(i-1)\tau+C_{\text{det}}/N+1}\frac{1}{p_{ij}}.\nonumber
\end{equation}
Therefore, the maximum number of packets that can be packed in the relaxed problem is $C_{\text{det}}$.
Now, we calculate the expected number of packet deliveries: 
For any $\text {{\sf AP}}_i$ the expected number of successful deliveries during one interval is $\tau(\frac{C_{\text{det}}+N-\epsilon}{N\tau})=\frac{C_{\text{det}}+N-\epsilon}{N}$. Therefore, we have $C_{\text{{\sf T}}^{\text{3}}}=N(\frac{C_{\text{det}}+N-\epsilon}{N})=C_{\text{det}}+N-\epsilon$.
Hence, $C_{\text{{\sf T}}^{\text{3}}}-C_{\text{det}}=N-\epsilon$.

\subsection{Proof of the tightness of the order of the lower bound}
We show that there exists a wireless network realization for which $O(|C_{\text{{\sf T}}^{\text{3}}}-C_{\text{det}}|)=O(\sqrt{NC_{\text{det}}})$.
More specifically, for a given $N$ we show that 
there exist a positive constant $k$ along with $M,\tau$, such that $C_{\text{det}}-C_{\text{{\sf T}}^{\text{3}}}>k\sqrt{NC_{\text{det}}}$.
We choose $C_{\text{det}}$ such that  $\frac{C_{\text{det}}}{N}\in \mathbb N$, and we set $M=C_{\text{det}}$. 
In addition, we set the channel success probability $p_{ij}=p=\frac{C_{\text{det}}}{N\tau}<1$ for some $\tau\in \mathbb N$, where $i=1,2,\ldots ,N $ and $j=1,2,\ldots ,M$. 

Therefore, both the optimal assignment which results in $C_{\text{{\sf T}}^{\text{3}}}$ and the optimal assignment for the relaxed problem which results in $C_{\text{det}}$ assign $\frac{C_{\text{det}}}{N}$ packets to each {\sf AP}. 
It is easy to check that our chosen $C_{\text{det}}$ is actually the solution to the relaxed problem. 
Now, let $Y$ denote the number of successful deliveries for one of the {\sf AP}'s. 
Thus, $C_{\text{{\sf T}}^{\text{3}}}=NE[Y]$.
Also, let $l$ denote the number of packets that can be packed in a bin corresponding to a certain {\sf AP}.
 Therefore, $l=\frac{C_{\text{det}}}{N}$, and $p=\frac{l}{\tau}$. We only need to show that there exists a constant $k$ such that
%\begin{equation}
$l-E[Y]>k\sqrt{l}.$
%\end{equation}
Noting that $l=p\tau$ we have
\begin{eqnarray}
l-E[Y]&=& p\tau-[\sum_{j=1}^l j\binom{\tau}{j}p^j(1-p)^{\tau-j}+l\sum_{j=l+1}^{\tau} \binom{\tau}{j} p^j(1-p)^{\tau-j}]\nonumber\\
&=& \sum_{j=1}^\tau j\binom{\tau}{j}p^j(1-p)^{\tau-j}-[\sum_{j=1}^lj\binom{\tau}{j}p^j(1-p)^{\tau-j}+l\sum_{j=l+1}^{\tau} \binom{\tau}{j} p^j(1-p)^{\tau-j}]\nonumber\\
&=& p\tau\sum_{j=l+1}^{\tau} \binom{\tau-1}{j-1}p^{j-1}(1-p)^{\tau-j}-l\sum_{j=l+1}^{\tau} \binom{\tau}{j} p^j(1-p)^{\tau-j}\nonumber
\end{eqnarray}
\begin{eqnarray}
&=& l[\sum_{j=l+1}^{\tau} \binom{\tau-1}{j-1}p^{j-1}(1-p)^{\tau-j}-\sum_{j=l+1}^{\tau}( \binom{\tau-1}{j}+\binom{\tau-1}{j-1}) p^j(1-p)^{\tau-j}]\nonumber\\
&=&l[\sum_{j=l+1}^{\tau}\binom{\tau-1}{j-1}p^{j-1}(1-p)^{\tau-j+1}-\sum_{j=l+1}^{\tau-1}\binom{\tau-1}{j}p^{j}(1-p)^{\tau-j}]\nonumber\\
&=&l\binom{\tau-1}{l}p^l(1-p)^{\tau-l}\nonumber.
\end{eqnarray}
Now note that $\binom{\tau-1}{l}=\frac{(\tau-1)!}{l!(\tau-1-l)!}=\frac{\tau-l}{\tau}\binom{\tau}{l}=(1-p)\binom{\tau}{l}$.
Therefore, $l\binom{\tau-1}{l}p^l(1-p)^{\tau-l}=\tau\binom {\tau}{l}p^{l+1}(1-p)^{\tau-l+1}.$
By Theorem 2.6 of \cite{Stanica} we know that for positive integers m,n,q, with $m>q\geq 1$ and $n\geq 1$
\begin{equation}
\binom{mn}{qn}>\frac{1}{\sqrt {2\pi}}   e^{\frac{1}{12n}(\frac{1}{m}-\frac{1}{q}-\frac{1}{m-q})}
n^{-\frac{1}{2}}\frac{m^{mn+\frac{1}{2}}}{(m-q)^{(m-q)n+\frac{1}{2}}q^{qn+\frac{1}{2}}}.\nonumber
\end{equation}
Substituting $n$ by $1$, $m$ by $\tau$, and $q$ by $l$ we get:
\begin{eqnarray}
\binom{\tau}{l}&>&\frac{1}{\sqrt {2\pi}}\frac{\tau^{\tau+\frac{1}{2}}}{(\tau-l)^{(\tau-l)+\frac{1}{2}}l^{l+\frac{1}{2}}}e^{\frac{1}{12}(\frac{1}{\tau}-\frac{1}{l}-\frac{1}{\tau-l})}\nonumber\\
&=&\frac{1}{\sqrt {2\pi}}\frac{\tau^{\tau+\frac{1}{2}}}{(\tau(1-p))^{\tau-l+\frac{1}{2}}(p\tau)^{l+\frac{1}{2}}}e^{\frac{1}{12}(\frac{1}{\tau}-\frac{1}{l}-\frac{1}{\tau-l})}\nonumber\\
&=&\frac{1}{\sqrt {2\pi}}\frac{1}{\sqrt {\tau p(1-p)}}\frac{1}{p^{l}(1-p)^{\tau-l}}e^{\frac{1}{12}(\frac{1}{\tau}-\frac{1}{l}-\frac{1}{\tau-l})}\nonumber.
\end{eqnarray}
However, $\frac{1}{\tau}-\frac{1}{l}-\frac{1}{\tau-l}>-2$. Therefore,
$\binom{\tau}{l}>\frac{1}{\sqrt {2\pi}}\frac{1}{\sqrt {\tau p(1-p)}}\frac{1}{p^{l}(1-p)^{\tau-l}}e^{-\frac{1}{6}}.$
Hence,  we get
\begin{eqnarray}
l-E[Y]&=&l\binom{\tau-1}{l}p^l(1-p)^{\tau-l}=\tau\binom {\tau}{l}p^{l+1}(1-p)^{\tau-l+1}\nonumber\\
&>&\tau\frac{1}{\sqrt {2\pi}}\frac{1}{\sqrt {\tau p(1-p)}}\frac{1}{p^{l}(1-p)^{\tau -l}}e^{-\frac{1}{6}}  p^{l+1}(1-p)^{\tau-l+1}>e^{-\frac{1}{6}}\sqrt{\frac{1-p}{2\pi}}\sqrt{l}.\nonumber
\end{eqnarray}
Thus, by setting $k=e^{-\frac{1}{6}}\sqrt{\frac{1-p}{2\pi}}$ the proof will be complete.

\section{Proof of Lemma \ref{lemorder}}\label{LemmaInv}
Lemma \ref{lemorder} states that
$C_{\text{{\sf T}}^{\text{3}}}$ can be achieved using a greedy static scheduling policy.

\begin{proof}
The proof consists of two parts. In part A we prove that when looking at class of scheduling policies that use the same assignment of packets to {\sf AP}'s for all intervals, the maximal $\text{{\sf T}}^{\text{3}}$, $R^*$, can be achieved using a greedy static scheduling policy.
In part B we prove that no policy in general can achieve any $\text{{\sf T}}^{\text{3}}$ greater than $R^*$.
Considering these two parts together, the desired result will be obtained.

\subsection{Proving that maximal $\text{{\sf T}}^{\text{3}}$ over the class of scheduling policies that use the same assignment of packets for all intervals, is achieved using a greedy static scheduling policy:}
There are a total of $N^M$ different possible ways of assigning packets to {\sf AP}'s for each interval.
We enumerate these different assignment policies by $i=1,2,\ldots, N^M$.
For an assignment $i$, $i\in\{1,2,\ldots, N^M\}$, we define $R(i)$ to be the supremum of achievable total timely throughputs, given that the assignment $i$ is used for all intervals.

Define $R^*\triangleq \max_{i\in\{1,2,\ldots, N^M\}} R(i)$.
We will now prove that there is a greedy static policy which achieves $R^*$.
It is sufficient to show that for all $i\in\{1,2,\ldots, N^M\}$ $R(i)$ can be achieved using a greedy static policy.
Consider an arbitrary $i$, $i\in\{1,2,\ldots, N^M\}$.
Since the set of packets assigned to different {\sf AP}'s are disjoint, and their channels are independent of each other, $R(i)$ is just the summation of supremums of timely throughputs on different {\sf AP}'s, when assignment $i$ is used for all intervals.

The result in \cite{Staticscheduling}  states that the timely throughput region for each {\sf AP} is a scaled version of a polymatroid (i.e., a polymatroid that has each of its coordinates scaled by a constant factor).
Moreover, in \cite{Polymatroid} it has been shown that each of the corner points of this polytope can be achieved using a static policy.
Therefore, when assignment $i$ is used, there is a static policy which achieves $R(i)$.

Furthermore, when using a static policy, according to LLN the resulting $\text{{\sf T}}^{\text{3}}$ is equal to expected number of deliveries during one interval for that static policy.
So, $R(i)$ is the highest expected number of deliveries among static  scheduling policies that use assignment policy $i$.

The following lemma implies that the highest expected number of deliveries among the static policies that use the same assignment policy is achieved by the one which serves the packets based on their channel success probabilities, and in decreasing order.
To prove this, it is sufficient to prove that for any given order if we swap the order of two adjacent clients in such a way that the client with the higher corresponding $p_{ij}$ is prioritized higher, then the expected number of deliveries will be no less than before swapping. 

\begin{lemma}\label{komaki}
Let $\tau \in \mathbb N$ and $G_1, G_2, \ldots, G_q$ be independent geometric random variables with parameters $p_1, p_2, \ldots, p_q$, respectively. Suppose that $p_d<p_{d+1}$ for some $d\in\{1,2,\ldots ,q-1\}$. In addition, let $G'_1, G'_2, \ldots, G'_q$ be independent geometric random variables (and independent of $G_i$'s) with parameters $p_1, p_2, \ldots $, $p_{d-1},p_{d+1},p_d,p_{d+2},\ldots ,p_q$, respectively.
Then, 
\begin{equation}
\sum_{i=1}^{q}\Pr(\sum_{j=1}^{i}G_j\leq \tau)\leq\sum_{i=1}^{q}\Pr(\sum_{j=1}^{i}G'_j\leq \tau).\nonumber
\end{equation}
\end{lemma}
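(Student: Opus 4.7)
The plan is to verify the inequality term by term on the index $i$, partitioning $\{1,\ldots,q\}$ into the three ranges $i<d$, $i=d$, and $i>d$. Because the swap only touches positions $d$ and $d+1$, each probability $\Pr(\sum_{j=1}^{i}G_j\leq\tau)$ is either identical in distribution to its primed counterpart or differs in exactly one variable, so a three-way split cleanly isolates where the gain comes from.

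For $i<d$ the two partial sums involve the same random variables $G_1,\ldots,G_i$, so the probabilities coincide trivially. For $i\geq d+1$ the partial sum $\sum_{j=1}^{i}G_j$ splits as $\bigl(\sum_{j<d}G_j\bigr)+(G_d+G_{d+1})+\bigl(\sum_{d+1<j\leq i}G_j\bigr)$, and analogously for the primed version with $G'_d+G'_{d+1}$ replacing the middle block. By construction, $G'_d+G'_{d+1}$ is the sum of two independent geometrics with parameters $p_{d+1}$ and $p_d$, which has the same distribution as $G_d+G_{d+1}$ by commutativity of addition of independent summands. Since the outer blocks match in distribution and everything is independent, the two full partial sums are identically distributed, so their probabilities are equal for every $i\geq d+1$.

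The only remaining case is $i=d$, which is where the gain actually materializes. Coupling the primed variables so that $G'_j=G_j$ for $j<d$, we have $\sum_{j=1}^{d}G_j=A+G_d$ and $\sum_{j=1}^{d}G'_j=A+G'_d$, with common prefix $A\triangleq\sum_{j=1}^{d-1}G_j$. The key ingredient is a one-line stochastic dominance fact for geometrics: since $p_{d+1}>p_d$, for every integer $k\geq 0$ we have $\Pr(G'_d>k)=(1-p_{d+1})^k\leq(1-p_d)^k=\Pr(G_d>k)$, i.e.\ $G'_d$ is stochastically smaller than $G_d$. Conditioning on $A$ and applying this dominance gives $\Pr(A+G'_d\leq\tau)\geq\Pr(A+G_d\leq\tau)$, and summing across the three ranges yields the claimed inequality.

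I do not anticipate any real obstacle: the argument reduces to noticing that only one of the $q$ partial sums is sensitive to the order of positions $d$ and $d+1$, and that for that single term the monotonicity of $(1-p)^k$ in $p$ is exactly the comparison needed. The only mild care required is the bookkeeping (and implicit coupling) in the $i\geq d+1$ case, which is routine once the partial sums are written out explicitly as three independent blocks.
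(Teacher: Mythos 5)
Your proof is correct and follows essentially the same route as the paper's: both isolate the single index $i=d$ as the only term that changes (the $i<d$ terms are trivially equal and the $i\geq d+1$ terms are equal because $G_d+G_{d+1}$ and $G'_d+G'_{d+1}$ are identically distributed), and both settle that term by the stochastic dominance of $\mathrm{Geom}(p_{d})$ over $\mathrm{Geom}(p_{d+1})$. Your explicit tail computation $\Pr(G'_d>k)=(1-p_{d+1})^k\leq(1-p_d)^k$ just spells out the step the paper invokes in one line.
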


\begin{proof}
 We have
\begin{eqnarray}
\sum_{i=1}^{q}\Pr(\sum_{j=1}^{i}G_j\leq \tau)
&=&\sum_{i=1}^{d-1}\Pr(\sum_{j=1}^{i}G_j\leq \tau)
+\Pr(\sum_{j=1}^{d}G_j\leq \tau)
+\sum_{i=d+1}^{q}\Pr(\sum_{j=1}^{i}G_j\leq \tau)\nonumber\\
&=&\sum_{i=1}^{d-1}\Pr(\sum_{j=1}^{i}G'_j\leq \tau)
+\Pr(G_d+\sum_{j=1}^{d-1}G'_j\leq \tau)
+\sum_{i=d+1}^{q}\Pr(\sum_{j=1}^{i}G'_j\leq \tau)\nonumber\\
&\stackrel{(a)}{\leq}&\sum_{i=1}^{d-1}\Pr(\sum_{j=1}^{i}G'_j\leq \tau)
+\Pr(G'_d+\sum_{j=1}^{d-1}G'_j\leq \tau)
+\sum_{i=d+1}^{q}\Pr(\sum_{j=1}^{i}G'_j\leq \tau)\nonumber\\
&=&\sum_{i=1}^{q}\Pr(\sum_{j=1}^{i}G'_j\leq \tau),\nonumber
\end{eqnarray}
where (a) follows from the fact that success probability of $G_d$, which is $p_d$, is less than success probability of $G'_d$, which is $p_{d+1}$.
\end{proof}

Lemma \ref{komaki} implies that when serving packets of some clients on an {\sf AP}, one should serve them according to their channel success probabilities, and in decreasing order in order to maximize the expected number of deliveries.
This is an intuitive fact, and Lemma \ref{komaki} formalizes this fact. 
In conclusion, $R^*$ can be achieved by a greedy static policy.

\subsection{Proving that no policy in general can achieve any $\text{{\sf T}}^{\text{3}}$ better than $R^*$:}
%In this part we prove that for any scheduling policy, the total timely throughput is no greater than $R^*$.
Consider an arbitrary scheduling policy $\eta\in \mathcal S$ (not necessarily a static policy); we will show that $\text{{\sf T}}^{\text{3}}(\eta)\leq R^*$.
Define the variable $N_j^i(k,\eta)$ to denote the outcome for client $j$ using assignment $i$ on interval $k$, i.e. if packet of client $j$ is delivered during interval $k$ when scheduling policy $\eta$ and assignment $i$ are used $N_j^i(k,\eta)=1$; otherwise $N_j^i(k,\eta)=0$.
Moreover, define function $U$ as a mapping which is used by $\eta$ from intervals to assignment policies:
\begin{equation}
U:[\mathbb N, \mathcal S]\rightarrow \{1,2,\ldots, N^M\}.\nonumber
\end{equation}
Therefore, $U(k,\eta)$ is the assignment policy used by $\eta$ for interval $k, k\in\mathbb N$.
We call $\omega=\{U(k,\eta),N_j^i(k,\eta)\}_{k=1}^{\infty}$ an outcome for policy $\eta$ over infinite intervals.
In addition, we denote the set of all possible outcomes for policy $\eta$ over infinite intervals by $\Omega(\eta)$.

In addition, define $I$ to be the set of assignments that occur infinite times. More precisely,
\begin{equation}
I\triangleq \{i\in \{1,2,\ldots, N^M\}| \forall L\in \mathbb N, \exists T\in\mathbb N \quad s.t.\quad L\leq \sum_{k=1}^{T}1(U(k,\eta)=i)\}.\nonumber
\end{equation}

According to the definition of $\text{{\sf T}}^{\text{3}}(\eta)$
\footnote[1]{$\text{{\sf T}}^{\text{3}}(\eta)=\sup \quad R\quad s.t.\quad \limsup_{T\to\infty}     \frac{\sum_{k=1}^{T}\sum_{j=1}^{M} \sum_{i=1}^{N^M}N_j^i(k,\eta)}{T}\geq R$ with probability $1$.}
there exists a subset of $\Omega(\eta)$, denoted by $A$, such that $P(A)=1$ and for all  
$\omega=\{U(k,\eta),N_j^i(k,\eta)\}_{k=1}^{\infty}$ and $\omega\in A$, 
$$\text{{\sf T}}^{\text{3}}(\eta)\leq  \limsup_{T\to\infty} (    \frac{\sum_{k=1}^{T}\sum_{j=1}^{M} \sum_{i=1}^{N^M}N_j^i(k,\eta)}{T}).$$

Therefore, for any outcome $\omega=\{U(k,\eta),N_j^i(k,\eta)\}_{k=1}^{\infty}\in A$, we have 
\begin{align}
\text{{\sf T}}^{\text{3}}(\eta)&\leq  \limsup_{T\to\infty} (     \frac{\sum_{k=1}^{T}\sum_{j=1}^{M}\sum_{i=1}^{N^M}N_j^i(k,\eta)}{T}) \stackrel{(a)}{=}      \limsup_{T\to\infty}(\frac{\sum_{k=1}^{T}\sum_{j=1}^{M}\sum_{i\in I}^{}N_j^i(k,\eta)}{T})\nonumber\\
&\stackrel{(b)}{=}      \limsup_{T\to\infty}(\sum_{i\in I}^{}(\frac{{\sum_{k=1}^{T}1(U(k,\eta)=i)}}{T})\times(\frac{\sum_{k=1}^{T}\sum_{j=1}^{M}N_j^i(k,\eta)}{\sum_{k=1}^{T}1(U(k,\eta)=i)})),\label{lastline}
\end{align}
where (a) follows from the fact that the assignment $i$, where $i\notin I$, does not contribute to the value of limsup according to the definition of $I$.
In addition, (b) is true because the fraction $\frac{\sum_{k=1}^{T}\sum_{j=1}^{M}\sum_{i\in I}^{}N_j^i(k,\eta)}{\sum_{k=1}^{T}1(U(k,\eta)=i)}$ is properly defined for $i\in I$ since its denominator is not zero as $T\to\infty$.
The reason why the denominator is not zero as $T\to\infty$ is that there exists $r\in \mathbb N$ such that $\sum_{k=1}^{r}1(U(k,\eta)=i) \geq 1$ for $i\in I$ according to the definition of $I$. This means that for $T>r$, the fraction is well-defined.

Moreover, since $\limsup_{T\to\infty}(\frac{\sum_{k=1}^{T}\sum_{j=1}^{M}\sum_{i\in I}^{}N_j^i(k,\eta)}{\sum_{k=1}^{T}1(U(k,\eta)=i)})$ is the average number of successful deliveries for intervals for which assignment $i$ is applied, 
there exists a subset of $\Omega(\eta)$, denoted by $B$, such that $P(B)>0$ and for all  
$\omega=\{U(k,\eta),N_j^i(k,\eta)\}_{k=1}^{\infty}$, $\omega\in B$,
 $$\limsup_{T\to\infty}(\frac{\sum_{k=1}^{T}\sum_{j=1}^{M}N_j^i(k,\eta)}{\sum_{k=1}^{T}1(U(k,\eta)=i)})\leq R^*.
\footnote[1]{This is true because if $\limsup_{T\to\infty}(\frac{\sum_{k=1}^{T}\sum_{j=1}^{M}N_j^i(k,\eta)}{\sum_{k=1}^{T}1(U(k,\eta)=i)})> R^*$ with probability 1, then we have a scheduling policy which uses the same assignment policy for all intervals, and achieves a $\text{{\sf T}}^{\text{3}}$ which is strictly greater than $R^*$ which is not possible according to the result in part A of the proof.}$$

In addition, note that $P(A\cap B)=P(A)-P(A\cup B)+P(B)=P(B)>0$, which means $A\cap B$ is not empty.
Hence, using (\ref{lastline}) there is an outcome of $\eta$,  
$\omega=\{U(k,\eta),N_j^i(k,\eta)\}_{k=1}^{\infty}$ and $\omega\in A\cap B$, for which 
\begin{align}
\text{{\sf T}}^{\text{3}}(\eta)&\leq  \limsup_{T\to\infty}(\sum_{i\in I}^{}(\frac{{\sum_{k=1}^{T}1(U(k,\eta)=i)}}{T})\times(\frac{\sum_{k=1}^{T}\sum_{j=1}^{M}N_j^i(k,\eta)}{\sum_{k=1}^{T}1(U(k,\eta)=i)}))\nonumber\\
&\stackrel{(c)}{\leq} \limsup_{T\to\infty}(\sum_{i\in I}^{}\frac{\sum_{k=1}^{T}1(U(k,\eta)=i)}{T})  \times R^* \stackrel{(d)}{\leq}R^*,\nonumber
\end{align}
where (c) follows from the fact that for $\omega=\{U(k,\eta),N_j^i(k,\eta)\}_{k=1}^{\infty}$, and $\omega\in A\cap B$,\\ 
$\limsup_{T\to\infty}(\frac{\sum_{k=1}^{T}\sum_{j=1}^{M}N_j^i(k,\eta)}{\sum_{k=1}^{T}1(U(k,\eta)=i)})\leq R^*$, and also using Lemma \ref{limsup}.
Finally (d) follows from the fact that for each interval the scheduling policy can choose at most one of the $N^M$ different possible assignments, or in other words, $\sum_{i\in I}^{}\frac{\sum_{k=1}^{T}1(U(k,\eta)=i)}{T}\leq 1$ for all $T\in\mathbb N$.

Therefore, for scheduling policy $\eta$, $\text{{\sf T}}^{\text{3}}(\eta)\leq R^*$.
Using Part A and Part B we conclude that $C_{\text{{\sf T}}^{\text{3}}}=R^*$, and $C_{\text{{\sf T}}^{\text{3}}}$ can be achieved using a greedy static policy. 
\end{proof}

Below we provide Lemma \ref{limsup} and its proof.

\begin{lemma}\label{limsup}
Suppose $L$ is an integer, and $\{A_{1T}\}_{T=1}^{\infty},\{A_{2T}\}_{T=1}^{\infty},\ldots, \{A_{LT}\}_{T=1}^{\infty}$, and\\ $\{B_{1T}\}_{T=1}^{\infty},\{B_{2T}\}_{T=1}^{\infty},\ldots, \{B_{LT}\}_{T=1}^{\infty}$ are non-negative real sequences, where\\
$\limsup_{T\to\infty}\sum_{i=1}^{L}A_{iT}<\infty,$
and 
for any $i\in\{1,2,\ldots ,L\},$  $\limsup_{T\to\infty}B_{iT}\leq B.$
Then, 
\begin{equation}
\limsup_{T\to\infty}\sum_{i=1}^{L}A_{iT}B_{iT}\leq ( \limsup_{T\to\infty}\sum_{i=1}^{L}A_{iT})\times B. \nonumber
\end{equation}
\end{lemma}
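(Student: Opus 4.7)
The plan is to reduce to the uniform bound by exploiting that the index set $\{1,2,\ldots,L\}$ is finite, then pull the bound out of the sum and take $\epsilon \to 0$. This mirrors the standard fact that $\limsup(f \cdot g) \leq (\limsup f)(\lim g)$ when $g$ converges and $f$ is nonnegative, extended here to a finite sum.

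First I would fix an arbitrary $\epsilon > 0$. Since $\limsup_{T \to \infty} B_{iT} \leq B$, for each $i \in \{1,2,\ldots,L\}$ there exists $T_i(\epsilon) \in \mathbb{N}$ such that $B_{iT} < B + \epsilon$ for all $T > T_i(\epsilon)$. Crucially, because $L$ is finite, I can set $T_\epsilon \triangleq \max_{1 \leq i \leq L} T_i(\epsilon)$, which is also finite, and conclude that $B_{iT} < B + \epsilon$ holds \emph{simultaneously} for every $i$ whenever $T > T_\epsilon$. This uniformization step is the key maneuver; it would fail if $L$ were allowed to be infinite.

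Next, for any $T > T_\epsilon$, using the nonnegativity of $A_{iT}$, I would bound
\begin{equation}
\sum_{i=1}^{L} A_{iT} B_{iT} \leq (B+\epsilon)\sum_{i=1}^{L} A_{iT}. \nonumber
\end{equation}
Taking $\limsup_{T \to \infty}$ of both sides and using monotonicity of $\limsup$ together with the fact that multiplying by the nonnegative constant $(B+\epsilon)$ commutes with $\limsup$, I would obtain
\begin{equation}
\limsup_{T \to \infty}\sum_{i=1}^{L} A_{iT} B_{iT} \leq (B+\epsilon)\limsup_{T \to \infty}\sum_{i=1}^{L} A_{iT}. \nonumber
\end{equation}
The hypothesis $\limsup_{T \to \infty}\sum_{i=1}^{L} A_{iT} < \infty$ ensures the right-hand side is finite, so letting $\epsilon \downarrow 0$ yields the desired inequality.

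I do not anticipate any serious obstacle: the only delicate point is justifying that the threshold $T_\epsilon$ can be chosen uniformly in $i$, which is immediate from finiteness of $L$, and making sure the finiteness hypothesis on $\limsup_T \sum_i A_{iT}$ is invoked so that the $\epsilon$-limit passes cleanly to $0$ without producing an indeterminate $0 \cdot \infty$ form.
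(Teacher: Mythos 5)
Your proposal is correct and follows essentially the same route as the paper's own proof: fix $\epsilon>0$, use finiteness of $L$ to obtain a single threshold beyond which $B_{iT}\leq B+\epsilon$ for all $i$ simultaneously, pull $(B+\epsilon)$ out of the sum using nonnegativity of the $A_{iT}$, take $\limsup$, and let $\epsilon\downarrow 0$. No changes needed.
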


\begin{proof}
Consider an arbitrary $\epsilon>0$.
Since $\forall i\in\{1,2,\ldots, L\} \quad \limsup_{T\to\infty}B_{iT}\leq B$, 
$$\exists M\in \mathbb N,\quad s.t.\quad \forall i\in\{1,2,\ldots, L\},T\geq M\quad B_{iT}\leq B+\epsilon.$$

Therefore, for all $r\geq M$ we will have
$\sup_{T\geq r}\sum_{i=1}^{L}A_{iT}B_{iT}\leq \sup_{T\geq r}\sum_{i=1}^{L}A_{iT}(B+\epsilon).$
Hence,
$\lim_{r\to\infty}\sup_{T\geq r}\sum_{i=1}^{L}A_{iT}B_{iT}\leq (B+\epsilon)\lim_{r\to\infty}\sup_{T\geq r}\sum_{i=1}^{L}A_{iT}.$
Since the inequality is true for any $\epsilon >0$, we have
$\limsup_{T\to\infty}\sum_{i=1}^{L}A_{iT}B_{iT}\leq (\limsup_{T\to\infty}\sum_{i=1}^{L}A_{iT})\times B.$
\end{proof}

\section{Proof of Lemma \ref{lemmaup}}\label{rightlemma}

For $l=0$, we have $p_i<\frac{1}{\tau}$, for $i=1,2,\ldots, q$.
Therefore, $E[Y]$ in this case is less than that of the case in which $p_1=p_2=\ldots =p_{q}=\frac{1}{\tau}$. 
On the other hand, for $p_1=p_2=\ldots =p_{q}=\frac{1}{\tau}$  $E[Y]\leq\tau \times\frac{1}{\tau}=1.$
Hence, the statement is true for $l=0$.
Now, suppose that $l>0$.
We know that $l= \max \quad \hat l\quad s.t.\quad \sum_{i=1}^{\hat l}1/p_i\leq \tau$.
Therefore, we have 
\begin{equation}
\sum_{i=1}^{l}\frac{1}{p_i}\leq \tau< \sum_{i=1}^{l+1}\frac{1}{p_i}.\nonumber
\end{equation}
We will show that $E[Y]$ can be at most $l+1$.
Without loss of generality we can omit $p_i$'s that are equal to zero; because by omitting them neither of $E[Y]$ nor $l$ change, and $E[Y]-l$ would remain the same. So, we suppose that $1\geq p_1\geq p_2\geq \ldots\geq p_{q}> 0$. 
It is sufficient to prove the lemma for the case of $q=\tau$; because if we have less than $\tau$ geometric random variables, $E[Y]$ will be less. On the other hand, we do not need to consider the case $q>\tau$; since for $i>\tau$, $\Pr(\sum_{j=1}^{i}G_j\leq \tau)=0$.
 Therefore, we suppose that $q=\tau$.

Let $X_l=\sum_{i=1}^{l} G_i$, where $G_i=Geom(p_i)$.
By this notation we have:
\begin{equation}
\quad\Pr(X_{l}>\tau)=\sum_{i=0}^{l-1} \Pr(Y=i).\nonumber
\end{equation}

Now we write down the expression for $E[Y]$:
\begin{equation}\label{expectt1}
E[Y]=\sum_{i=0}^{l-1} i\Pr(Y=i)+\sum_{t=l}^{\tau}\Pr(X_{l}=t)(\sum_{i=l}^{\tau}i\Pr(Y=i|X_{l}=t)).\footnote[1]{Note that the summation on $t$ here is from $l$ to $\tau$;  since for $0\leq t<\ell$, $\Pr(X_{l}=t)=0$.}
\end{equation}
Since $1\geq p_l\geq p_{l+1}\geq \ldots\geq p_{\tau}>0$, $E[Y]$ is less than the case where $p_l=p_{l+1}=\ldots=p_{\tau}$, although $l$ remains the same. So it is sufficient to prove Theorem 1 for the case where $p_l=p_{l+1}=\ldots=p_{\tau}$. 
%Let $\{B_t(p_l)\}_{t=1}^{\infty}$ denote the sequence of random variables with success probability $p_l$. 
For $t\leq \tau$ if we set $p_l=p_{l+1}=\ldots=p_{\tau}$ we have
\begin{equation}
\sum_{i=l}^{\tau}i\Pr(Y=i|X_{l}=t)=E[Y|X_{l}=t]=l+(\tau-t)p_l.\label{expectt2}
\end{equation}
Therefore, by using ($\ref{expectt1}$) and ($\ref{expectt2}$) we have 
\begin{align}
&E[Y]= \sum_{i=0}^{l-1} i\Pr(Y=i) + \sum_{t=l}^{\tau} (l+p_l(\tau-t))\Pr(X_{l}=t)\nonumber\\
&= \sum_{i=0}^{l-1} i\Pr(Y=i)+( l+p_l\tau)(1-\Pr(X_{l}>\tau))-  p_l[\sum_{t=l}^{\infty}t\Pr(X_l=t)-\sum_{t=\tau+1}^{\infty}t\Pr(X_l=t)]\nonumber
\end{align}
\begin{align}
&=\sum_{i=0}^{l-1} i\Pr(Y=i)+ (l+p_l\tau)-(l+p_l\tau)\sum_{i=0}^{l-1}\Pr(Y=i)-p_l\sum_{i=1}^{l}\frac{1}{p_i}+p_l\sum_{t=\tau+1}^{\infty} t\Pr(X_{l}=t)\nonumber\\
&=\sum_{i=0}^{l-1}( i-l-p_l\tau)\Pr(Y=i)+(l+p_l(\tau-\sum_{i=1}^{l}\frac{1}{p_i}))+p_l\sum_{t=\tau+1}^{\infty} t\Pr(X_{l}=t)\nonumber\\
&\stackrel {(a)}{<}\sum_{i=0}^{l-1}( i-l-p_l\tau)\Pr(Y=i)+l+1+p_l\sum_{t=\tau+1}^{\infty} t\Pr(X_{l}=t),\label{asb}
\end{align}
where the last inequality (a) follows from $\tau<\sum_{i=1}^{l+1}\frac{1}{p_i}$ and the assumption that $p_{l+1}= p_l$.
Now, we only need to rewrite $p_l\sum_{t=\tau+1}^{\infty} t\Pr(X_{l}=t)$ in terms of $Y$.
For $t>\tau$ we have
\begin{equation}
\Pr(X_{l}=t)=\sum_{i=0}^{l-1}\Pr(X_{l}=t|Y=i)\Pr(Y=i).\nonumber
\end{equation}
Therefore, 
\begin{eqnarray}
\sum_{t=\tau+1}^{\infty} t\Pr(X_{l}=t)&=&\sum_{t=\tau+1}^{\infty}t(\sum_{i=0}^{l-1}\Pr(X_{l}=t|Y=i)\Pr(Y=i))\nonumber\\
&=&\sum_{i=0}^{l-1} \Pr(Y=i)(\sum_{t=\tau+1}^{\infty} t\Pr(X_{l}=t|Y=i)).\nonumber
\end{eqnarray}
But due to memoryless property of geometric distribution, we know that 
\begin{align}
&\sum_{t=\tau+1}^{\infty} (t-\tau)\Pr(X_{l}=t|Y=i)=\sum_{t=\tau+1}^{\infty} (t-\tau)\Pr(\sum_{j=i+1}^{l} G_j=t-\tau)\nonumber\\
&=\sum_{t=1}^{\infty} t\Pr(\sum_{j=i+1}^{l} G_j=t)=\sum_{j=i+1}^{l} \frac{1}{p_j},\qquad\forall i\leq l-1. \nonumber
\end{align}
Therefore,
%\begin{equation}
$\sum_{t=\tau+1}^{\infty} t\Pr(X_{l}=t|Y=i)=\tau+\sum_{j=i+1}^{l} \frac{1}{p_j}.$
%\end{equation}
Hence,
\begin{equation}\label{equality}
p_l\sum_{t=\tau+1}^{\infty} t\Pr(X_{l}=t)=\sum_{i=0}^{l-1} \Pr(Y=i)(p_l\tau+\sum_{j=i+1}^{l} \frac{p_l}{p_j}).
\end{equation}
Substituting (\ref{equality}) into (\ref{asb}) we get
\begin{eqnarray}
E[Y]&< &\sum_{i=0}^{l-1}( i-l-p_l\tau)\Pr(Y=i)+l+1+\sum_{i=0}^{l-1} \Pr(Y=i)(p_l\tau+\sum_{j=i+1}^{l} \frac{p_l}{p_j})\nonumber\\
&=&l+1+\sum_{i=0}^{l-1} \Pr(Y=i)(i-l-p_l\tau+p_l\tau+\sum_{j=i+1}^{l} \frac{p_l}{p_j})\leq l+1,\nonumber
\end{eqnarray}
where the last inequality follows from the fact that $\forall j\in\{i+1,\ldots ,l\}\quad p_l\leq p_j.$

\section{Proof of Lemma \ref{lemmadown}}\label{leftlemma}

We will show that $E[Y]>l-2\sqrt{l+\frac{1}{4}}$.
It is sufficient to prove Lemma \ref{lemmadown} for  $q=l$; because for $q>l$, $E[Y]$ would only increase. On the other hand, $q$ cannot be less than $l$ according to the assumption $l= \max \hat l\quad s.t.\quad \sum_{i=1}^{\hat l}1/p_i\leq \tau$. Therefore, from now on we suppose $q=l$. By our notation we have
\begin{equation}\label{GY}
\quad\Pr(\sum_{j=1}^{i}G_j>\tau)=\sum_{j=0}^{i-1} \Pr(Y=j),\quad i=1,2,\ldots ,l.
\end{equation}
We now bound $l-E[Y]$ from above.
\begin{eqnarray}
l-E[Y]&=&l-\sum_{i=1}^{l}\Pr(Y\geq i)=\sum_{i=1}^{l}(1- \Pr(Y\geq i))=\sum_{i=1}^{l}\Pr(Y<i)\nonumber\\
&\stackrel{(a)}{=}&\sum_{i=1}^{l}\Pr(\sum_{j=1}^{i}G_j>\tau)\stackrel{(b)}{\leq}\sum_{i=1}^{l}\Pr(\sum_{j=1}^{i} G_j>\sum_{j=1}^{l}\frac{1}{p_j})\nonumber\\
&\leq&1+\sum_{i=1}^{l-1}\Pr(|\sum_{j=1}^{i} (G_j-\frac{1}{p_j})|>\sum_{j=i+1}^{l}\frac{1}{p_j})\nonumber \\
&\stackrel{(c)}{\leq}&1+\sum_{i=1}^{l-1}    \min(1,\frac{\mathrm{var}(\sum_{j=1}^{i} G_j)}{(\sum_{j=i+1}^{l}\frac{1}{p_j})^2})      \stackrel{(d)}{\leq } 1+\sum_{i=1}^{l-1}\min(1,\frac{\sum_{j=1}^{i}\frac{1}{p_j^2}}{(\sum_{j=i+1}^{l}\frac{1}{p_j})^2}),\nonumber
\end{eqnarray}
where (a) follows from (\ref{GY});  (b) follows from $\sum_{i=1}^{ l}1/p_i\leq \tau$; 
 (c) follows from Chebyshev's inequality, where $\mathrm{var}(\sum_{j=1}^{i} G_j)$ is the variance of the random variable $\sum_{j=1}^{i} G_j$;
and (d) follows due to independence of $G_i$'s, which results in
%\begin{equation}
$\mathrm{var}(\sum_{j=1}^{i} G_j)=\sum_{j=1}^{i}\mathrm{var}( G_j)=\sum_{j=1}^{i} \frac{1-p_j}{p_j^2}< \sum_{j=1}^{i} \frac{1}{p_j^2},\quad  i=1,2,\ldots,l.$
%\end{equation}
%Therefore, 
%\begin{equation}
%l-E[Y]\leq 1+\sum_{i=1}^{l-1}\min(1,\frac{\sum_{j=1}^{i}\frac{1}{p_j^2}}{(\sum_{j=i+1}^{l}\frac{1}{p_j})^2}).
%\end{equation}
But since $p_1\geq p_2\geq \ldots\geq p_l$, we have $\sum_{j=1}^{i} \frac{1}{p_j^2}\leq \frac{i}{p_i^2}$ and $(\sum_{j=i+1}^{l} \frac{1}{p_j})^2\geq (\frac{l-i}{p_i})^2$.
Therefore, 
\begin{equation}\label{leymin}
l-E[Y]\leq 1+\sum_{i=1}^{l-1} \min(1,\frac{\frac{i}{p_i^2}}{\frac{(l-i)^2}{p_i^2}})=1+\sum_{i=1}^{l-1} \min(1,\frac{i}{(l-i)^2}).
\end{equation}
Hence, by (\ref{leymin}) and applying Lemma \ref{ell}
%, which is stated and proved as following, 
the proof of Lemma \ref{lemmadown} will be complete.
\begin{lemma}\label{ell} Assume $l\in \mathbb{N}$, and $l>1$. Then, 
%\begin{equation}
$1+\sum_{i=1}^{l-1} \min(1,\frac{i}{(l-i)^2})<2\sqrt {l+\frac{1}{4}}\label{equell}.$
%\end{equation}
\end{lemma}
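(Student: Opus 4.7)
The plan is to pass from the discrete sum to a continuous envelope and apply a Riemann sum comparison. Specifically, I would define $g(x) = \min(1, x/(l-x)^2)$ on $[0,l)$ and first verify that $g$ is nondecreasing: on the region where $g(x) = x/(l-x)^2$, a direct derivative computation gives $g'(x) = (l+x)/(l-x)^3 > 0$, and $g\equiv 1$ thereafter, so monotonicity follows. Monotonicity yields the standard Riemann estimate $\sum_{i=1}^{l-1} g(i) \leq \int_1^{l} g(x)\,dx$, which reduces the lemma to bounding a single explicit integral.

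Next I would compute the integral by splitting at the transition point $i^{\ast} = l - j^{\ast}$, where $j^{\ast} = (\sqrt{1+4l}-1)/2 = \sqrt{l+1/4}-1/2$ is the positive root of $x^2 + x = l$. The easy piece $\int_{i^{\ast}}^{l} g\,dx$ equals $l - i^{\ast} = j^{\ast}$. For the harder piece $\int_{1}^{i^{\ast}} x/(l-x)^2\,dx$, I would use the antiderivative $l/(l-x) + \ln(l-x)$ and simplify with the key algebraic identity $l/j^{\ast} = j^{\ast} + 1$ (which comes for free from $j^{\ast 2}+j^{\ast}=l$). This should produce the clean expression $\int_{1}^{i^{\ast}} x/(l-x)^2\,dx = j^{\ast} - \tfrac{1}{l-1} - \ln\tfrac{l-1}{j^{\ast}}$, so the full integral equals $2j^{\ast} - \tfrac{1}{l-1} - \ln\tfrac{l-1}{j^{\ast}}$.

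The final step is to observe that $2j^{\ast} = 2\sqrt{l+1/4}-1$, so the target inequality $1 + \sum_{i=1}^{l-1} g(i) < 2\sqrt{l+1/4}$ reduces to showing $\tfrac{1}{l-1} + \ln\tfrac{l-1}{j^{\ast}} > 0$ for every integer $l \geq 2$. For $l=2$ we have $j^{\ast} = l-1 = 1$, so the logarithm vanishes but $\tfrac{1}{l-1} = 1 > 0$; for $l \geq 3$ a direct check gives $j^{\ast} < l-1$, making both terms strictly positive. In either case the strict inequality is secured.

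The main obstacle is the integral evaluation and, more subtly, choosing a continuous envelope sharp enough to yield the exact constant $2\sqrt{l+1/4}$. The identity $l/j^{\ast} = j^{\ast}+1$ is what collapses the algebra into a clean form; cruder estimates (for instance bounding the tail $\sum 1/j^2$ by $1/k$ and plugging in $k \approx j^{\ast}$) lose enough to fail for small $l$, as one checks already at $l=2$. So the sharp choice of envelope together with the algebraic identity is essential to the proof.
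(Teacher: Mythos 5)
Your proof is correct, and it takes a genuinely different route from the paper's. The paper locates the integer threshold $m$ at which $\frac{i}{(l-i)^2}$ crosses $1$, sandwiches $m$ between $l-\sqrt{l+\frac{1}{4}}\pm\frac{1}{2}$, and then estimates $\sum_{j=l-m}^{l-1}\frac{l-j}{j^2}$ by the partial-fraction bound $\frac{l-j}{j^2}<\frac{l-j}{j(j-1)}$, telescoping, and a Cauchy--Schwarz lower bound on $\sum 1/j$; the resulting algebraic remainder is shown to be negative only for $l\geq 18$, so the cases $2\leq l\leq 17$ must be checked numerically. You instead exploit the monotonicity of $g(x)=\min\bigl(1,\frac{x}{(l-x)^2}\bigr)$ to pass to $\int_1^l g$, evaluate that integral exactly using the antiderivative $\frac{l}{l-x}+\ln(l-x)$ and the identity $l/j^{\ast}=j^{\ast}+1$, and reduce the claim to the evident positivity of $\frac{1}{l-1}+\ln\frac{l-1}{j^{\ast}}$ (using $j^{\ast}\leq l-1$ for $l\geq 2$, with equality only at $l=2$ where the $\frac{1}{l-1}$ term saves you). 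The payoff of your approach is uniformity: a single clean argument valid for all integers $l\geq 2$ with no case split and no numerical verification, whereas the paper's discrete manipulations, while elementary, leave a residue that only becomes negative for large $l$. All the individual steps you sketch check out: the derivative $\frac{l+x}{(l-x)^3}>0$, the Riemann estimate $\sum_{i=1}^{l-1}g(i)\leq\int_1^l g(x)\,dx$ for nondecreasing $g$, the integral evaluation, and the final positivity.
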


\begin{proof} 
For $l<18$ the statement of the Lemma can be verified numerically. Therefore, suppose that $l\geq 18$.
Let $f(i)\triangleq \frac{i}{(l-i)^2}$, for $i \in \mathbb {N}, 1\leq i\leq l-1$; and consider the following three observations regarding the function $f(.)$:
\begin{enumerate}
\item $f(i)$ increases as $i$ increases for $i \in \mathbb {N}, 1\leq i\leq l-1. $
\item $f(1)=\frac{1}{(l-1)^2}<1$.
\item $f(l-1)=\frac{l-1}{1}>1$.
\end{enumerate}
Therefore, $ \exists m\in \mathbb {N},1\leq m< l-1 \text{ such that }$
\begin{equation}
\begin{split}
\frac{m}{(l-m)^2}\leq 1<\frac{m+1}{(l-(m+1))^2}\label{z}.
\end{split}
\end{equation}
Note that $m\ne l-1$, because $\frac{l-1}{(l-(l-1))^2}>1$.
We  rewrite the inequalities in (\ref{z}) as
\begin{equation}
l-\sqrt {l+\frac{1}{4}}-\frac{1}{2}<m\leq l-\sqrt {l+\frac{1}{4}}+\frac{1}{2}\label{ez}.
\end{equation}
%Now, we rewrite $1+\sum_{i=1}^{l-1} \min(1,\frac{i}{(l-i)^2})$:
In addition,
\begin{equation}
1+\sum_{i=1}^{l-1} \min(1,\frac{i}{(l-i)^2})=1+\sum_{i=1}^{m} \min(1,\frac{i}{(l-i)^2})+\sum_{i=m+1}^{l-1} \min(1,\frac{i}{(l-i)^2}).\nonumber
\end{equation}
But from (\ref{z}) and the fact that $f(i)=\frac{i}{(l-i)^2}$ increases by increase of $i$, we have
\begin{align}
1&+\sum_{i=1}^{l-1} \min(1,\frac{i}{(l-i)^2})=1+\sum_{i=1}^{m} \frac{i}{(l-i)^2}+(l-1-m)=l-m+\sum_{j=l-m}^{l-1} \frac{l-j}{j^2}\nonumber\\
&< l-m+\sum_{j=l-m}^{l-1} \frac{l-j}{j(j-1)}
= l-m+\sum_{j=l-m}^{l-1} (\frac{l-j}{j-1}-\frac{l-j}{j})= l-m+\frac{m}{l-m-1}-\sum_{j=l-m}^{l-1}\frac{1}{j}\nonumber\\
&\stackrel{(a)}{<} l-m+\frac{m}{l-m-1}-\frac{m}{l-\frac{m+1}{2}}=l-m+\frac{m(m+1)}{(l-m-1)(l-m+l-1)}\nonumber\\
&\stackrel{(b)}{<} (\sqrt{l+\frac{1}{4}}+\frac{1}{2})+\frac{(l-\sqrt{l+\frac{1}{4}}+\frac{1}{2})(l-\sqrt{l+\frac{1}{4}}+\frac{3}{2})}
{(\sqrt{l+\frac{1}{4}}-\frac{3}{2})(l+\sqrt{l+\frac{1}{4}}-\frac{3}{2})}\nonumber\\
%\end{align}
%\begin{align}
&=(\sqrt{l+\frac{1}{4}}+\frac{1}{2})+(\sqrt{l+\frac{1}{4}}-\frac{3}{2}+
\frac{5l-9\sqrt{l+\frac{1}{4}}+\frac{11}{2}}
{(\sqrt{l+\frac{1}{4}}-\frac{3}{2})(l+\sqrt{l+\frac{1}{4}}-\frac{3}{2})})\nonumber\\
&= 2\sqrt{l+\frac{1}{4}}+\frac{-l\sqrt{l+\frac{1}{4}}+\frac{11}{2}l-6\sqrt{l+\frac{1}{4}}+3}
{(\sqrt{l+\frac{1}{4}}-\frac{3}{2})(l+\sqrt{l+\frac{1}{4}}-\frac{3}{2})}\label{endell},
\end{align}
where (a) follows from the Cauchy-Schwarz inequality; and (b) follows from (\ref{ez}).
For $l\geq 18$ the term $\frac{-l\sqrt{l+\frac{1}{4}}+\frac{11}{2}l-6\sqrt{l+\frac{1}{4}}+3}
{(\sqrt{l+\frac{1}{4}}-\frac{3}{2})(l+\sqrt{l+\frac{1}{4}}-\frac{3}{2})}$ in (\ref{endell}) is less than zero.
Therefore, the statement of Lemma \ref{ell} is true for all $l>1,l\in\mathbb N$.
\end{proof}

\section{Proof of Corollary \ref{run}}\label{corollary}

Let $ \vec\Pi^*$ denote the partition (assignment) chosen by the optimal greedy static scheduling policy $\eta_{\text{g-static}}^*$. 
Therefore, we have $||\vec R( \eta_{\text{g-static}}^*)||_1=C_{\text{{\sf T}}^{\text{3}}}$. 
Furthermore, consider an assignment, denoted by $\vec\Pi_{det}$, which maximizes the objective function in (\ref{relaxed}). 
Let $\eta_{\text{g-static}}^{\text{det}}$ denote the greedy static scheduling policy which corresponds to $\vec\Pi_{det}$.
Further, let $||\vec R_{\text{det}}( \eta_{\text{static}})||_1$ designate the maximum number of objects that can be packed in the RP in (\ref{relaxed}) when a static scheduling policy $ \eta_{\text{static}}$ is implemented. 
Therefore, $||\vec R_{\text{det}}( \eta_{\text{g-static}}^{\text{det}})||_1=C_{\text{det}}$, since 
 $||\vec R_{\text{det}}( \eta_{\text{g-static}}^{\text{det}})||_1$ is the value of the objective function in (\ref{relaxed}) when the assignment is dictated by $\eta_{\text{g-static}}^{\text{det}}$. 
 The right part of the inequality in Corollary \ref{run} in (\ref{yilu}) is trivial since $C_{\text{{\sf T}}^{\text{3}}}$ is the optimal $\text{\sf T}^{\text{3}}$ achievable under any scheduling policy. So we only need to prove the left part of the inequality in (\ref{yilu}).
Using a similar argument as the one in part B of Section \ref{theorem1}, and by applying Cauchy-Schwarz inequality, we get
\begin{equation}\label{aaa}
||\vec R(  \eta_{\text{g-static}}^{\text{det}})||_1\geq ||\vec R_{\text{det}}( \eta_{\text{g-static}}^{\text{det}})||_1-2\sqrt {N(||\vec R_{\text{det}}( \eta_{\text{g-static}}^{\text{det}})||_1+\frac{N}{4})}.
\end{equation}
Now consider the function $g(.)$ defined as follows: $g(x)\triangleq x-2\sqrt{(N(x+\frac{N}{4}))},\quad x\in\mathbb R$.
\\So, $g(x)$ is strictly increasing for $x>\frac{3N}{4}$. On the other hand, we know that 
\begin{equation}\label{aab}
C_{\text{det}}=||\vec R_{\text{det}}( \eta_{\text{g-static}}^{\text{det}})||_1 \geq ||\vec R_{\text{det}}(  \eta_{\text{g-static}}^*)||_1\geq ||\vec R( \eta_{\text{g-static}}^*)||_1-N=C_{\text{{\sf T}}^{\text{3}}}-N,
\end{equation}
where  the right inequality follows from Theorem \ref{maintheorem}. By $g(x)$ being an increasing function of $x$ and (\ref{aab}) we get
\begin{equation}\label{aac}
||\vec R_{\text{det}}( \eta_{\text{g-static}}^{\text{det}})||_1-2\sqrt {N(||\vec R_{\text{det}}(  \eta_{\text{g-static}}^{\text{det}})||_1+\frac{N}{4})}\geq C_{\text{{\sf T}}^{\text{3}}}-N-2\sqrt {N(C_{\text{{\sf T}}^{\text{3}}}-\frac{3N}{4})}.
\end{equation}
Hence, by (\ref{aaa}) and (\ref{aac}) we get
%\begin{equation}
$||\vec R(  \eta_{\text{g-static}}^{\text{det}})||_1\geq C_{\text{{\sf T}}^{\text{3}}}-N-2\sqrt {N(C_{\text{{\sf T}}^{\text{3}}}-\frac{3N}{4})}. $
%\end{equation}

\section{Proof of Theorem \ref{weightedtheorem}}\label{weightedproof}
By the same argument as in proof of Lemma \ref{lemorder}, $C_{w\text{-{\sf T}}^{\text{3}}}$ can be achieved by a static scheduling policy. 
%in which the order of a client ${\sf Rx}_j$ is according to the value of $\omega_jp_{ij}$ when assigned to $\text{\sf AP}_i$.
Therefore, by LLN, to achieve $C_{w\text{-{\sf T}}^{\text{3}}}$, it is sufficient to find the assignment and ordering which provide the highest expected weighted delivery for one interval.
%We start by proving the right part of the inequality in (\ref{weightedresult}).
First, we show that for a given assignment $\vec\Pi=[\mathcal I_1,\mathcal I_2,\ldots ,\mathcal I_N]$ the optimal ordering of the packets of clients assigned to ${\sf AP}_i$ is according to the order of $\omega_j p_{ij}$, $j\in\mathcal I_i$. 
To do so, it is sufficient to prove that for any given order of the clients if we swap two adjacent clients such that the client with higher $\omega_jp_j$ is prioritized higher, then the expected weighted delivery will be no less than before swapping. 
%Then, we follow the same path as in proof of Theorem \ref{maintheorem}.
The following lemma formally states this fact. 
%The proof of the Lemma is omitted due to similarity to proof of Lemma \ref{komaki}; but it can be found on Appendix F of \cite{arxivkhodam}.
\begin{lemma}\label{wkomaki}
Let $\tau,q \in \mathbb N$, and $\omega_1,\omega_2,\ldots ,\omega_q\in \mathbb R$.
Also, for some $d\in\{1,2,\ldots ,q-1\}$, let $\omega'_i=\omega_i$, for $1\leq i<d$ and $d+1<i\leq q$;  and $\omega'_d=\omega_{d+1}$, $\omega'_{d+1}=\omega_d$.
Further, let $G_1, G_2, \ldots, G_q$ be independent geometric random variables with parameters $p_1, p_2, \ldots, p_q$, respectively. Suppose that $\omega_d p_d<\omega_{d+1}p_{d+1}$. In addition, let $G'_1, G'_2, \ldots, G'_q$ be independent geometric random variables, independent of $G_i$'s, with parameters $p_1, p_2, \ldots $, $p_{d-1},p_{d+1},p_d,p_{d+2},\ldots ,p_q$, respectively.
Then, 
\begin{equation}
\sum_{i=1}^{q}\omega_i\Pr(\sum_{j=1}^{i}G_j\leq \tau)\leq\sum_{i=1}^{q}\omega'_i\Pr(\sum_{j=1}^{i}G'_j\leq \tau).\nonumber
\end{equation}
\end{lemma}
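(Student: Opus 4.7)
The plan is to mirror the strategy used in the proof of Lemma \ref{komaki}, while carefully carrying the weights. First, I would observe that in the two sums $\sum_i \omega_i \Pr(\sum_{j=1}^i G_j \leq \tau)$ and $\sum_i \omega'_i \Pr(\sum_{j=1}^i G'_j \leq \tau)$, every term with index $i \notin \{d, d+1\}$ agrees on both sides: for $i < d$ both the weight and the parameter sequence are identical, and for $i > d+1$ we have $\omega_i = \omega'_i$ while the parameter multiset $\{p_1,\ldots,p_i\}$ coincides, so $\sum_{j=1}^i G_j$ and $\sum_{j=1}^i G'_j$ have the same distribution. Moreover, at $i = d+1$ the two partial sums also share a common distribution (same multiset), so only the weights $\omega_{d+1}$ vs.\ $\omega'_{d+1} = \omega_d$ differ at this index. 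Setting $A := \sum_{j=1}^{d-1} G_j$, these observations reduce the claim to the single inequality
\[
\omega_{d+1}\Pr(A + Y \leq \tau) - \omega_d \Pr(A + X \leq \tau) \;\geq\; (\omega_{d+1}-\omega_d)\Pr(A + X + Y \leq \tau),
\]
where $X \sim \textrm{Geom}(p_d)$ and $Y \sim \textrm{Geom}(p_{d+1})$ are independent of each other and of $A$.

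Next I would condition on $A = a$ and write $s := \tau - a$ (the case $s < 1$ being trivial since all three probabilities vanish). Using the inclusion identities $\Pr(Y \leq s) - \Pr(X+Y\leq s) = \Pr(Y \leq s,\, X+Y > s)$ and the analogous identity for $X$, the conditional inequality rearranges to
\[
\omega_{d+1}\Pr(Y \leq s,\, X + Y > s) \;\geq\; \omega_d \Pr(X \leq s,\, X + Y > s).
\]

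The crux is then an explicit evaluation of these two probabilities by convolution:
\[
\Pr(Y \leq s,\, X+Y > s) \;=\; \sum_{k=1}^{s} p_{d+1}(1-p_{d+1})^{k-1}(1-p_d)^{s-k},
\]
and the symmetric formula for $\Pr(X \leq s,\, X+Y > s)$ with $p_d$ and $p_{d+1}$ interchanged. For $p_d \neq p_{d+1}$ this is a geometric series that sums to $\frac{p_{d+1}}{p_{d+1}-p_d}\bigl[(1-p_d)^s - (1-p_{d+1})^s\bigr]$, and likewise the other probability evaluates to $\frac{p_d}{p_d-p_{d+1}}\bigl[(1-p_{d+1})^s - (1-p_d)^s\bigr]$. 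Taking their ratio, all the $s$-dependence cancels and one obtains exactly $p_{d+1}/p_d$, so the inequality is equivalent to the hypothesis $\omega_{d+1} p_{d+1} > \omega_d p_d$.

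The main obstacle I anticipate is the algebraic cleanup of the convolution sums and the treatment of the degenerate boundary cases. For $p_d = p_{d+1}$ both probabilities collapse to $s p(1-p)^{s-1}$ and the inequality reduces to $\omega_{d+1} \geq \omega_d$, which follows from the hypothesis; for $p_d = 0$ the variable $X$ is almost surely infinite, making both $\Pr(X \leq s)$ and $\Pr(X+Y\leq s)$ vanish, so the inequality is immediate. Once these are dispatched, integrating back over $A$ and resumming with the unchanged terms completes the proof.
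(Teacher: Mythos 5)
Your proposal is correct and follows essentially the same route as the paper: cancel all terms with index outside $\{d,d+1\}$, condition on the partial sum of the first $d-1$ variables, and reduce to a two-variable identity showing the relevant difference is a nonnegative common factor times $\omega_{d+1}p_{d+1}-\omega_d p_d$. Your convolution evaluation of $\Pr(Y\leq s,\,X+Y>s)$ is just a repackaging of the paper's closed-form expression for $\Pr(G_d+G_{d+1}\leq t)$, so the two arguments coincide in substance.
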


\begin{proof}
Let $A=\sum_{i=1}^{q}\omega_i\Pr(\sum_{j=1}^{i}G_j\leq \tau)$, and $ B=\sum_{i=1}^{q}\omega'_i\Pr(\sum_{j=1}^{i}G'_j\leq \tau)$. Then,
\begin{align}
B-A&=\sum_{i=1}^{d-1}\omega'_i\Pr(\sum_{j=1}^{i}G'_j\leq \tau)+
\omega'_d\Pr(\sum_{j=1}^{d}G'_j\leq \tau)+\omega'_{d+1}\Pr(\sum_{j=1}^{d+1}G'_j\leq \tau)\nonumber\\
&+\sum_{i=d+2}^{q}\omega'_i\Pr(\sum_{j=1}^{i}G'_j\leq \tau)
-\sum_{i=1}^{d-1}\omega_i\Pr(\sum_{j=1}^{i}G_j\leq \tau)-
\omega_d\Pr(\sum_{j=1}^{d}G_j\leq \tau)\nonumber\\
&-\omega_{d+1}\Pr(\sum_{j=1}^{d+1}G_j\leq \tau)-
\sum_{i=d+2}^{q}\omega_i\Pr(\sum_{j=1}^{i}G_j\leq \tau)\nonumber\\
&=\omega'_d\Pr(\sum_{j=1}^{d}G'_j\leq \tau)+\omega'_{d+1}\Pr(\sum_{j=1}^{d+1}G'_j\leq \tau)-\omega_d\Pr(\sum_{j=1}^{d}G_j\leq \tau)\nonumber\\
&-\omega_{d+1}\Pr(\sum_{j=1}^{d+1}G_j\leq \tau)\nonumber\\
&=\sum_{t=1}^{\tau}\Pr(\sum_{j=1}^{d-1}G'_j=t)[\omega'_d\Pr(G'_d\leq \tau-t)+\omega'_{d+1}\Pr(G'_{d}+G'_{d+1}\leq \tau-t)]\nonumber\\
&-\sum_{t=1}^{\tau}\Pr(\sum_{j=1}^{d-1}G_j=t)[\omega_d\Pr(G_d\leq \tau-t)+\omega_{d+1}\Pr(G_{d}+G_{d+1}\leq \tau-t)]\nonumber
\end{align}
\begin{align}
&=\sum_{t=1}^{\tau}\Pr(\sum_{j=1}^{d-1}G_j=t)[\omega'_d\Pr(G'_d\leq \tau-t)+\omega'_{d+1}\Pr(G'_{d}+G'_{d+1}\leq \tau-t)\nonumber\\
&-\omega_d\Pr(G_d\leq \tau-t)-\omega_{d+1}\Pr(G_{d}+G_{d+1}\leq \tau-t)]\nonumber.
\end{align}
Therefore, it is sufficient to show that for all $t\in\mathbb N$,
\begin{equation}
\omega'_d\Pr(G'_d\leq t)+\omega'_{d+1}\Pr(G'_{d}+G'_{d+1}\leq t)-\omega_d\Pr(G_d\leq t)-\omega_{d+1}\Pr(G_{d}+G_{d+1}\leq t)\geq 0.\nonumber
\end{equation}
Note that 
\begin{itemize}
\item $\omega'_d=\omega_{d+1}$, and $\omega'_{d+1}=\omega_d$.
\item $\Pr(G'_d\leq t)=1-(1-p_{d+1})^t$, and $\Pr(G_d\leq t)=1-(1-p_{d})^t$.
\item $\Pr(G_{d}+G_{d+1}\leq t)=\Pr(G'_{d}+G'_{d+1}\leq t)=1-\frac{p_d(1-p_{d+1})^t-p_{d+1}(1-p_d)^t}{p_d-p_{d+1}}$.
\end{itemize}

Therefore,
\begin{align}
&\omega'_d\Pr(G'_d\leq t)+\omega'_{d+1}\Pr(G'_{d}+G'_{d+1}\leq t)-\omega_d\Pr(G_d\leq t)-\omega_{d+1}\Pr(G_{d}+G_{d+1}\leq t)\\
&=(\omega_{d+1}p_{d+1}-\omega_dp_d)(\frac{(1-p_{d+1})^{t}-(1-p_d)^{t}}{p_d-p_{d+1}})>0,\quad t\in \mathbb N,
\end{align}
where the inequality follows from the assumption that $\omega_{d+1}p_{d+1}-\omega_dp_d>0$.
\end{proof}

\subsection{Proof of $C_{w\text{-{\sf T}}^{\text{3}}}<C_{w\text{-det}}+N\omega_{max}$}
We follow the same line of proof as in Section \ref{theorem1}.
Since  $C_{w\text{-\sf T}^{\text{3}}}$ can be achieved using a static scheduling policy which uses ordering according to $\omega_jp_j$'s, it is sufficient to show that for any static scheduling policy $\eta_{\text{wg-static}}$ which uses its corresponding optimal ordering we have
%\begin{equation}
$w\text{-\sf T}^{\text{3}}(\eta_{\text{wg-static}})<C_{w\text{-det}}+N\omega_{max}.$
%\end{equation}
Suppose an arbitrary static scheduling policy $\eta_{\text{wg-static}}$ with the corresponding partition $\vec\Pi_{\text{wg-static}}=[\mathcal{I}_1,\mathcal{I}_2,\ldots ,\mathcal{I}_N]$, which uses the optimal ordering  is implemented. 
By (\ref{wt3rj}) we know that 
%\begin{equation}
$w\text{-\sf T}^{\text{3}}(\eta_{\text{wg-static}})=\sum_{j=1}^{M}\omega_jR_j(\eta_{\text{wg-static}}).$
%\end{equation}
On the other hand for $j\in[1:M]$, by (\ref{liminf}) we have
%\begin{equation}
$R_j(\eta_{\text{wg-static}})=\limsup_{r\to\infty}\frac{\sum_{k=1}^{r}N_j(k,\eta_{\text{wg-static}})}{r}.$
%\end{equation}
For $i\in [1:N]$ define 
%$Y_i$ denote the random variable indicating the total number of successful deliveries by  $\text {{\sf AP}}_i$ during one interval, when $\eta_{\text{wg-static}}$ is implemented, $i=1,2,\ldots ,N$.
%More precisely,
%\begin{equation}
$Y_i\triangleq \sum_{j\in \mathcal I_i}^{}N_j(1,\eta_{\text{wg-static}})$
%\end{equation}
and $q_i\triangleq|\mathcal I_i|$. 
Denote the enumeration of clients assigned to $\text{{\sf AP}}_i$ by
$\{\mathcal I_i(1),\mathcal I_i(2),\ldots,\mathcal I_i(q_i)\}$,
where the enumeration is according to the optimal ordering for the weighted case.
Since a static scheduling policy is implemented and channels are i.i.d over time, by LLN we have
\begin{equation}
R_{\mathcal I_i(j)}(\eta_{\text{wg-static}})=\limsup_{r\to\infty}\frac{\sum_{k=1}^{r}N_{\mathcal I_i(j)}(k,\eta_{\text{wg-static}})}{r}
=\Pr(Y_i\geq j),\quad 1\leq j\leq q_i,\quad 1\leq i\leq N.\nonumber
\end{equation} 
Therefore, it is easy to see that $w\text{-\sf T}^{\text{3}}(\eta_{\text{wg-static}}) =\sum_{i=1}^{N}\sum_{j=1}^{q_i}(\sum_{k=1}^{j}\omega_{\mathcal I_i(k)})\Pr(Y_i=j)$.
%\begin{eqnarray}%
%w\text{-\sf T}^{\text{3}}(\eta_{\text{wg-static}})&=&\sum_{i=1}^{N}\sum_{j\in\mathcal I_i}^{}\omega_jR_j(\eta_{\text{wg-static}})=\sum_{i=1}^{N}\sum_{j=1}^{q_i}\omega_{\mathcal I_i(j)}\Pr(Y_i\geq j)\nonumber\\
%&=&\sum_{i=1}^{N}\sum_{j=1}^{q_i}(\sum_{k=1}^{j}\omega_{\mathcal I_i(k)})\Pr(Y_i=j).\label{WLLN2}
%\end{eqnarray}
%Let $\text{{\sf Rx}}_{\mathcal A}\triangleq \{\text{{\sf Rx}}_j|j\in \mathcal A\}$, where $\mathcal A\subseteq \{1,2,\ldots ,M\}$ is a set of indices. 
Let $G_{ij}$ be a geometric random variable with parameter $p_{ij}, i\in[1:N], j\in[1:M]$.
Then, for $i\in[1:N]$, $1\leq k\leq q_i$,
%\begin{equation}
$Y_i= \max\quad k\quad s.t. \quad \sum_{j=1}^{k} G_{i\mathcal I_i(j)}\leq \tau, $
%\end{equation}
since $\eta_{\text{wg-static}}$ persistently sends a packet until it is delivered, or the interval is over.
The following lemma, which is the generalized version of Lemma \ref{lemmaup}, relates $l_i$ and $\omega_j$'s to $Y_i$. 
%The proof is similar to proof of Lemma \ref{lemmaup}, and is therefore omitted, but can be found in Appendix C of \cite{arxivkhodam}.

\begin{lemma}\label{weightedlemmaup}
Let $1\leq\omega_1,\omega_2,\ldots ,\omega_q\leq \omega_{max}$ for some $\omega_{max}\in\mathbb R$.
Also, let $\tau \in \mathbb N$ and $G_1, G_2, \ldots, G_q$ be independent geometric random variables with parameters $p_1, p_2, \ldots, p_q$ respectively, such that $\omega_1p_1 \geq \omega_2 p_2 \geq \ldots \geq \omega_qp_q \geq 0$. Also define
%\begin{equation}  
$l\triangleq \max \hat l\quad s.t.\quad \sum_{i=1}^{\hat l}1/p_i\leq \tau,$
%\end{equation}
 and
%\begin{equation}  
$Y\triangleq \max i\quad s.t. \quad \sum_{j=1}^{i} G_j\leq \tau, \quad i\in\{1,2,\ldots ,q\}.$
%\end{equation}
Then, we have
%\begin{equation}
$\sum_{i=1}^{q}(\sum_{j=1}^{i}\omega_j)\Pr(Y=i)<\sum_{j=1}^{l}\omega_j+\omega_{max}.$
%\end{equation}
\end{lemma}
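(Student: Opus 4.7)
Let $\Omega(i) = \sum_{j=1}^i \omega_j$ and $W = \sum_{i=1}^q \Omega(i)\Pr(Y=i)$, so the goal is $W < \Omega(l) + \omega_{\max}$. The plan is to parallel the proof of Lemma \ref{lemmaup}, but to replace its ``WLOG $p_l = p_{l+1} = \cdots = p_\tau$'' reduction (which cannot be applied directly here, since $\omega_j p_j$ being monotone does not imply the $p_j$'s are monotone) by a per-slot weighted reward bound.

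First I would dispose of the base case $l = 0$ exactly as in Lemma \ref{lemmaup}: since every $p_i < 1/\tau$, monotonicity in the $p_i$'s yields $E[Y] \leq 1$, hence $W \leq \omega_{\max} E[Y] \leq \omega_{\max}$, which matches the claim under the convention $\sum_{j=1}^0 \omega_j = 0$.

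For $l \geq 1$ I would reduce WLOG to $q = \tau$ and, setting $X_l = G_1 + \cdots + G_l$, condition on $X_l$ to write $W = \sum_{i=0}^{l-1}\Omega(i)\Pr(Y=i) + \sum_{t=l}^\tau \Pr(X_l = t)\bigl(\Omega(l) + E[\sum_{j=l+1}^{l+Y'}\omega_j \mid X_l = t]\bigr)$, where $Y' = Y - l$ is the number of additional packets served in the remaining $\tau - t$ slots. The key step is a per-slot bound on the residual contribution: in any slot of the residual the {\sf AP} transmits some packet with index $j \geq l+1$, so the conditional expected weight delivered in that slot equals $\omega_j p_j$, which is at most $c := \omega_{l+1} p_{l+1}$ by the weighted ordering; summing over the $\tau - t$ residual slots yields $E[\sum_{j=l+1}^{l+Y'}\omega_j \mid X_l = t] \leq c(\tau - t)$.

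Substituting this into the decomposition and then invoking the same memoryless identity used in Lemma \ref{lemmaup}'s proof (which evaluates $E[(X_l - \tau)^+] = \sum_{i=0}^{l-1}\Pr(Y=i)\sum_{j=i+1}^l 1/p_j$) and collecting terms, I expect to obtain $W \leq \sum_{i=0}^{l-1}\Pr(Y=i)\sum_{j=i+1}^l (c/p_j - \omega_j) + \Omega(l) + c(\tau - \sum_{i=1}^l 1/p_i)$. Two facts then close the argument: the weighted ordering $\omega_j p_j \geq c$ for every $j \leq l$ gives $c/p_j - \omega_j \leq 0$, so the first sum is non-positive; and the definition of $l$ gives $\tau - \sum_{i=1}^l 1/p_i < 1/p_{l+1}$, so $c(\tau - \sum_{i=1}^l 1/p_i) < c/p_{l+1} = \omega_{l+1} \leq \omega_{\max}$. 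Combining yields $W < \Omega(l) + \omega_{\max}$, as desired. The main obstacle will be recognizing the per-slot weighted reward bound as the correct substitute for Lemma \ref{lemmaup}'s equal-parameter reduction; once that is in place, the rest is a direct weighted analogue of the Lemma \ref{lemmaup} computation.
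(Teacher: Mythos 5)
Your proof of the main case is correct and lands on essentially the same algebraic skeleton as the paper's: condition on $X_l$, invoke the memoryless identity for $E[(X_l-\tau)^+]$, and collect terms so that the ordering $\omega_jp_j\geq \omega_{l+1}p_{l+1}$ for $j\leq l$ makes the first sum non-positive while $\tau-\sum_{j=1}^{l}1/p_j<1/p_{l+1}$ yields the additive $\omega_{l+1}\leq\omega_{max}$. The one place you genuinely diverge is the step you identify as the crux. The paper does not abandon the reduction of Lemma \ref{lemmaup}; it adapts it, asserting by monotonicity that the left-hand side only increases if one forces $\omega_lp_l=\omega_{l+1}p_{l+1}=\cdots=\omega_\tau p_\tau$, after which each residual slot contributes expected weighted reward exactly $\omega_lp_l$ and the computation of Lemma \ref{lemmaup} carries over with $c=\omega_lp_l$. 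Your per-slot bound --- in any residual slot some packet $j\geq l+1$ is attempted, so by the tower property the conditional expected weight delivered in that slot is $\omega_jp_j\leq\omega_{l+1}p_{l+1}$ --- reaches the same inequality $E[\sum_{j=l+1}^{l+Y'}\omega_j\mid X_l=t]\leq c(\tau-t)$ with $c=\omega_{l+1}p_{l+1}$ directly, with no coupling argument. This buys you something real: the paper's one-sentence reduction is the most delicate point of its proof (one must check that raising the products for $j>l$ perturbs neither $l$ nor the conditioning), whereas your argument is self-contained and immediately rigorous. The constants come out identically, since $c/p_{l+1}=\omega_{l+1}$ either way.

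The one genuine slip is your base case. For $l=0$ you assert that every $p_i<1/\tau$; under the weighted ordering this does not follow: $l=0$ only gives $p_1<1/\tau$, and a later packet may have a much larger $p_i$ compensated by a small $\omega_i$, so $E[Y]\leq 1$ is not justified and can in fact fail. The repair is your own main tool applied to the whole interval rather than to the residual: every slot's conditional expected weighted reward is at most $\omega_1p_1$, so $W\leq \tau\,\omega_1p_1<\omega_1\leq\omega_{max}$, which is the claim for $l=0$. With that substitution the proposal is complete.
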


\begin{proof}%proof of weightedlemmaup
 Suppose that $l>0$ (for $l=0$ the proof is straightforward).
We have 
\begin{equation}
\sum_{i=1}^{l}\frac{1}{p_i}\leq \tau< \sum_{i=1}^{l+1}\frac{1}{p_i}.
\end{equation}
 Without loss of generality we can omit $p_i$'s that are equal to zero and assume $0 <p_1, p_2, \ldots, p_q\leq 1$. 
Furthermore, according to the same argument as in proof of Theorem 1, it is sufficient to prove the lemma for the case of $q=\tau$.
Let $X_l=\sum_{i=1}^{l} G_i$, where $G_i=Geom(p_i)$.
We have
\begin{eqnarray}
\sum_{i=1}^{\tau}(\sum_{j=1}^{i}\omega_j)\Pr(Y=i)&=&\sum_{j=1}^{l-1}(\sum_{j=1}^{i}\omega_j )\Pr(Y= i)
\nonumber\\      &+&\sum_{t=1}^{\tau}\Pr(X_{l}=t)(\sum_{i=l}^{\tau}(\sum_{j=1}^{i}\omega_j)\Pr(Y=i|X_{l}=t))\nonumber
\end{eqnarray}
However, since $\omega_{max}\geq \omega_l p_l\geq \omega_{l+1}p_{l+1}\geq \ldots\geq \omega_{\tau} p_{\tau}>0$, $\sum_{i=1}^{\tau}(\sum_{j=1}^{i}\omega_j)\Pr(Y=i)$ is less than the case where $\omega_l p_l=\omega_{l+1}p_{l+1}=\ldots=\omega_{\tau}p_{\tau}$. 
With a similar argument as in the proof of Theorem 1 we get
\begin{align}
&\sum_{i=1}^{\tau}(\sum_{j=1}^{i}\omega_j)\Pr(Y=i)\leq \sum_{i=1}^{l-1}(\sum_{j=1}^{i}\omega_j)\Pr(Y=i)
+ \sum_{t=1}^{\tau} (\sum_{j=1}^{l}\omega_j+\omega_l p_l(\tau-t))\Pr(X_{l}=t)\nonumber\\
&= \sum_{i=1}^{l-1}(\sum_{j=1}^{i}\omega_j)\Pr(Y=i)+( \sum_{j=1}^{l}\omega_j+\omega_lp_l\tau)(1-\Pr(X_{l}>\tau))\nonumber\\
&-\omega_lp_l[\sum_{t=1}^{\infty}t\Pr(X_l=t)-\sum_{t=\tau+1}^{\infty}t\Pr(X_l=t)]=\sum_{i=1}^{l-1}(\sum_{j=1}^{i}\omega_j)\Pr(Y=i)+( \sum_{j=1}^{l}\omega_j+\omega_lp_l\tau)\nonumber\\
&-(\sum_{j=1}^{l}\omega_j+\omega_lp_l\tau)\sum_{i=0}^{l-1}\Pr(Y=i)-\omega_lp_l\sum_{i=1}^{l}\frac{1}{p_i}+\omega_lp_l\sum_{t=\tau+1}^{\infty} t\Pr(X_{l}=t)\nonumber
\end{align}
\begin{align}
&=\sum_{i=0}^{l-1}( \sum_{j=1}^{i}\omega_j-\sum_{j=1}^{l}\omega_j-\omega_lp_l\tau)\Pr(Y=i)+(\sum_{j=1}^{l}\omega_j+\omega_lp_l(\tau-\sum_{i=1}^{l}\frac{1}{p_i}))\nonumber\\
&+\omega_lp_l\sum_{t=\tau+1}^{\infty} t\Pr(X_{l}=t)\nonumber\\
&\stackrel{(a)}{<}\sum_{i=0}^{l-1}(\sum_{j=1}^{i}\omega_j-\sum_{j=1}^{l}\omega_j-\omega_lp_l\tau)\Pr(Y=i)+(\sum_{j=1}^{l}\omega_j+\omega_{l+1})+\omega_lp_l\sum_{t=\tau+1}^{\infty} t\Pr(X_{l}=t)\nonumber
\end{align}
\begin{align}
&\stackrel{(b)}{=}\sum_{i=0}^{l-1}(-\sum_{j=i+1}^{l}\omega_j-\omega_lp_l\tau)\Pr(Y=i)+(\sum_{j=1}^{l}\omega_j+\omega_{l+1})+\sum_{i=0}^{l-1} \Pr(Y=i)(\omega_lp_l\tau+\sum_{j=i+1}^{l} \frac{\omega_lp_l}{p_j})\nonumber\\
&=(\sum_{j=1}^{l}\omega_j+\omega_{l+1})+\sum_{i=0}^{l-1}(\sum_{j=i+1}^{l} \frac{\omega_lp_l}{p_j}-\sum_{j=i+1}^{l}\omega_j)\Pr(Y=i)\stackrel{(c)}{\leq} \sum_{j=1}^{l}\omega_j+\omega_{l+1}\leq \sum_{j=1}^{l}\omega_j+\omega_{l+1}.\nonumber
\end{align}
%\footnotetext[1]{The equality follows from (\ref{equality}).}
where (a) follows from $\tau-\sum_{i=1}^{l}\frac{1}{p_i}<\frac{1}{p_{l+1}}$ and $\omega_lp_l=\omega_{l+1}p_{l+1}$; (b) follows from (\ref{equality}); and (c) follows from the fact that $\forall j\in\{i+1,\ldots ,l\}\quad \frac{\omega_lp_l}{p_j}\leq \omega_j.$
\end{proof}

%%%%%%%%%%%%%%%%%%%%%%%
Hence, by Lemma \ref{weightedlemmaup} we have
\begin{align}
w\text{-\sf T}^{\text{3}}(\eta_{\text{wg-static}}) & =\sum_{i=1}^{N}\sum_{j=1}^{q_i}(\sum_{k=1}^{j}\omega_{\mathcal I_i(k)})\Pr(Y_i=j)\stackrel{(a)}{<}\sum_{i=1}^{N}\sum_{j=1}^{l_i}\omega_{\mathcal I_i(j)}+N\omega_{max}. \nonumber\\
&\stackrel{(b)}{\leq}C_{w\text{-det}}+N\omega_{max},\nonumber
\end{align}
where (a) follows from Lemma \ref{weightedlemmaup}; and (b) follows from the fact that  $\sum_{i=1}^{N}\sum_{j=1}^{l_i}\omega_{\mathcal I_i(j)}$ is the value of the objective function in (\ref{weighteddet}) for a feasible solution.
%Hence the proof of the right inequality in Theorem \ref{weightedtheorem} is complete.
%We now prove Lemma \ref{weightedlemmaup}.

\subsection{Proof of $C_{w\text{-det}}-2\omega_{max}\sqrt{N(C_{w\text{-det}}+\frac{N}{4})}<C_{w\text{-{\sf T}}^{\text{3}}}$}
The proof of the lower bound is  similar to the  proof of lower bound in Theorem \ref{maintheorem}. 
Consider the assignment proposed by the solution to  (\ref{weighteddet}), where the clients which have not been assigned to any {\sf AP} for transmission are assigned to {\sf AP}'s arbitrarily. 
Let $\vec\Pi_{\text{wg-static}}^{\text{det}}=[\mathcal{I}_1^{\text{det}},\mathcal{I}_2^{\text{det}},\ldots ,\mathcal{I}_N^{\text{det}}]$ denote the resulting partition, and also let $\eta_{\text{wg-static}}^{\text{det}}$ denote the corresponding static scheduling policy which orders clients based on their channel success probabilities.
Therefore, $w\text{-\sf T}^{\text{3}}(\eta_{\text{wg-static}}^{\text{det}})\leq C_{w\text{-\sf T}^{\text{3}}}.$
So, it is sufficient to prove that $C_{w\text{-det}}-2\omega_{max}\sqrt {N(C_{w\text{-det}}+\frac{N}{4})}<w\text{{-\sf T}}^{\text{3}}(\eta_{\text{wg-static}}^{\text{det}})$.

For $i\in [1:N]$ let 
%$W_i^{\text{det}}$ denote the random variable indicating the total weight of successful deliveries by  $\text {{\sf AP}}_i$ during one interval, when $\eta_{\text{wg-static}}^{\text{det}}$ is implemented, $i=1,2,\ldots ,N$. More precisely,
%\begin{equation}
$W_i\triangleq \sum_{j\in \mathcal I_i^{\text{det}}}^{}\omega_jN_j(1,\eta_{\text{wg-static}}).$
%\end{equation}
Then, by LLN we have
%\begin{equation}
$w\text{-{\sf T}}^{\text{3}}(\eta_{\text{wg-static}}^{\text{det}})=\sum_{i=1}^{N}E[W_i^{\text{det}}].$
%\end{equation}
Therefore, it is sufficient to prove 
%\begin{equation}
$C_{w\text{-det}}-2\omega_{max}\sqrt {N(C_{w\text{-det}}+\frac{N}{4})}<\sum_{i=1}^{N}E[W_i^{\text{det}}].$
%\end{equation}
Define $q_i=|\mathcal I_i^{\text{det}}|$,
and enumerate the clients assigned to $\text{{\sf AP}}_i$ by
$\{\mathcal I_i^{\text{det}}(1),\mathcal I_i^{\text{det}}(2),\ldots ,\mathcal I_i^{\text{det}}(q_i)\}$,
where the enumeration is according to the channel success probabilities of different clients in $\mathcal I_i^{\text{det}}$.
Further, let $G_{ij}$ be a geometric random variable with parameter $p_{ij}, i\in [1:N], j\in [1:M]$.
It is easy to see that for $k\leq q_i,i\in [1:N]$,
%\begin{equation}
$W_i^{\text{det}}= \max\sum_{j=1}^{k}\omega_j   \quad s.t. \quad \sum_{j=1}^{k} G_{i\mathcal I_i^{\text{det}}(j)}\leq \tau, \quad i\in\{1,2,\ldots ,N\},k\leq q_i,$
%\end{equation}
since $\eta_{\text{g-static}}^{\text{det}}$ persistently sends a packet until it is delivered, or the interval is over.
Also define 
%\begin{equation}  
$l_i^{\text{det}}\triangleq \max \quad \hat l\quad s.t.\quad \sum_{j=1}^{\hat l}1/p_{i\mathcal I_i^{\text{det}}(j)}\leq \tau,\quad \hat l\leq q_i.$
%\end{equation}
%Therefore, $l_i^{\text{det}}$ is the maximum number of objects that fit into a bin of capacity $\tau$ when the channels are relaxed and clients in $\mathcal I_i^{\text{det}}$ are assigned to ${\sf AP}_i$. 
%By Lemma \ref{weightedlemmadown}, which is stated shortly after the proof, we have
Then,
\begin{align}
\sum_{i=1}^{N}E[W_i^{\text{det}}]  & \stackrel{(a)}{>}   \sum_{i=1}^{N}\sum_{j=1}^{l_i^{\text{det}}} \omega_{\mathcal I_i^{\text{det}}(j)}-2\omega_{max}\sum_{i=1}^{N}\sqrt{l_i^{\text{det}}+\frac{1}{4}}\nonumber\\
&\stackrel {(b)}{\geq}   \sum_{i=1}^{N}\sum_{j=1}^{l_i^{\text{det}}} \omega_{\mathcal I_i^{\text{det}}(j)}-2\omega_{max}\sqrt {N(\sum_{i=1}^{N}l_i^{\text{det}}+\frac{N}{4})}\nonumber\\
& \geq \sum_{i=1}^{N}\sum_{j=1}^{l_i^{\text{det}}} \omega_{\mathcal I_i^{\text{det}}(j)}  -2 \omega_{max}\sqrt {N(\sum_{i=1}^{N}\sum_{j=1}^{l_i^{\text{det}}} \omega_{\mathcal I_i^{\text{det}}(j)}+\frac{N}{4})}\nonumber\\
&\stackrel{(c)}{=} C_{w\text{-det}}-2\omega_{max}\sqrt {N(C_{w\text{-det}}+\frac{N}{4})},\nonumber
\end{align}
where (a) follows from Lemma \ref{weightedlemmadown}; 
(b) follows from Cauchy-Schwarz inequality;
and (c) follows from $\sum_{i=1}^{N}\sum_{j=1}^{l_i^{\text{det}}} \omega_{\mathcal I_i^{\text{det}}(j)}=C_{w\text{-det}}$.
Hence, the left inequality of Theorem \ref{weightedtheorem} is proved and the proof of Theorem \ref{weightedtheorem} is complete. 
%We now state and prove Lemma \ref{weightedlemmadown}.

\begin{lemma}\label{weightedlemmadown}
Let $1\leq\omega_1,\omega_2,\ldots ,\omega_q\leq \omega_{max}$ for some $\omega_{max}\in\mathbb R$.
Also, let $\tau \in \mathbb N$ and $G_1, G_2, \ldots, G_q$ be independent geometric random variables with parameters $p_1, p_2, \ldots, p_q$ respectively, such that $1\geq p_1 \geq p_2 \geq \ldots \geq p_q \geq 0$. Also define
%\begin{equation}  
$l\triangleq \max \hat l\quad s.t.\quad \sum_{i=1}^{\hat l}1/p_i\leq \tau,$
%\end{equation}
 and
%\begin{equation}  
$Y\triangleq \max i\quad s.t. \quad \sum_{j=1}^{i} G_j\leq \tau, \quad i\in\{1,2,\ldots ,q\}.$
%\end{equation}
 Then, we have
%\begin{equation}
$\sum_{j=1}^{l}\omega_j-2\omega_{max}\sqrt{l+\frac{1}{4}}<\sum_{i=1}^{q}(\sum_{j=1}^{i}\omega_j)\Pr(Y=i).$
%\end{equation}
\end{lemma}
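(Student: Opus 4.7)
My plan is to reduce Lemma \ref{weightedlemmadown} to the unweighted statement already proved as Lemma \ref{lemmadown}, by exploiting the uniform bound $\omega_j \leq \omega_{max}$ for the ``loss'' terms. The first step is a standard summation interchange: writing
\begin{equation}
\sum_{i=1}^{q}\Bigl(\sum_{j=1}^{i}\omega_j\Bigr)\Pr(Y=i) \;=\; \sum_{j=1}^{q}\omega_j \Pr(Y\geq j),\nonumber
\end{equation}
which converts the weighted cumulative reward into a sum of weighted tail probabilities, matching the form $E[Y]=\sum_{i}\Pr(Y\geq i)$ that appeared in the proof of Lemma \ref{lemmadown}.

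Next I would reduce to the case $q=l$. Since $\{Y\geq j\}$ for $j\leq l$ only depends on $G_1,\ldots,G_j$, the probabilities $\Pr(Y\geq j)$ for $j\leq l$ are unchanged if we drop $G_{l+1},\ldots,G_q$, while the extra terms $\omega_j \Pr(Y\geq j)$ for $j>l$ are nonnegative. Hence replacing $q$ by $l$ only decreases the right-hand side, so it suffices to establish the bound when $q=l$.

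Under $q=l$, I would write
\begin{equation}
\sum_{j=1}^{l}\omega_j - \sum_{j=1}^{l}\omega_j \Pr(Y\geq j) \;=\; \sum_{j=1}^{l}\omega_j \Pr(Y< j) \;\leq\; \omega_{max}\sum_{j=1}^{l}\Pr(Y< j),\nonumber
\end{equation}
where the inequality uses $\omega_j\leq\omega_{max}$. The final sum is exactly $l - E[Y]$, which is the quantity bounded by Lemma \ref{lemmadown} via $l-E[Y]<2\sqrt{l+1/4}$. Combining these gives $\sum_{j=1}^{l}\omega_j -\sum_{j=1}^{l}\omega_j \Pr(Y\geq j) < 2\omega_{max}\sqrt{l+\frac{1}{4}}$, which after rearrangement and the summation-interchange identity yields the desired inequality.

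There is no real analytic obstacle here: all the delicate probabilistic work (the Chebyshev estimate and the combinatorial bound via Lemma \ref{ell}) has already been done in Lemma \ref{lemmadown}. The only mild subtlety is making sure the $q\geq l$ reduction is justified without reordering the $G_j$'s, since the hypothesis fixes the order by $p_j$ decreasing rather than by $\omega_j p_j$; the argument above handles this by simply noting that adding further summands to a sum of nonnegative terms can only increase it.
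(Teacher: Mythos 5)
Your proof is correct, and it is a cleaner packaging of the same underlying idea as the paper's. The paper also reduces to $q=l$ and writes the loss as $\sum_{i=1}^{l}\omega_i\Pr\bigl(\sum_{j\leq i}G_j>\tau\bigr)$, but it then carries the weights $\omega_i$ through the entire Chebyshev computation, only replacing $\omega_i$ by $\omega_{max}$ at the last step before invoking Lemma \ref{ell} --- in effect re-deriving the proof of Lemma \ref{lemmadown} inline. You factor out $\omega_{max}$ one step earlier, recognize the remaining sum $\sum_{j=1}^{l}\Pr(Y<j)$ as exactly $l-E[Y]$, and cite Lemma \ref{lemmadown} as a black box. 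This buys a three-line proof with no repeated analysis, and your justification of the $q\geq l$ reduction (the events $\{Y\geq j\}$ for $j\leq l$ depend only on $G_1,\ldots,G_j$, and the discarded terms are nonnegative) is more explicit than the paper's appeal to ``the same argument as in Theorem 1.'' The paper's inlined version yields nothing extra here, since the weights are discarded before the combinatorial bound is applied in either case.
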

%Proof is similar to that of Lemma \ref{lemmadown}, hence, omitted and can be found in Appendix F of \cite{arxivkhodam}.

\begin{proof}
With the same argument as in Theorem 1, it is sufficient to assume $q=l$.
The proof is very similar to the proof of lower bound in Theorem 1:
\begin{align}
&\sum_{i=1}^{l}\omega_i-\sum_{i=1}^{l}(\sum_{j=1}^{i}\omega_j)\Pr(Y=i)=
\sum_{i=1}^{l}\omega_i-\sum_{i=1}^{l}\omega_i(\sum_{j=i}^{l}\Pr(Y=j))\nonumber\\
&=\sum_{i=1}^{l}\omega_i\Pr(G_1+G_2+\ldots G_i>\tau)\nonumber
\end{align}
\begin{align}
&\leq\sum_{i=1}^{l-1}\omega_i\Pr(|\sum_{j=1}^{i} (G_j-\frac{1}{p_j})|>\sum_{j=i+1}^{l}\frac{1}{p_j})+\omega_l\stackrel{(a)}{\leq} \sum_{i=1}^{l-1}\omega_i\min(1,\frac{var(\sum_{j=1}^{i} G_j)}{(\sum_{j=i+1}^{l}\frac{1}{p_j})^2})+\omega_l\nonumber\\
&\leq\sum_{i=1}^{l-1}\omega_i\min(1,\frac{\sum_{j=1}^{i}\frac{1}{p_j^2}}{(\sum_{j=i+1}^{l}\frac{1}{p_j})^2})+\omega_l\nonumber\leq\omega_l+\sum_{i=1}^{l-1}\omega_i\min(1,\frac{i}{(l-i)^2})\nonumber\\
&\leq\omega_{max}(1+\sum_{i=1}^{l-1}\min(1,\frac{i}{(l-i)^2}))\stackrel{(b)}{\leq} 2\omega_{max}\sqrt {l+\frac{1}{4}},\nonumber
\end{align}
where (a) follows from Chebyshev's inequality; and (b) follows from Lemma \ref{ell}.
\end{proof}


\begin{thebibliography}{1}


\bibitem{khodam} S. Lashgari and A.S. Avestimehr, 
``Approximating the timely throughput of heterogeneous wireless networks,"
\emph{In Proc. of IEEE ISIT}, 2012.

\bibitem{Cisco}
``Cisco Visual Networking Index: Forecast and Methodology, 2010-2015,"
\emph{available at www.cisco.com}, June 2011.

\bibitem{QoS} I-H. Hou, V. Borkar, and P.R. Kumar. A theory of QoS for wireless, 
``A theory of QoS for wireless,''
\emph{In Proc. of IEEE INFOCOM,} 2009.

\bibitem{VBR}  I-H. Hou, V. Borkar, and P.R. Kumar, 
``Admission control and scheduling for QoS guarantees for variable-bit-rate applications on wireless channels,''
\emph{In Proc. of ACM MobiHoc,} 2009.

\bibitem{Realtime} I-H. Hou and P.R. Kumar, 
``Scheduling heterogeneous real-time traffic over fading wireless channels,"
\emph{In Proc. of IEEE INFOCOM}, 2010.

\bibitem{Tassiulas} L. Tassiulas and A.  Ephremides,
``Dynamic server allocation to parallel queues with randomly varying connectivity,"
\emph{IEEE Trans. on Information Theory}, Vol. 39, March 1993.


\bibitem{Neely} M. J. Neely,
``Delay Analysis for Max Weight Opportunistic Scheduling in Wireless Systems,"
\emph{IEEE Trans. on Automatic Control}, September 2009.

\bibitem{Bambos} A. Dua, C.W. Chan, N. Bambos, and J. Apostolopoulos, 
``Channel, deadline, and distortion $(CD^2)$ aware scheduling for video streams over wireless,"
\emph{IEEE Trans. on Wireless Communications}, Vol. 9, No. 3, March 2010.

\bibitem{Puri} M. Agarwal and A. Puri, 
``Base station scheduling of requests with fixed deadlines,''
\emph{In Proc. of IEEE INFOCOM}, 487 - 496 Vol.2, 2002.

\bibitem{Renewal} M. J. Neely, 
``Dynamic optimization and learning for renewal systems,''
\emph{In Proc. of ASILOMAR conference on signals, systems, and computers}, November 2010.

\bibitem{Srikant} S. Shakkottai and R. Srikant, 
``Scheduling real-time traffic with deadlines over a wireless channel,''
\emph{Wireless Networks}, Vol. 8 Issue 1, January 2002.

\bibitem{Staticscheduling} I-H. Hou, A. Truong, S. Chakraborty, and P.R. Kumar,
``Optimality of periodwise static priority policies in real-time communications,"
\emph{To appear in Proc. of CDC}, 2011.

\bibitem{Polymatroid} D.D. Yao,
``Dynamic scheduling via polymatroid optimization, performance evaluation of complex systems: techniques and tools, Performance,"
\emph{Springer-Verlag}, 2002.

\bibitem{Shmoys} D. B. Shmoys and E. Tardos,
``An approximation algorithm for the generalized assignment problem,"
\emph{Mathematical Programming}, 62:461–474, 1993.

\bibitem{MKP} C. Chekuri and S. Khanna,
``A PTAS for the Multiple Knapsack Problem,"
\emph{SIAM Journal on Computing}, 2005.

\bibitem{Goemans}  L. K. Fleischer, M. X. Goemans, V. S. Mirrokni  and M. Sviridenko,
``(Almost) Tight Approximation Algorithms for Maximizing General  Assignment Problems,"
\emph{Symposium on Discrete Algorithms (SODA)}, 2006.

\bibitem{Trick} M.A. Trick,
``A Linear relaxation heuristic for the generalized assignment problem,"
\emph{Naval Research Logistics}, 1992.

\bibitem{Benders}  J.F. Benders and J.A.E.E. van Nunen,
``A property of assignment type mixed integer linear programming problems,"
\emph{O.R. Letters}, 2, 47-52, 1982. 

\bibitem{Basicopt}  K Jain,
``A factor 2 approximation algorithm for the generalized Steiner network problem,"
\emph{Combinatorica, Springer}, 2001.

\bibitem{HetReward}  I-Hong Hou and P.R. Kumar,
``Scheduling periodic real-time tasks with heterogeneous reward requirements,"
\emph{Proc. of RTSS}, 2011.




\bibitem{Stanica}  P. Stanica,
``Good Lower and Upper Bounds On Binomial Coefficients,"
\emph{Journal of Inequalities in Pure and Applied Mathematics}, 2001.



\end{thebibliography}
\end{document}